\journal{Physics Letters A}
\author[bris]{Nicholas P. Baskerville}
\affiliation[bris]{organization={School of Mathematics, University of Bristol},
            addressline={Fry Building, Woodland Road}, 
            city={Bristol},
            postcode={BS8 1UG}, 
            country={United Kingdom}}
\author[oxmi]{Jonathan P. Keating}
\author[bris]{Francesco Mezzadri}
\author[bris]{Joseph Najnudel}
\author[huai]{Diego Granziol}
\affiliation[oxmi]{organization={Mathematical Institute, University of Oxford},
            addressline={Andrew Wiles Building, Woodstock Road}, 
            city={Oxford},
            postcode={OX2 6GG}, ,
            country={United Kingdom}}
\affiliation[huai]{organization={Huawei AI Research},
            addressline={1 St Pancras Square}, 
            city={London},
            postcode={ABC DEF}, ,
            country={United Kingdom}}
\title{Universal characteristics of deep neural network loss surfaces from random matrix theory}
\numberwithin{equation}{section}
\let\originalleft\left
\let\originalright\right
\renewcommand{\left}{\mathopen{}\mathclose\bgroup\originalleft}
\renewcommand{\right}{\aftergroup\egroup\originalright}
\newlength{\bibitemsep}
\newlength{\bibparskip}\setlength{\bibparskip}{0pt}
\let\oldthebibliography\thebibliography
\renewcommand\thebibliography[1]{\oldthebibliography{#1}
  \setlength{\parskip}{\bibitemsep}
  \setlength{\itemsep}{\bibparskip}}
\DeclareMathOperator{\Tr}{Tr}
\DeclareMathOperator{\supp}{supp}
\newcommand{\muinf}{\mu_{\infty}}
\renewcommand{\epsilon}{\varepsilon}
\renewcommand{\leq}{\leqslant}
\renewcommand{\geq}{\geqslant}
\renewcommand{\P}{\mathbb{P}}
\newcommand{\E}{\mathbb{E}}
\newcommand{\R}{\mathbb{R}}
\newcommand{\C}{\mathbb{C}}
\newcommand{\N}{\mathbb{N}}
\newcommand{\1}{\mathbbm{1}}
\renewcommand{\supp}{\text{supp}}
\def\Xint#1{\mathchoice
{\XXint\displaystyle\textstyle{#1}}%
{\XXint\textstyle\scriptstyle{#1}}%
{\XXint\scriptstyle\scriptscriptstyle{#1}}%
{\XXint\scriptscriptstyle\scriptscriptstyle{#1}}%
\!\int}
\def\XXint#1#2#3{{\setbox0=\hbox{$#1{#2#3}{\int}$}
\vcenter{\hbox{$#2#3$}}\kern-.5\wd0}}
\def\dashint{\Xint-}
\newcommand{\vertiii}[1]{{\left\vert\kern-0.25ex\left\vert\kern-0.25ex\left\vert #1 
    \right\vert\kern-0.25ex\right\vert\kern-0.25ex\right\vert}}
\renewcommand{\vec}[1]{\bm{#1}}
\theoremstyle{plain} 
\newtheorem{thm}{Theorem}[section]
\newtheorem{lem}[thm]{Lemma}
\newtheorem{cor}[thm]{Corollary}
\newtheorem{prop}[thm]{Proposition}
\newtheorem{assump}[thm]{Assumption}
\newtheorem{rem}[thm]{Remark}
\def\author#1{\par
    {\centering{\authorfont#1}\par\vspace*{0.05in}}
}
\def\titlefont{\fontsize{13}{15}\bfseries\boldmath\selectfont\centering{}}
\def\authorfont{\fontsize{13}{15}}
\def\title#1{
    \thispagestyle{plain}
    \vspace*{-14pt}
    \vskip 79pt
    {\centering{\titlefont #1\par}}%
    \vskip 1em
}
\renewcommand{\j}{\mathfrak{j}}
\newcommand{\B}{\mathscr{B}}
\newcommand{\bX}{\bar{X}}
\newcommand{\TN}{\mathbb{T}_N}
\newcommand{\Vs}{\mathbb{V}_s}
\newcommand{\hQUE}{\widehat{\text{QUE}}}
\newcommand{\sbplain}{\mathsf{s}}
\newcommand{\sbf}{\sbplain(b)}
\newcommand{\rmes}{\texttt{r}}
\newcommand{\lmes}{\texttt{l}}
\newcommand{\barlam}{\bar{\lambda}^{(i, e)}_b}
\newcommand{\barlamnoe}{\bar{\lambda}^{(i)}_b}
\newcommand{\lam}{\lambda^{(i, j, e)}_b}
\begin{document}
\begin{abstract}
    This paper considers several aspects of random matrix universality in deep neural networks.
    Motivated by recent experimental work, we use universal properties of random matrices related to local statistics to derive practical implications for deep neural networks based on a realistic model of their Hessians.
    In particular we derive universal aspects of outliers in the spectra of deep neural networks and demonstrate the important role of random matrix local laws in popular pre-conditioning gradient descent algorithms.
    We also present insights into deep neural network loss surfaces from quite general arguments based on tools from statistical physics and random matrix theory.
\end{abstract}
\maketitle

\noindent
 
 
 


\section{Introduction}
    The success of large deep neural networks (DNNs) optimised in surprisingly na\"{i}ve ways, such as stochastic gradient descent, has spawned a considerable amount of interest in characterising their loss surfaces.
    The \emph{loss} of a neural network is simply a real valued function that measures the network's performance on a task with respect to some reference data set.
    Neural network are defined in terms of parameters, or \emph{weights}, of which there are typically a very large number.
    Fixing the data set, or even the data generating distribution, a neural network's loss can be seen as a surface in high dimensions parametrised by the network's weights.
    
    \medskip 
    It was first observed\footnote{Note that early connections between neural networks and spin systems were presented in \cite{gardner1988space}.} in a sequence of experimental and theoretical papers \cite{choromanska2015loss, sagun2014explorations} that the loss surfaces of DNNs can be connected with spherical multi-spin glasses from statistical physics \cite{mezard1987spin}.
    Mathematically, multi-spin glasses are just certain random multivariate polynomials, or equivalently Gaussian processes with specific covariance functions.
    These works were very influential, providing the first steps towards explaining why DNNs can be trained at all, given the apparent intractability of their optimisation from the viewpoint of classical optimisation theory.
    The key insight from the spin glass connection was that DNN loss surfaces are indeed very complicated and filled with many local optima, but that, in the high-dimensional limit, the local optima are arranged favourably, so that simple gradient based optimisation methods can be expected to converge to local optima close to the global minimum.
    This work has been extended in several directions.
    \cite{baskerville2021loss} extended the scope of the results to more general neural networks, and in so doing uncovered some of the limits of the spin glass model's explanatory power.
    Departing from the direct connection between glassy systems and neural networks, several authors have used similar high-dimensional random models as toy models for high-dimensional optimisation.
    Among these works are several that focus on the existence of spurious minima and the problem of recovering signals from high dimensional signal-in-noise models \cite{ros2019complex, maillard2019landscape, mannelli2019afraid}.
    Similarly, using spin glasses as surrogates for the complex, high-dimensional, random loss surfaces of DNNs, \cite{baskerville2022spin} extended the spin glass analysis to study the non-standard loss surfaces of generative adversarial networks \cite{goodfellow2014generative}.
    
    \medskip
    A significant feature common amongst all the work mentioned so far is random matrix theory \cite{mehta2004random,li2018visualizing,anderson2010introduction}.
    When studying the number and configuration of local optima of high dimensional random functions, Kac-Rice formulae are an essential mathematical tool, and random matrices arise as Hessians in these formulae.
    For spin glasses, this path was first trodden in the theoretical physics literature by Fyodorov \cite{fyodorov2004complexity, fyodorov2005counting} and made rigorous in later work \cite{auffinger2013random}.
    Specifically in the case of \cite{choromanska2015loss} and \cite{auffinger2013random}, which lays its mathematical foundation, very detailed calculations can be completed.
    This is possible because the random matrices that appear belong to the Gaussian Orthogonal Ensemble (GOE), one of the canonical ensembles of random matrix theory.
    Almost anything one could want to know of the GOE is known and many powerful tools of random matrix theory can be applied to it.
    In particular, the full joint distribution of its eigenvalues and large deviations results for its eigenvalues and its spectral density, all of which are required by \cite{auffinger2013random, choromanska2015loss}.
    Works such as \cite{baskerville2021loss, baskerville2022spin} show how detailed calculations can be completed beyond the standard spin glass case, however these results still depend on important properties of the GOE, as the Hessians in those cases are closely related to the GOE.
    In a recent work, \cite{granziol2020learning} showed how valuable practical insights about DNN optimisation can be obtained by considering the outliers in the spectrum on the loss surface Hessian.
    Once again, this work relies on special properties from random matrix theory, namely the Wigner semicircle law for Wigner matrices and outlier results for rotationally invariant matrices \cite{benaych2011eigenvalues}.
    
    \medskip
    Challenging the above-mentioned works, an experimental line of work has demonstrated convincingly that special RMT ensembles like the GOE do not appear to be present in DNNs \cite{papyan2018full, granziol2020beyond, baskerville2022appearance}, for example as their Hessians.
    In addition, there have been challenges in the literature to the practical relevance of spin glass loss surface results for DNNs \cite{baity2019comparing}.
    In this context, it is natural to question the relevance to DNNs of many of the results discussed above.
    Indeed, does random matrix theory actually provide insight into real DNNs, or are its powerful tools merely applicable to toy models too divorced from real DNNs to be of any value?
    Random matrix theory itself may hold the answer to this question, in particular its notion of \emph{universality}.
    Broadly speaking, universality refers to the phenomenon that certain properties of special random matrix ensembles (such as the GOE) remain true for more general random matrices that share some key feature with the special ensembles.
    For example, the Wigner semicircle is the limiting spectral density of the Gaussian Wigner ensembles, i.e. matrices with Gaussian entries, independent up two symmetry (symmetric real matrices, Hermitian complex matrices) \cite{mehta2004random}.
    The Gaussian case is the simplest to prove, and there are various powerful tools not available in the non-Gaussian case, however the Wigner semicircle has been established as the limiting spectral density for Wigner matrices with quite general distributions on their entries \cite{anderson2010introduction,tao2012topics}.
    While surprisingly general is some sense, the Wigner semicircle relies on independence (up to symmetry) of matrix entries, a condition which is not typically satisfied in real systems.
    The limiting form of the spectral density of a random matrix ensemble is a \emph{macroscopic} property, i.e. the matrix is normalised such that the average distance between adjacent eigenvalues is on the order of $1/\sqrt{N}$, where $N$ is the matrix size.
    At the opposite end of the scale is the \emph{microscopic}, where the normalisation is such that eigenvalues are spaced on a scale of order $1$; at this scale, random matrices display a remarkable universality.
    For example, any real symmetric matrix has a set of orthonormal eigenvectors and so the set of all real symmetric matrices is closed under conjugation by orthogonal matrices.
    Wigner conjectured that certain properties of GOE matrices hold for very general random matrices that share the same (orthogonal) symmetry class, namely symmetric random matrices (the same is true of Hermitian random matrices and the unitary symmetry class).
    The spacings between adjacent eigenvalues should follow a certain explicit distribution, the Wigner surmise, and the eigenvectors should be \emph{delocalised}, i.e. the entries should all be of the same order as the matrix size grows.
    Both of these properties are true for the GOE and can be proved straightforwardly with quite elementary techniques.
    These properties and more have been proved in a series of works over the last decade or-so, of which a good review is \cite{erdos2017dynamical}.
    Crucial in these results is the notion of a \emph{local law} for random matrices.
    The technical statement of local laws is given later in the paper, but roughly they assert that the spectrum of a random matrix is, with very high probability, close to the deterministic spectrum defined by its limiting spectral density (e.g. the semicircle law for Wigner matrices).
    Techniques vary by ensemble, but generally a local law for a random matrix ensemble provides the control required to demonstrate that certain matrix statistics are essentially invariant under the evolution of the Dyson Brownian motion.
    In the case of real symmetric matrices, the Dyson Brownian motion converges in finite time to the GOE, hence the statistics preserved under the Dyson Brownian motion must match the GOE.
    The $n$-point correlation functions of eigenvalues are one such preserved quantity, from which follows, amongst other properties, the Wigner surmise on adjacent spacings.
    
    \medskip
    At the macroscopic scale, there are results relevant to neural networks, for example \cite{pennington2018emergence,pastur2020random} consider random neural networks with Gaussian weights and establish results that are generalised to arbitrary distributions with optimal conditions, so demonstrating universality.
    On the microscopic scale, \cite{baskerville2022appearance} provided the first experimental demonstration of the presence of universal local random matrix statistics deep neural networks, specifically in the Hessians and Gauss-Newton matrices of their loss surfaces.
    This work has recently been extended to the weight matrices of neural networks \cite{thamm2022random}.
    
    \medskip
    This paper explores the consequences of random matrix universality in deep neural networks.
    Our main mathematical result is a significant generalisation of the Hessian spectral outlier result recently presented by \cite{granziol2020learning}.
    This generalisation removes any need for GOE or Wigner forms of the Hessian and instead leverages much more universal properties of the eigenvectors and eigenvalues of random matrices which we argue are quite likely to hold for real networks.
    Our results make concrete predictions about the outliers of DNN Hessians which we compare with experiments on several real-world DNNs.
    These experiments provide indirect evidence of the presence of universal random matrix statistics in the Hessians of large DNNs, which is noteworthy as certainly these DNNs are far too large to permit exact eigendecomposition of their Hessians as in \cite{baskerville2022appearance}.
    Along a similar line, we show how local random matrix laws in DNNs can dramatically simplify the dynamics of certain gradient descent optimisation algorithms and may be in part responsible for their success.
    Finally, we highlight another aspect of random matrix universality relevant to DNN loss surfaces.
    Recent work \cite{arous2021exponential} has shown that the so-called `self averaging' property of random matrix determinants is very much more universal than previously thought.
    The self-averaging of random matrix determinants has been used in the spin glass literature both rigorously and non-rigorously (e.g. \cite{fyodorov2004complexity,fyodorov2005counting,auffinger2013random,baskerville2021loss,baskerville2022spin} inter alia) and is the key property that produces the exponentially large/small number of local optima repeatedly observed.
    We argue that insights into the geometry of DNN loss surfaces can be conjectured from quite general assumptions about the Hessian and gradient noise and from the general self-averaging effect of random matrix determinants.
    
    \medskip
    The paper is structured as follows. Section \ref{sec:hess_model} introduces our random matrix theory model for DNN Hessians, derives results for their outliers and compares with experimental results. Section \ref{sec:que} presents the proof of our main result for addition of random matrices, combining quantum unique ergodicity and the supersymmetric method. Section \ref{sec:determinants} proves an extension of the random matrix determinant results of \cite{arous2021exponential} and presents insights into DNN loss surfaces from complexity calculations and random matrix determinants. Section \ref{sec:precond} describes the role of random matrix local laws in certain popular DNN optimisation algorithms. Section \ref{sec:concl} concludes the paper.

    \paragraph{Notation} We adopt the following conventions throughout
    \begin{itemize}
        \item For a probability measure $\mu$, $g_{\mu}$ is its Stieljtes transform.
        \item $\mu\boxplus\nu$ denotes the free additive convolution between probability measures $\mu$ and  $\nu$.
        \item Hats denote empirical quantities unless stated otherwise. For example $\hat{\mu}_X$ is the empirical spectral measure of a matrix $X$.
        \item $\rmes(\mu), \lmes(\mu)$ denote the right and left edges of the support of a probability measure $\mu$ respectively.
    \end{itemize}
\section{General random matrix model for loss surface Hessians}\label{sec:hess_model}

\subsection{The model}\label{subsec:hess_model}
Given a loss function $\mathcal{L}: \mathscr{Y}\times \mathscr{Y}\rightarrow\R$, a data generating distribution $\P_{\text{data}}$ supported on $\mathscr{X}\times \mathscr{Y}$ and a neural network $f_{\vec{w}}: \mathscr{X}\rightarrow\mathscr{Y}$ parametrised by $\vec{w}\in\mathbb{R}^N$, its batch Hessian is given by
\begin{align}\label{eq:batch_hessian_def}
    H_{\text{batch}} = \frac{1}{b}\sum_{i=1}^b \frac{\partial^2}{\partial \vec{w}^2} \mathcal{L}(f_{\vec{w}}(\vec{x_i}), y_i), ~~ (\vec{x}_i, y_i)\overset{\text{i.i.d.}}{\sim}\mathbb{P}_{\text{data}}
\end{align}
and its true Hessian is given by
\begin{align}
    H_{\text{true}} = \E_{(\vec{x}, y)\sim\mathbb{P}_{\text{data}}} \frac{\partial^2}{\partial \vec{w}^2}  \mathcal{L}(f_{\vec{w}}(\vec{x}), y).
\end{align}
Both  $H_{\text{batch}}$ and $H_{\text{true}}$ are $N\times N$ matrix functions of $\vec{w}$; $H_{\text{batch}}$ is random but $H_{\text{true}}$ is deterministic. Only in very specific cases and under strong simplifying assumptions can one hope to obtain the distribution of $H_{\text{batch}}$ or the value of $H_{\text{true}}$ from $\mathcal{L}, \P_{\text{data}}$ and $f_{\vec{w}}$. Inspired by the success of many random matrix theory applications, e.g. in Physics, we will instead seek to capture the essential features of deep neural network Hessians in a sufficiently general random matrix model.

\medskip
We introduce the following objects:
\begin{itemize}
    \item A sequence (in $N$) of random real symmetric $N\times N$ matrices $X$. $X$ possesses a limiting spectral probability measure $\mu$, i.e. if $\lambda_1,\ldots,\lambda_N$ are the eigenvalues of $X$ then
    \begin{align}
        \frac{1}{N}\sum_{i=1}^N \delta_{\lambda_i} \rightarrow \mu
    \end{align}
    weakly almost surely. We further assume that $\mu$ has compact support and admits and smooth density with respect to Lebesgue measure.
    
    \item A sequence (in $N$) of deterministic real symmetric $N\times N$ matrices $A$ with eigenvalues
    \begin{align}
        \theta_1,\ldots,\theta_p,\xi_1,\ldots \xi_{N-p-q},\theta_1',\ldots,\theta_q'
    \end{align}
    for fixed integers $p, q$. We assume the existence of limiting measure $\nu$ such that, weakly, \begin{align}
        \frac{1}{N-p-q}\sum_{i=1}^{N-p-q} \delta_{\xi_i} \rightarrow \nu
    \end{align}
    where $\nu$ is a compactly supported probability measure. The remaining eigenvalues satisfy 
    \begin{align}
        \theta_1 >\ldots > \theta_p > \rmes(\nu), ~ \theta_1' < \ldots < \theta_q' < \lmes(\nu).
    \end{align}
    $\nu$ is also assumed to be of the form $\nu = \epsilon\eta + (1-\eta)\delta_0$ where $\eta$ is a compactly supported probability measure which admits a density with respect to Lebesgue measure.
    \item A decreasing function $\sbplain: \N \rightarrow (0, 1)$.
\end{itemize}
With these definitions, we construct the following model for the Hessian:
\begin{align}\label{eq:hessian_def}
    H_{\text{batch}}\equiv H = \sbf X + A
\end{align}
where $b$ is the batch size. We have dropped the subscript on $H_{\text{batch}}$ for brevity. Note that $H$ takes the place of the batch Hessian and $A$ taken the place of the true Hessian. $\sbf X$ takes the place of the random noise introduced by sampling a finite batch at which to evaluate the Hessian. $\sbf$ is an overall scaling induced in $X$ by the batch-wise averaging.

\medskip
This model is almost completely general. Note that we allow the distribution of $X$ and the value of $A$ to depend on the position in weight space $\vec{w}$. The only restrictions imposed by the model are
\begin{enumerate}
    \item the existence of $\nu$;
    \item the position of $\theta_i, \theta_j'$ relative to the support of $\nu$;
    \item $\nu$ may only possess an atom at $0$;
    \item the fixed number of  $\theta_i, \theta_j'$;
    \item the existence of $\mu$;
    \item the existence of the scaling $\sbf$ in batch size.
\end{enumerate}
All of the above restrictions are discussed later in the section. Finally, we must introduce some properties of the noise model $X$ in order to make any progress. We introduce the assumption that the eigenvectors of $X$ obey \emph{quantum unique ergodicity} (QUE). The precise meaning of this assumption and a thorough justification and motivation is given later in this section. For now it suffices to say that QUE roughly means that the eigenvectors of $X$ are \emph{delocalised} or that they behave roughly like the rows (or columns) of a uniform random $N\times N$ orthogonal matrix (i.e. a matrix with Haar measure). QUE is known to hold for standard ensembles in random matrix theory, such as quite general Wigner matrices, Wishart matrices, adjacency matrices of certain random graphs etc. Moreover, as discussed further section \ref{subsec:que_just} below, QUE can be thought of as a property of quite general random matrix models.

\subsection{Quantum unique ergodicity}
It is well known that the eigenvectors of quite general random matrices display a universal property of \emph{delocalisation}, namely
\begin{align}
    |u_k|^2 \sim \frac{1}{N}
\end{align}
for any component $u_k$ of an eigenvector $\vec{u}$.
Universal delocalisation was conjectured by Wigner along with the Wigner surmise for adjacent eigenvalue spacing.
Both of these properties, and the more phenomenon of universal correlation functions on the microscopic scale have since been rigorously established for quite a variety of matrix models e.g. \cite{erdos2017dynamical, erdHos2012universality,erdHos2019random}.
\cite{bourgade2017eigenvector} show that the eigenvectors of generalised Wigner matrices obey \emph{Quantum unique ergodicity}, a particular form of delocalisiation, stronger than the above statement. Specifically, they are shown to be approximately Gaussian in the following sense. (\cite{bourgade2017eigenvector} Theorem 1.2)
\begin{align}\label{eq:que_def}
    \sup_{||\vec{q}|| = 1}\sup_{\substack{I\subset [N],\\ |I| = n}} \left|\E P\left(\left(N|\vec{q}^T\vec{u}_k|^2\right)_{k\in I}\right) - \E P\left(\left(|\mathcal{N}_j|^2\right)_{j=1}^m
    \right)\right| \leq N^{-\epsilon}
\end{align}
for large enough $N$, where $\mathcal{N}_j$ are i.i.d. standard normal random variables, $(\vec{u}_k)_{k=1}^N$ are the normalised eigenvectors, $P$ is any polynomial in $n$ variables and $\epsilon > 0$.

\subsection{Batch Hessian outliers}
Let $\{\lambda_i\}$ be the eigenvalue of $H$. To set the context of our results, let us first simplify and suppose momentarily that $\sbplain = 1$ and, instead of mere QUE, $X$ has eigenvectors distributed with Haar measure, and $A$ is fixed rank, i.e. $\xi_i=0 ~\forall i$, then the results of \cite{benaych2011eigenvalues} would apply and give \begin{align}
    \lambda_j \overset{a.s.}{\rightarrow} \begin{cases}
                    g_{\mu}^{-1}(1/\theta_j) ~~&\text{if } \theta_j > 1/g_{\mu}(\rmes(\mu)),\\
                    \rmes(\mu) &\text{otherwise},
                \end{cases}\label{eq:bgn_outlier_1}\\
    \lambda_{N-j+1} \overset{a.s.}{\rightarrow} \begin{cases}
                    g_{\mu}^{-1}(1/\theta_j') ~~&\text{if } \theta_j' < 1/g_{\mu}(\lmes(\mu)),\\
                    \lmes(\mu) &\text{otherwise},
                \end{cases}\label{eq:bgn_outlier_2}                
\end{align}

\paragraph{\textbf{An interlude on prior outlier results}} {\em It was conjectured in \cite{benaych2011eigenvalues} that (\ref{eq:bgn_outlier_1})-(\ref{eq:bgn_outlier_2}) still hold when $X$ has delocalised eigenvectors in some sense, rather than strictly Haar. Indeed, a careful consideration of the proof in that work does reveal that something weaker than Haar would suffice, for example QUE. See in particular the proof of the critical Lemma 9.2 therein which can clearly be repeated using QUE. There is a considerable subtlety here, however, which is revealed best by considering more recent results on deformations of general Wigner matrices. \cite{knowles2017anisotropic} shows that very general deterministic deformations of general Wigner matrices possess an optimal anisotropic local law, i.e. $Y + B$ for Wigner $Y$ and deterministic symmetric $B$. It is expected therefore that $Y+B$ has delocalised eigenvectors in the bulk. Consider the case where $B$ is diagonal, and say that $B$ has a fixed number of ``spike'' eigenvalues $\varphi_1>\ldots>\varphi_r$ and remaining eigenvalues $\zeta_1,\ldots, \zeta_{N-r}$ where the empirical measure of the $\zeta_i$ converges to some measure $\tau$ and $\varphi_r > \rmes(\tau)$. We can then split $B = B_{i} + B_o$ where $B_i$ contains only the $\zeta_j$ and $B_o$ only the $\varphi_j$. The previously mentioned results applies to $Y + B_i$ and then we might expect the generalised result of \cite{benaych2011eigenvalues} to apply to give outliers $g_{\mu_{SC}\boxplus \tau}^{-1}(1/\varphi_i)$ of $Y + B$. This contradicts, however, another result concerning precisely the the outliers of such generally deformed Wigner matrices. It was shown in \cite{capitaine2016spectrum} that the outliers of $Y+B$ are $\omega^{-1}(\varphi_j)$ where $\omega$ is the subordination function such that $g_{\mu_{SC} \boxplus \tau}(z) = g_{\tau}(\omega(z))$. These two expressions coincide when}
\begin{align}
    &\omega^{-1}(z) = g_{\mu_{SC}\boxplus\tau}^{-1}(z^{-1})\notag\\
    \iff &\omega^{-1}(z) = \omega^{-1}(g_{\tau}^{-1}(z^{-1}))\notag\\
    \iff &g_{\tau}^{-1}(z^{-1}) = z\notag\\
    \iff &g_{\tau}(z) = z^{-1}\notag\\
    \iff &\tau = \delta_0,
\end{align}
{\em i.e. only when $B$ is in fact of negligible rank as $N\rightarrow\infty$. This apparent contradiction is resolved by the observation that the proof in \cite{benaych2011eigenvalues} in fact relies implicitly on an \emph{isotropic local law}. Note in particular section 4.1, which translated to our context, would require $\vec{v}^TG_{Y + B_i}(z)\vec{v}\approx g_{\mu_{SC}\boxplus\tau}(z)$ with high probability for general unit vectors $\vec{v}$. Such a result holds if and only if $Y + B_i$ obeys an isotropic local law and is violated if its local law is instead anistropic, as indeed it is, thanks to the deformation.}

\medskip
Returning to our main thread, we have satisfied the conditions on $X$ required to invoke Theorem \ref{thm:nearly_free_addn} and so conclude that
\begin{align}
    \hat{g}_{H}(z) = g_{\mu_b \boxplus \nu}(z) + o(1) = g_{\nu}(\omega(z)) + o(1) = \hat{g}_{A}(\omega(z)) + o(1)
\end{align}
where $\omega$ is the subordination function such that $g_{\mu_b\boxplus\nu}(z) = g_{\nu}(\omega(z))$ and $\mu_b$ is the limiting spectral measure of $\sbf X$. The reasoning found in \cite{capitaine2016spectrum} then applies regarding the outliers of $H$. Indeed, suppose that $\lambda$ is an outlier of $H$, i.e. $\lambda$ is an eigenvalue of $H$ contained in $\R \backslash \text{supp}(\mu\boxplus\nu)$. Necessarily $\hat{g}_H$ possesses a singularity at $\lambda$, and so $\hat{g}_A$ must have a singularity at $\omega(\lambda)$. For this singularity to persist for all $N$,ppp $\omega(\lambda)$ must coincide with one of the outliers of $A$ which, unlike the bulk eigenvalues $\xi_j$, remain fixed for all $N$. Therefore we have the following expressions for the outliers of $H$:
\begin{align}\label{eq:outlier_sets}
    \{\omega^{-1}(\theta_j) \mid \omega^{-1}(\theta_j)\in \R \backslash\text{supp}(\mu_b\boxplus\nu)\}\cup \{\omega^{-1}(\theta_j') \mid \omega^{-1}(\theta_j')\in \R \backslash\text{supp}(\mu_b\boxplus\nu)\}.
\end{align}

We now consider $\epsilon$ to be small and analyse these outlier locations as a perturbation in $\epsilon$. Firstly note that 
\begin{align}
    g_{\mu_b}(z) = \int \frac{d\mu_b(x)}{z - x} = \int \frac{d\mu(x/\sbf)}{z - x} = \sbf\int \frac{d\mu(x)}{z - \sbf x} = g_{\mu}(z/\sbf).
\end{align}
Also
\begin{align}
    \omega^{-1}(z) &= g_{\mu_b \boxplus \nu}^{-1}(g_{\nu}(z))\label{eq:omega_inv_def}\\
    &= R_{\mu_b}(g_{\nu}(z)) + g_{\nu}^{-1}(g_{\nu}(z))\notag\\
    &= R_{\mu_b}(g_{\nu}(z)) + z.\label{eq:omega_prog1}
\end{align}
We now must take care in computing $R_{\mu_b}$ from $g_{\mu_b}$.
Recall that the $R$-transform of a measure is defined as a formal power series \cite{anderson2010introduction}
\begin{align}
   R(z) = \sum_{n=0}^{\infty} k_{n+1} z^n
\end{align}
where $k_n$ is the $n$-th cumulant of the measure.
It is known \cite{anderson2010introduction} that $k_n=C_n$ where the functional inverse of the Stieljtes transform of the measure is given by the formal power series
\begin{align}
    K(z) = \frac{1}{z} + \sum_{n=1} C_n z^{n-1}.
\end{align}
Now let $m_n$ be the $n$-th moment of $\mu$ and similarly let $m_n^{(b)}$ be the $n$-th moment of $\mu_b$, so formally
\begin{align}
    g_{\mu}(z) = \sum_{n\geq 0} m_n z^{-(n+1)}, ~~ g_{\mu_b}(z) = \sum_{n\geq 0} m_n^{(b)} z^{-(n+1)}.
\end{align}
Also let $k_n$ be the $n$-th cumulant of $\mu$ and $k_n^{(b)}$ be the $n$-th cumulant of $\mu_b$.
Referring to the proof of Lemma 5.3.24 in \cite{anderson2010introduction} we find \begin{align}
    m_n &= \sum_{r=1}^n \sum_{\substack{0\leq i_1,\ldots, i_r\leq n-r \\ i_1+\ldots + i_r = n-r}} k_r m_{i_1}\ldots m_{i_r},\\
    m_n^{(b)} &= \sum_{r=1}^n \sum_{\substack{0\leq i_1,\ldots, i_r\leq n-r \\ i_1+\ldots + i_r = n-r}} k_r^{(b)} m_{i_1}^{(b)}\ldots m_{i_r}^{(b)}.
\end{align}
But clearly the moments of $\mu_b$ have a simple scaling in $\sbf$, namely $m_n^{(b)} = \sbf^{n} m_n$, hence
\begin{align}
    m_n = \sbf^{-n}\sum_{r=1}^n \sum_{\substack{0\leq i_1,\ldots, i_r\leq n-r \\ i_1+\ldots + i_r = n-r}} k_r^{(b)} m_{i_1}\ldots m_{i_r} \sbf^{n-r}
\end{align}
from which we deduce $k_n^{(b)} = \sbf^n k_n$, which establishes that $R_{\mu_b}(z) = \sbf R_{\mu}(\sbf z)$.
Recalling (\ref{eq:omega_prog1}) we find
\begin{align}
    \omega^{-1}(z) = \sbf R_{\mu} (\sbf g_{\nu}(z)) + z.
\end{align}
The form of $\nu$ gives \begin{align}
    g_{\nu}(z) = \frac{1-\epsilon}{z} + \epsilon g_{\eta}(z) = \frac{1}{z} + \epsilon\left(g_{\eta}(z) - \frac{1}{z}\right)
\end{align}
and so we can expand to give 
\begin{align}
    \omega^{-1}(\theta_j) &= \theta_j + \sbf R_{\mu}(\sbf\theta_j^{-1}) + \epsilon \sbf^2 \left(g_{\eta}(\theta_j) - \theta_j^{-1}\right)R_{\mu}'(\sbf\theta_j^{-1}) + \mathcal{O}(\epsilon^2)\notag\\
    &= \theta_j + \sbf R_{\mu}(\sbf\theta_j^{-1}) + \epsilon \sbf^2 d_{\eta}(\theta_j)R_{\mu}'(\sbf\theta_j^{-1}) + \mathcal{O}(\epsilon^2)\label{eq:omega_inv_used_form}
\end{align}
where we have defined $d_{\eta}(z) = g_{\eta}(\theta_j) - \theta_j^{-1}$.
Note that (\ref{eq:omega_inv_used_form}) reduces to outliers of the form $\theta_j + \sbf^2 R_{\mu}(\theta_j^{-1})$ if $\epsilon=0$ or $d_{\eta} = 0$, as expected from \cite{benaych2011eigenvalues}\footnote{Note that $d_{\eta}=0 \iff \eta = \delta_0$ which is clearly equivalent (in terms of $\nu$) to $\epsilon=0$.}.


\medskip
The problem of determining the support of $\mu_b\boxplus\nu$ is difficult and almost certainly analytically intractable, with \cite{bao2020support} containing the most advanced results in that direction.
However overall, we have a model for deep neural network Hessians with a spectrum consisting, with high-probability, of a compactly supported bulk $\mu_b\boxplus\nu$ and a set of outliers given by (\ref{eq:omega_inv_used_form}) (and similarly for $\theta_j'$) subject to (\ref{eq:outlier_sets}).

\medskip
(\ref{eq:omega_inv_used_form}) is a generalised form of the result used in \cite{granziol2020learning}. We have the power series
\begin{align}
    R_{\mu}(\sbf \theta_j^{-1}) &= k_1^{(\mu)} + \frac{k_2^{(\mu)}\sbf}{\theta_j} + \frac{k_3^{(\mu)}\sbf^2}{\theta_j^2} + \ldots,\\
    d_{\eta}(\theta_j) &= \frac{m_1^{(\eta)}}{\theta_j^2} + \frac{m_2^{(\eta)}}{\theta_j^3} + \ldots
\end{align}
where $m_n^{(\eta)}$ are the moments of $\eta$ and $k_n^{(\mu)}$ are the cumulants of $\mu$. In the case that the spikes $\theta_j$ are large enough, we approximate by truncating these power series to give 
\begin{align}
    \omega^{-1}(\theta_j) &\approx \theta_j + \sbf m_1^{(\mu)} + \sbf^2k_2^{(\mu)}\left(\frac{1}{\theta_j} + \frac{\epsilon m_1^{(\eta)}}{\theta_j^2}\right)
\end{align}
where the approximation is more precise for larger $b$ and smaller $\epsilon$ and we have used the fact that the first cumulant of any measure matches the first moment.
One could consider for instance a power law for $\sbf$, i.e.
\begin{align}
    \omega^{-1}(\theta_j) &\approx \theta_j +  \frac{k_1^{(\mu)}}{b^{\upsilon}} + \frac{k_2^{(\mu)}}{b^{2\upsilon}}\left(\frac{1}{\theta_j} + \frac{\epsilon m_1^{(\eta)}}{\theta_j^2}\right) =\theta_j +  \frac{m_1^{(\mu)}}{b^{\upsilon}} + \frac{k_2^{(\mu)}}{b^{2\upsilon}}\left(\frac{1}{\theta_j} + \frac{\epsilon m_1^{(\eta)}}{\theta_j^2}\right) \label{eq:omega_inv_approx_form}
\end{align}
for some $\upsilon > 0$.
In the case that $\mu$ is a semicircle, then all cumulants apart from the second vanish, so setting $\epsilon = 0$ recovers \emph{exactly} \begin{align}
    \omega^{-1}(\theta_j) = \theta_j + \frac{\sigma^2}{4b^{2\upsilon}\theta_j}
\end{align}
where $\sigma$ is the radius of the semicircle.
To make the link with \cite{granziol2020learning} obvious, we can take $\upsilon = 1/2$ and $\mu$ to be the semicircle, so giving
\begin{align}
    \omega^{-1}(\theta_j) \approx \theta_j + \frac{\sigma^2}{4b\theta_j}
\end{align}
where we have truncated $\mathcal{O}(\epsilon)$ term.
We present an argument in favour of the $\upsilon=1/2$ power law below, but we allow for general $\upsilon$ when comparing to experimental data.

\begin{rem}
It is quite possible for $\mu$'s density to have a sharp spike at the origin, or even for $\mu$ to contain a $\delta$ atom at $0$, as observed empirically in the spectra of deep neural network Hessians.
\end{rem}

\subsection{Experimental results} 
The random matrix Hessian model introduced above is quite general and abstract.
Necessarily the measures $\mu$ and $\eta$ must be allowed to be quite general as it is well established experimentally \cite{papyan2018full, granziol2020beyond, baskerville2022appearance} that real-world deep neural network Hessians have spectral bulks that are not familiar as being any standard canonical examples from random matrix theory.
That being said, the approximate form in (\ref{eq:omega_inv_approx_form}) gives quite a specific form for the Hessian outliers.
In particular, the constants $m_1^{(\mu)}, m_1^{(\eta)}$ and $m_2^{(\mu)}, \epsilon > 0$ are shared between all outliers at all batch sizes.
If the form of the Hessian outliers seen in (\ref{eq:omega_inv_approx_form}) is not observed experimentally, it would suggest at least one of the following does not hold:
\begin{enumerate}
    \item batch sampling induces a simple multiplicative scaling on the Hessian noise (\ref{eq:hessian_def});
    \item the true Hessian is approximately low-rank (as measured by $\epsilon$) and has a finite number of outliers;
    \item the Hessian noise model $X$ has QUE.
\end{enumerate}
In view of this third point, agreement with (\ref{eq:omega_inv_approx_form}) provides an indirect test for the presence of universal random matrix statistics in deep neural network Hessians. 

\medskip
We can use Lanczos power methods \cite{meurant2006lanczos} to compute good approximations to the top few outliers in the batch Hessian spectra of deep neural networks \cite{granziol2020learning}.
Indeed the so-called Pearlmutter trick \cite{pearlmutter1994fast} enable efficient numerical computation of Hessian-vector products, which is all that one requires for power methods.
Over a range of batch sizes, we compute the top 5 outliers of the batch Hessian for 10 different batch seeds.
We repeat this procedure at every 25 epochs throughout the training of VGG16 and WideResNet$28\times10$ image classifiers on the CIFAR100 dataset and at every epoch during the training of a simple multi-layer perceptron network on the MNIST dataset.
By the end of training each of the models have high test accuracy, specifically the VGG$16$ architecture which does not use batch normalisation, has a test accuracy of $\approx 75\%$, whereas the WideResNet$28\times10$ has a test accuracy of $\approx 80 \%$. The MLP has a test set accuracy of $\approx 95\%$.

\begin{rem}
There is a subtlety with regard to obtaining the top outliers using the Lanczos power method.
Indeed, since Lanczos provides, in some sense, an approximation to the whole spectrum of a matrix, truncating at $m$ iterations for a $N\times N$ matrix cannot produce good approximations to all of the $m$ top eigenvalues.
In reality, experimental results \cite{papyan2018full,granziol2019deep} show that, for deep neural networks, and using sufficiently many iterations ($m$), the top $r$ eigenvalues may be recovered, for $r\ll m$.
We display some spectral plots of the full Lanczos results in the Appendix which demonstrate clearly a large number of outliers, and clearly more than $5$.
As a result, we can have confidence that our numerical procedure is indeed recovering approximations to the top few eigenvalues required for our experiments.
\end{rem}

\medskip
Let $\lambda^{(i, j, e)}_b$ be the top $i$-th empirical outlier (so $i=1$ is the top outlier) for the $j$-th batch seed and a batch size of $b$ for the model at epoch $e$.
To compare the experimental results to our theoretical model, we propose the following form:
\begin{align}\label{eq:omega_fit_form}
    \lambda^{(i, j, e)}_b \approx \theta^{(i, e)} + \frac{\alpha^{(e)}}{b^{\upsilon}} + \frac{\beta^{(e)}}{b^{2\upsilon}}\left(\frac{1}{\theta^{(i, e)}} + \frac{\gamma^{(e)}}{(\theta^{(i, e)})^2}\right)
\end{align}
where $\beta^{(e)} > 0$ (as the second cumulant of a any measure of non-negative) and $\theta^{(i, e)} > \theta^{(i+1, e)} > 0$ for all $i,e$.
The parameters $\alpha^{(e)}, \beta^{(e)}, \gamma^{(e)}$ and $\theta^{(i, e)}$ need to be fit to the data, which could be done with standard black-box optimisation to minimise squared error in (\ref{eq:omega_fit_form}), however we propose an alternative approach which reduces the number of free parameters and hence should regularise the optimisation problem.
Observe that (\ref{eq:omega_fit_form}) is linear in the parameters $\alpha^{(e)}, \beta^{(e)}, \gamma^{(e)}$ so, neglecting the positivity  constraint on $\beta^{(e)}$, we can in fact solve exactly for optimal values.
Firstly let us define $\barlam$ to be the empirical mean of $\lam$ over the batch seed index $j$.
Each epoch will be treated entirely separately, so let us drop the $e$ superscripts to streamline the notation.
We are then seeking to optimise $\alpha, \beta, \gamma, \theta^{(i)}$ to minimise \begin{align}\label{eq:opt_straight_form}
    E = \sum_{i,b}\left(\barlamnoe - \theta^{(i)} - \frac{\alpha}{b^{\upsilon}} - \frac{\beta}{\theta^{(i)}b^{2\upsilon}} - \frac{\beta\gamma}{b^{2\upsilon} (\theta^{(i)})^2}\right)^2.
\end{align}
Now make the following definitions
\begin{align}
    y_{ib} = \barlamnoe - \theta^{(i)}, ~ \vec{x}_{ib} = \left(\begin{array}{c} b^{-\upsilon} \\ (\theta^{(i)} b)^{-2\upsilon} \\ (b^{2\upsilon} (\theta^{(i)})^2)^{-1}\end{array}\right), ~ \vec{w} = \left(\begin{array}{c} \alpha \\ \beta \\ \beta\gamma\end{array}\right),
\end{align}
so that 
\begin{align}\label{eq:lin_reg_form}
    E = \sum_{i,b} (y_{ib} - \vec{w}^T \vec{x}_{ib})^2.
\end{align}
Finally we can define the vector $n$-dimensional $\vec{Y}$ by flattening the matrix $(y_{ib})_{ib}$, and the $3\times n$ matrix $\vec{X}$ by stacking the vectors $\vec{x}_{ib}$ and then flattening of the $i,b$ indices.
That done, we have have a standard linear regression problem with design matrix $\vec{X}$ and parameters $\vec{w}$.
For fixed $\theta$, the global minimum of $E$ is then attained at parameters
\begin{align}
    \vec{w}^*(\vec{\theta}) = (\vec{X}\vec{X}^T)^{-1}\vec{X}\vec{Y}
\end{align}
where the dependence on the parameters $\vec{\theta}$ is through $\vec{Y}$ and $\vec{X}$ as above.
We thus have $$\alpha = w^*_1, \beta = w^*_2, \gamma = w^*_3/w^*_2$$
and can plug these values back in to (\ref{eq:opt_straight_form}) to obtain an optimisation problem only over the $\theta^{(i)}$.
There is no closed form solution for the optimal $\theta^{(i)}$ for this problem, so we fit them using gradient descent.
The various settings and hyperparameters of this optimisation were tuned by hand to give convergence and are detailed in the Appendix Section \ref{sec:impl_fitting}.
To address the real constraint $\beta > 0$, we add a penalty term to the loss (\ref{eq:opt_straight_form}) which penalises values of $\theta^{(i)}$ leading to negative values of $\beta$.
The constraint $\theta^{(i)} > \theta^{(i+1)} > 0$ is implemented using a simple differentiable transformation detailed in the Appendix Section \ref{sec:impl_constraints}..
Finally, the exponent $\upsilon$ is selected by fitting the parameters for each $\upsilon$ in $\{-0.1, -0.2, \ldots, -0.9\}$ and taking the value with the minimum mean squared error $E$.

\medskip
The above process results in 12 fits for VGG and Resnet and 10 for MLP (one per epoch).
For each of these, we have a theoretical fit for each of the $5$ top outliers as a function of batch size which can be compared graphically to the data, resulting in $(2\times 12 + 10)\times 5 = 170$ plots.
Rather than try to display them all, we will select a small subset that illustrates the key features.
Figure \ref{fig:outlier_fit_resnet} shows results for the Resnet at epochs 0 (initialisation), 25, 250 and 300 (end of training) and outliers 1, 3 and 5.
Between the three models, the Resnet shows consistently the best agreement between the data and the parametric form (\ref{eq:omega_fit_form}).
The agreement is excellent at epoch 0 but quickly degrades to that seen in the second row of Figure \ref{fig:outlier_fit_resnet}, which is representative of the early and middle epochs for the Resnet.
Towards the end of training the Resnet returns to good agreement between theory and data, as demonstrated in the third and fourth rows of Figure \ref{fig:outlier_fit_resnet} at epochs 250 and 300 respectively.
\begin{figure}[h!]
    \centering
    \begin{subfigure}{0.3\linewidth}
     \centering
     \includegraphics[width=\linewidth]{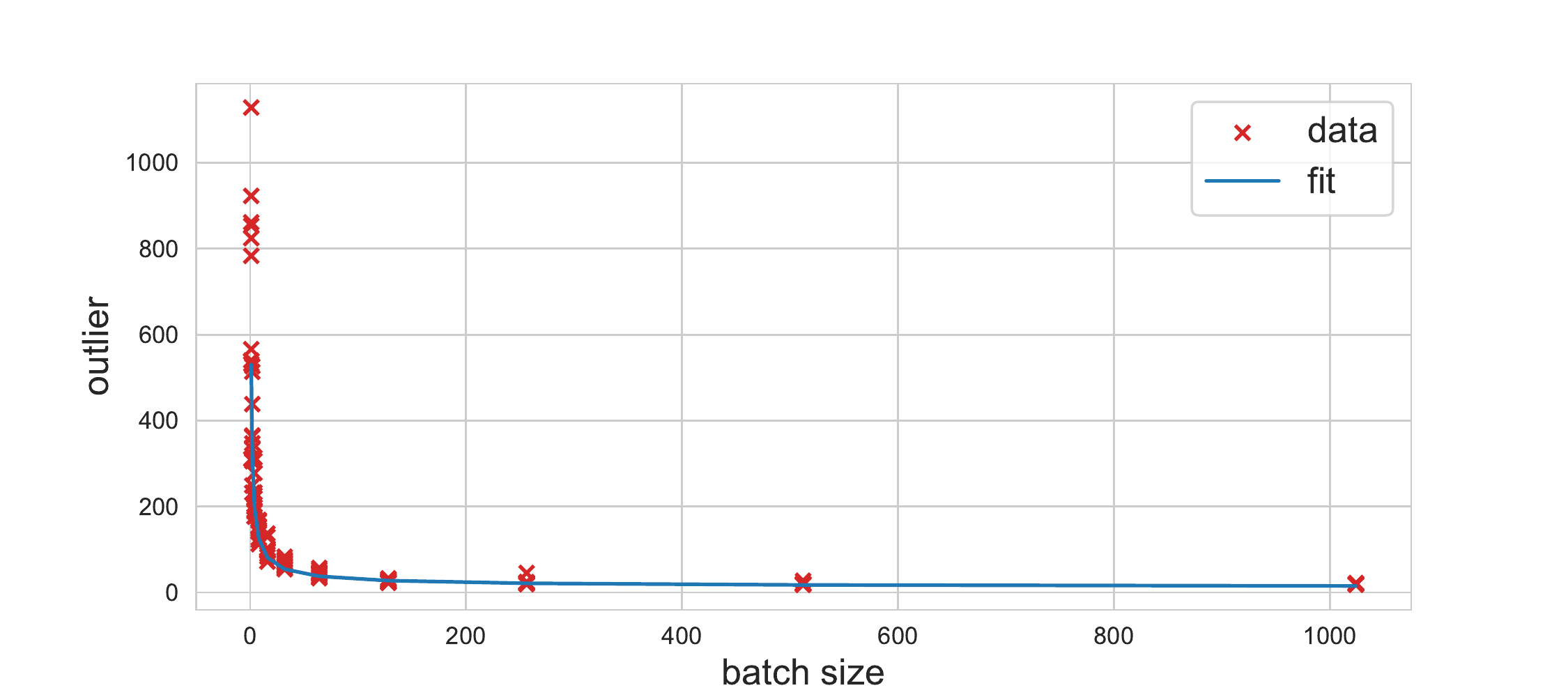}
     \subcaption{Outlier 1, epoch 0}
    \end{subfigure}
    \begin{subfigure}{0.3\linewidth}
     \centering
     \includegraphics[width=\linewidth]{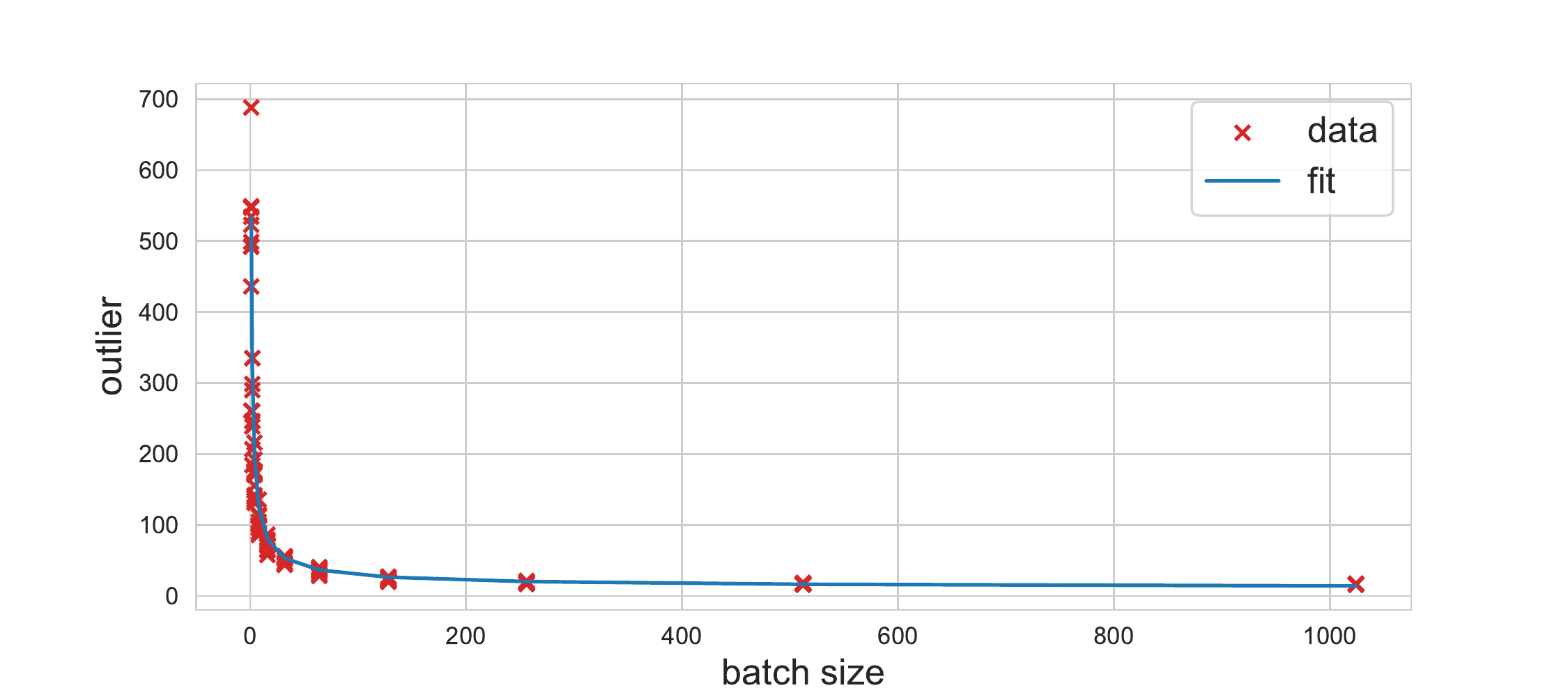}
     \subcaption{Outlier 3, epoch 0}
    \end{subfigure}
    \begin{subfigure}{0.3\linewidth}
     \centering
     \includegraphics[width=\linewidth]{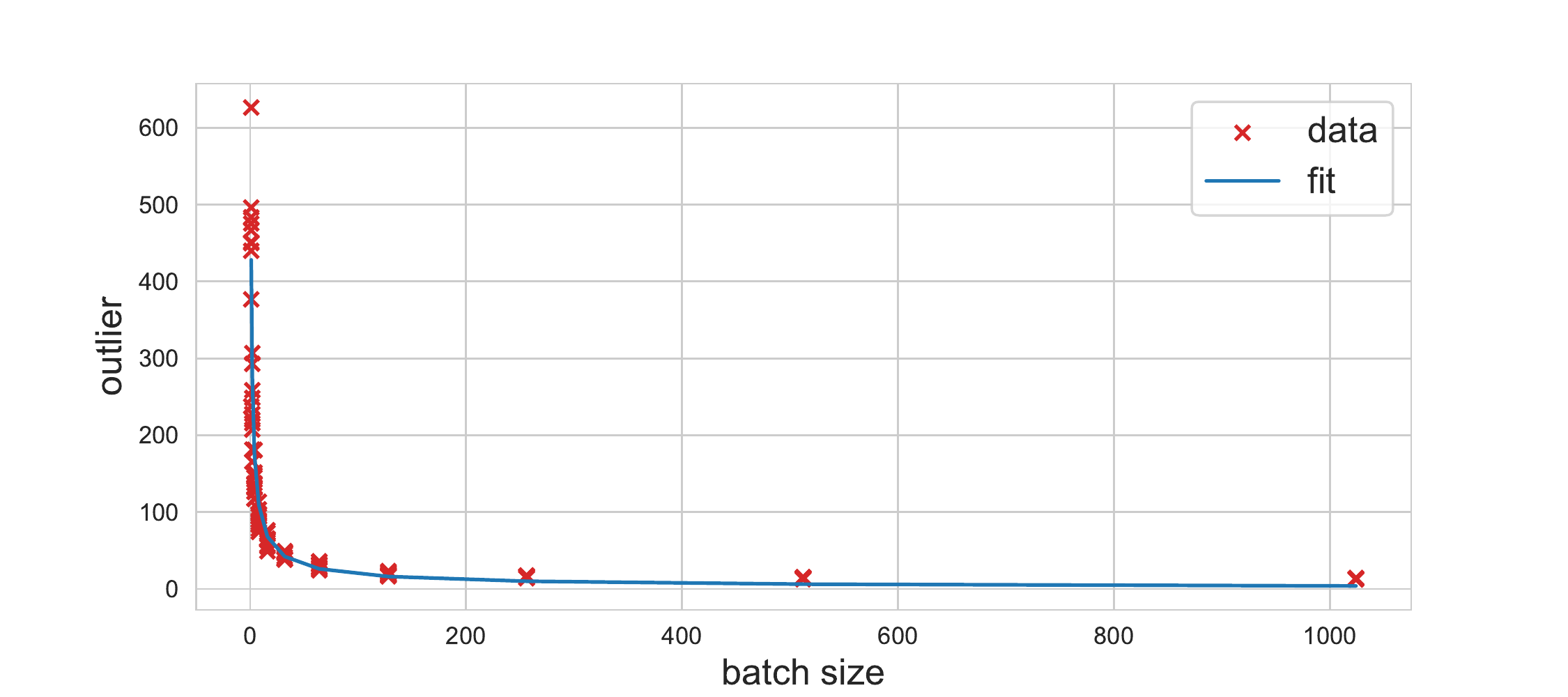}
     \subcaption{Outlier 5, epoch 0}
    \end{subfigure}
    
    \begin{subfigure}{0.3\linewidth}
     \centering
     \includegraphics[width=\linewidth]{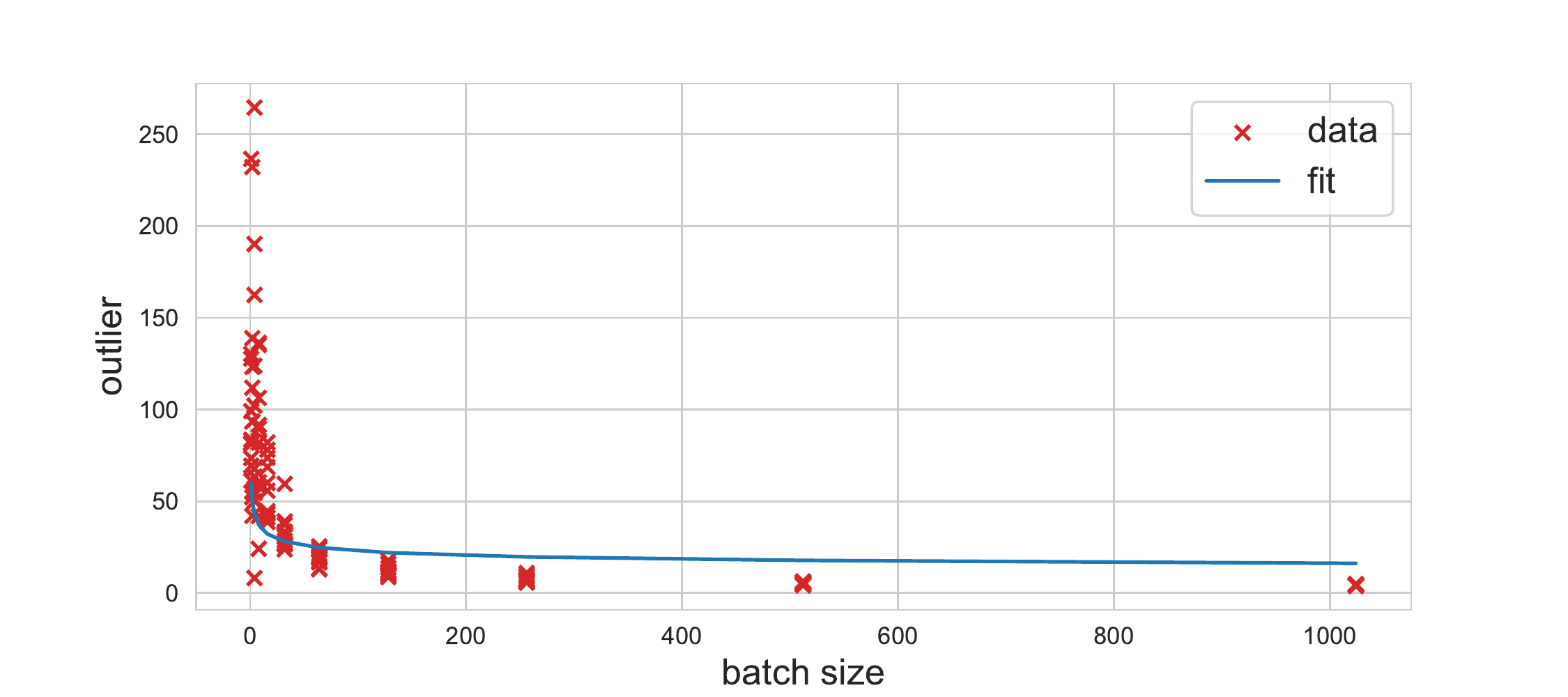}
     \subcaption{Outlier 1, epoch 25}
    \end{subfigure}
    \begin{subfigure}{0.3\linewidth}
     \centering
     \includegraphics[width=\linewidth]{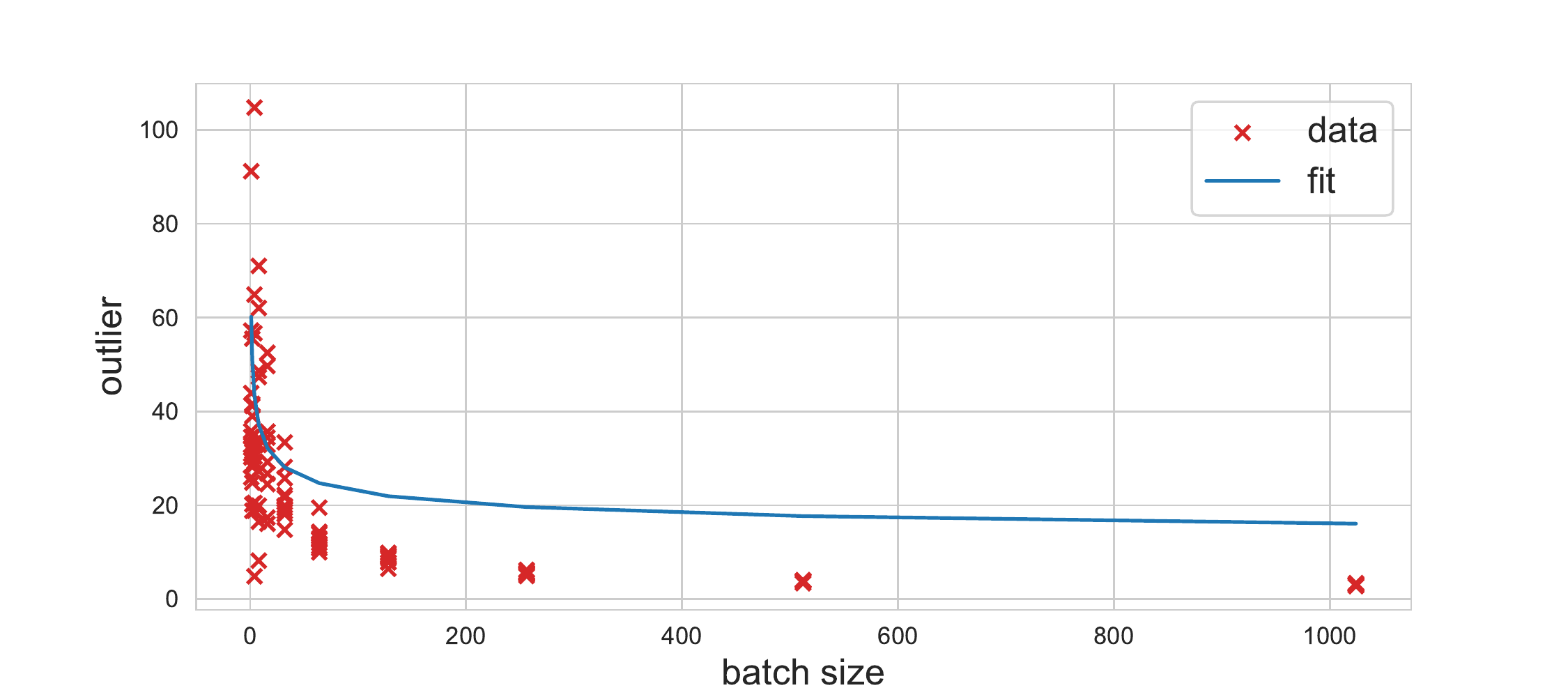}
     \subcaption{Outlier 3, epoch 25}
    \end{subfigure}
    \begin{subfigure}{0.3\linewidth}
     \centering
     \includegraphics[width=\linewidth]{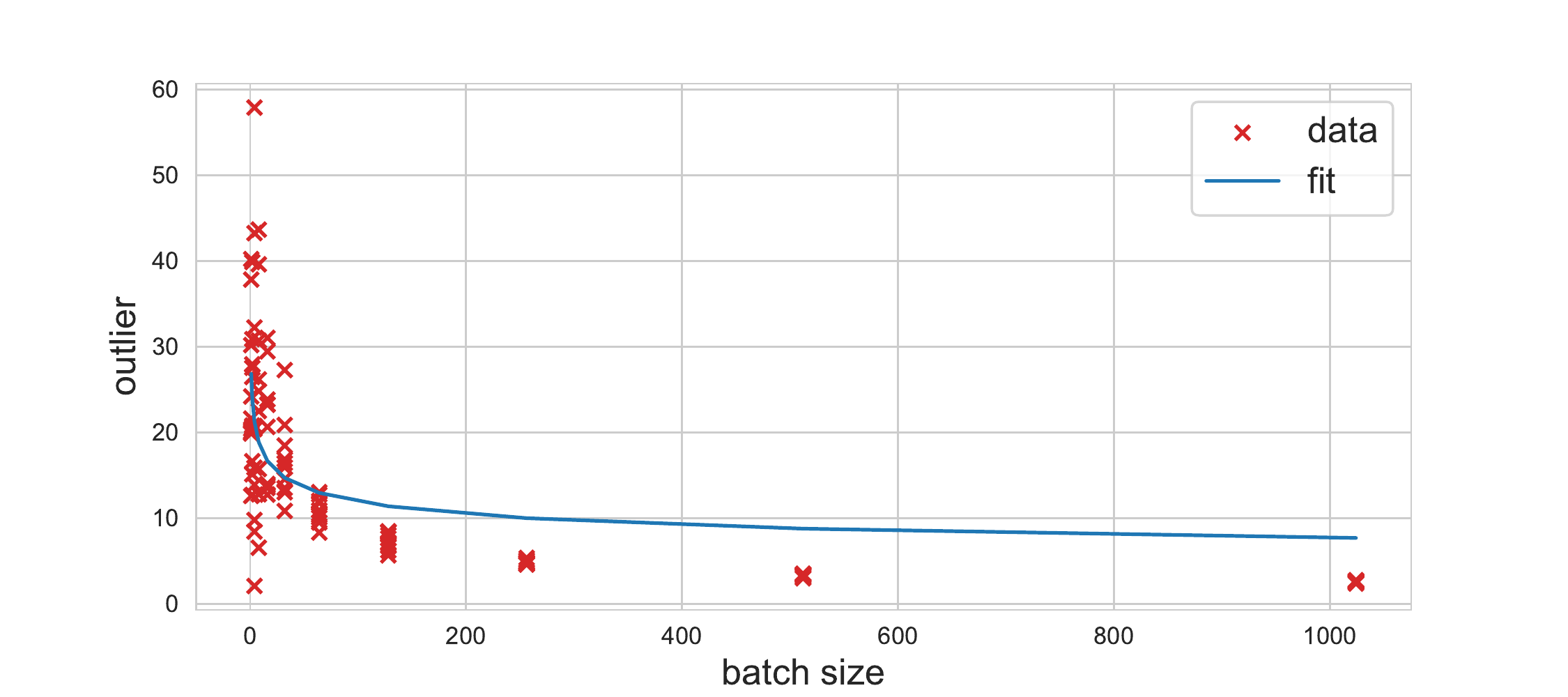}
     \subcaption{Outlier 5, epoch 25}
    \end{subfigure}
    
    \begin{subfigure}{0.3\linewidth}
     \centering
     \includegraphics[width=\linewidth]{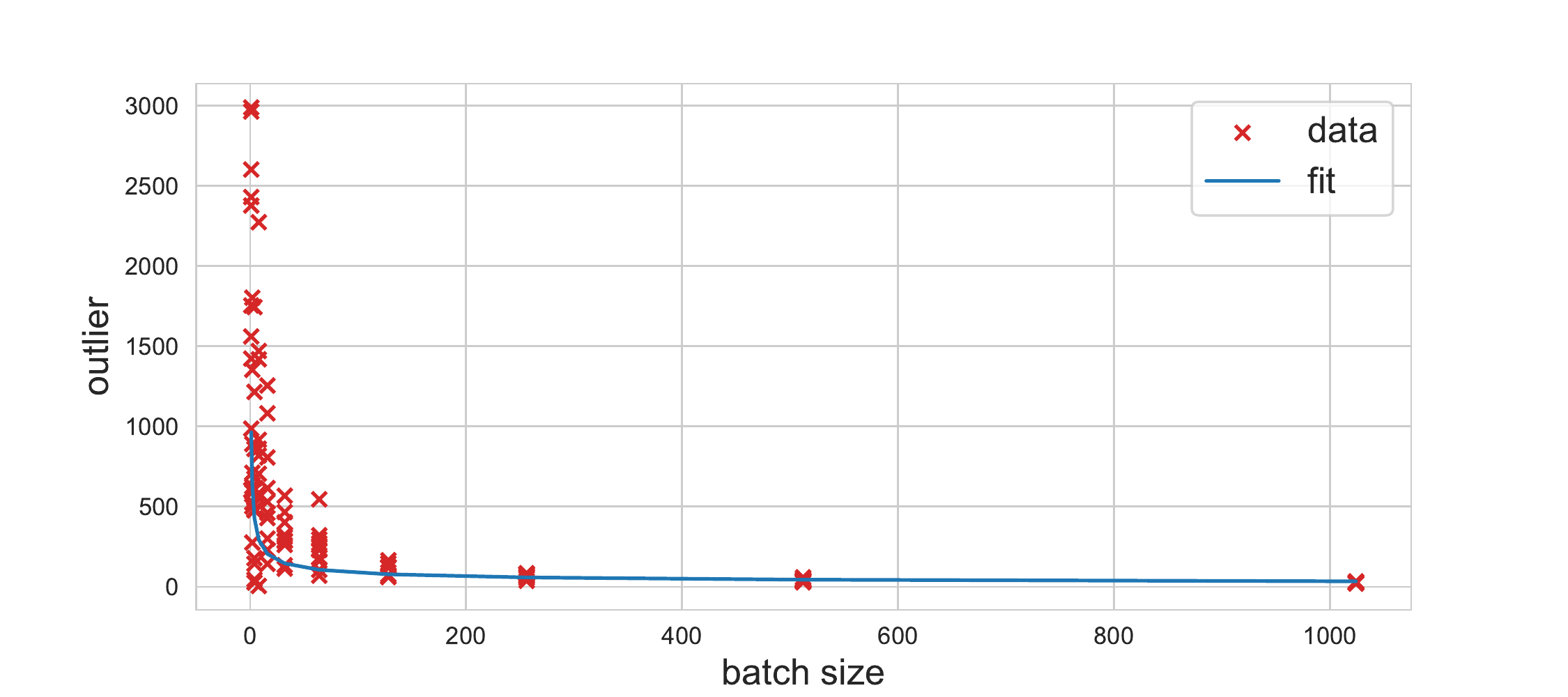}
     \subcaption{Outlier 1, epoch 250}
    \end{subfigure}
    \begin{subfigure}{0.3\linewidth}
     \centering
     \includegraphics[width=\linewidth]{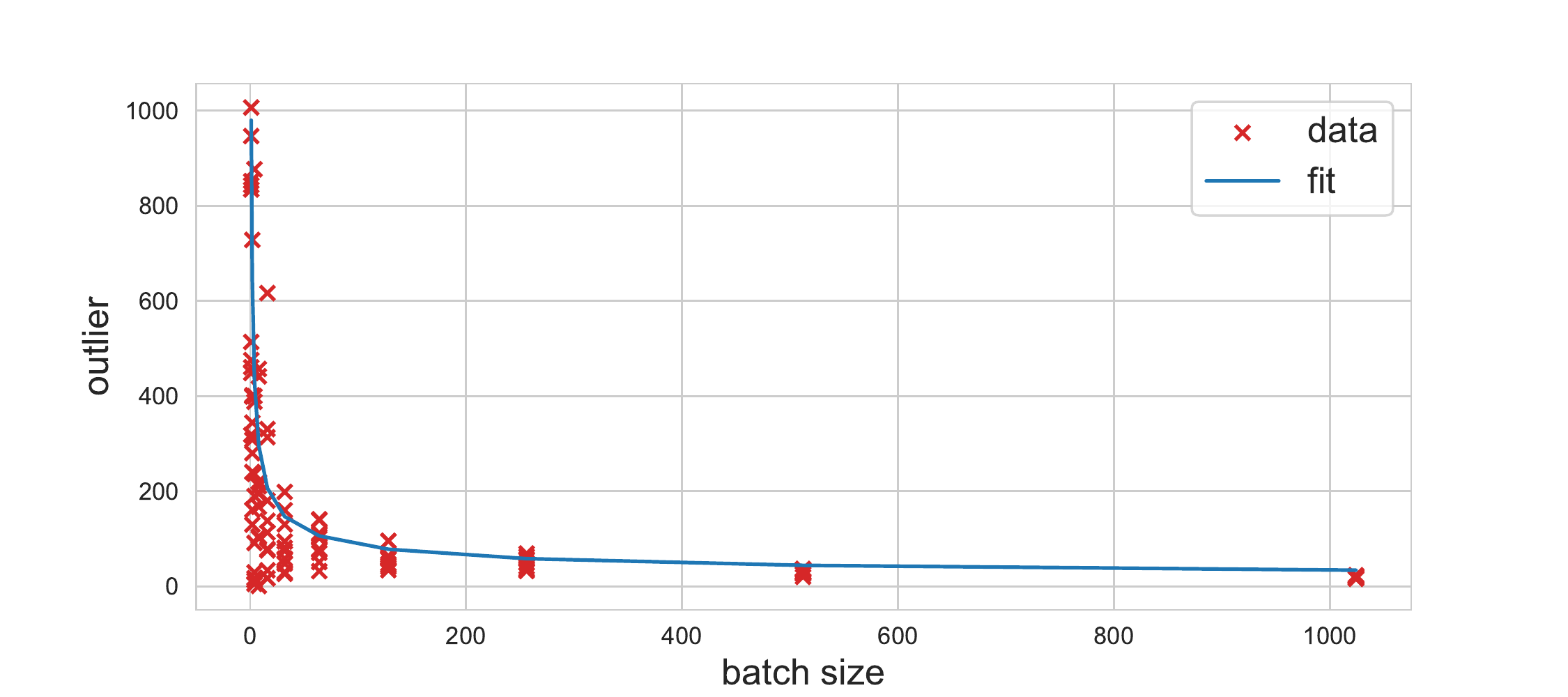}
     \subcaption{Outlier 3, epoch 250}
    \end{subfigure}
    \begin{subfigure}{0.3\linewidth}
     \centering
     \includegraphics[width=\linewidth]{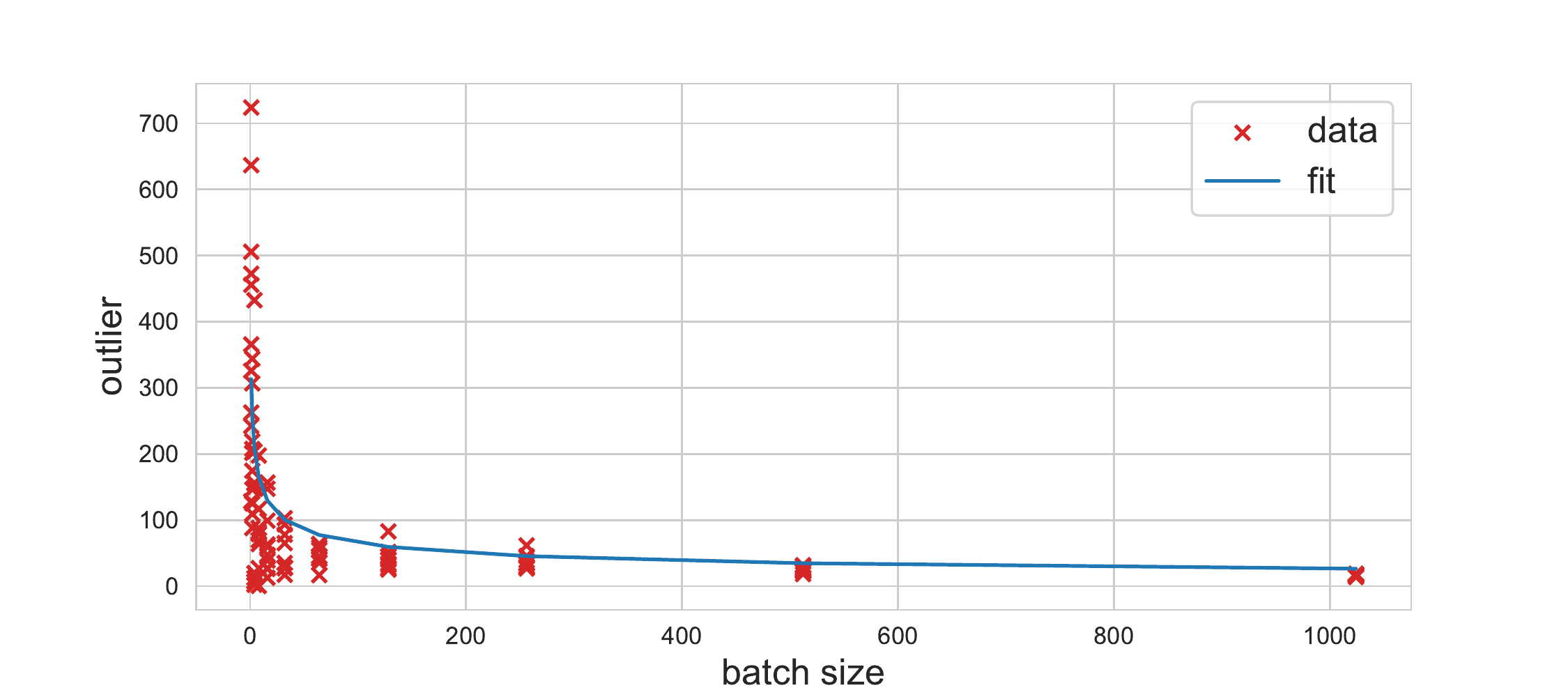}
     \subcaption{Outlier 5, epoch 250}
    \end{subfigure}
    
    \begin{subfigure}{0.3\linewidth}
     \centering
     \includegraphics[width=\linewidth]{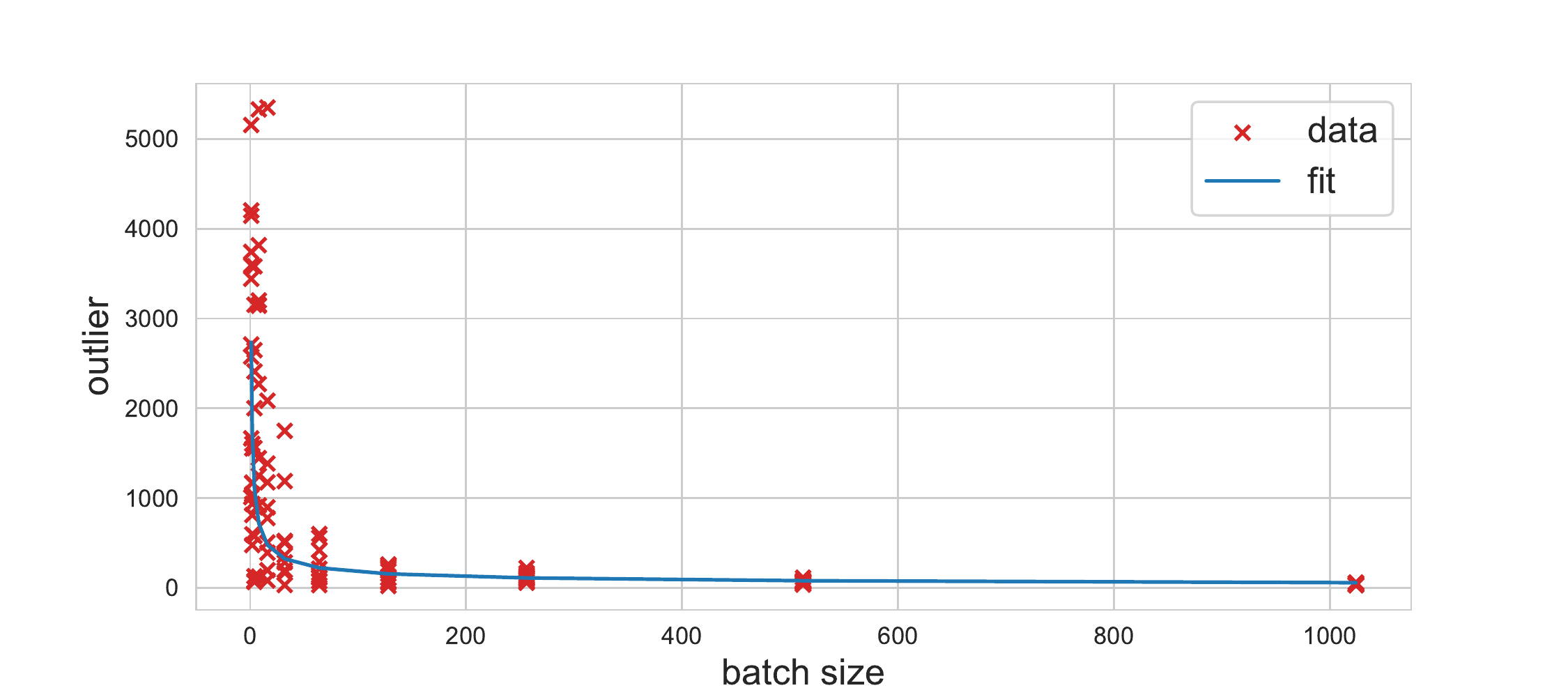}
     \subcaption{Outlier 1, epoch 300}
    \end{subfigure}
    \begin{subfigure}{0.3\linewidth}
     \centering
     \includegraphics[width=\linewidth]{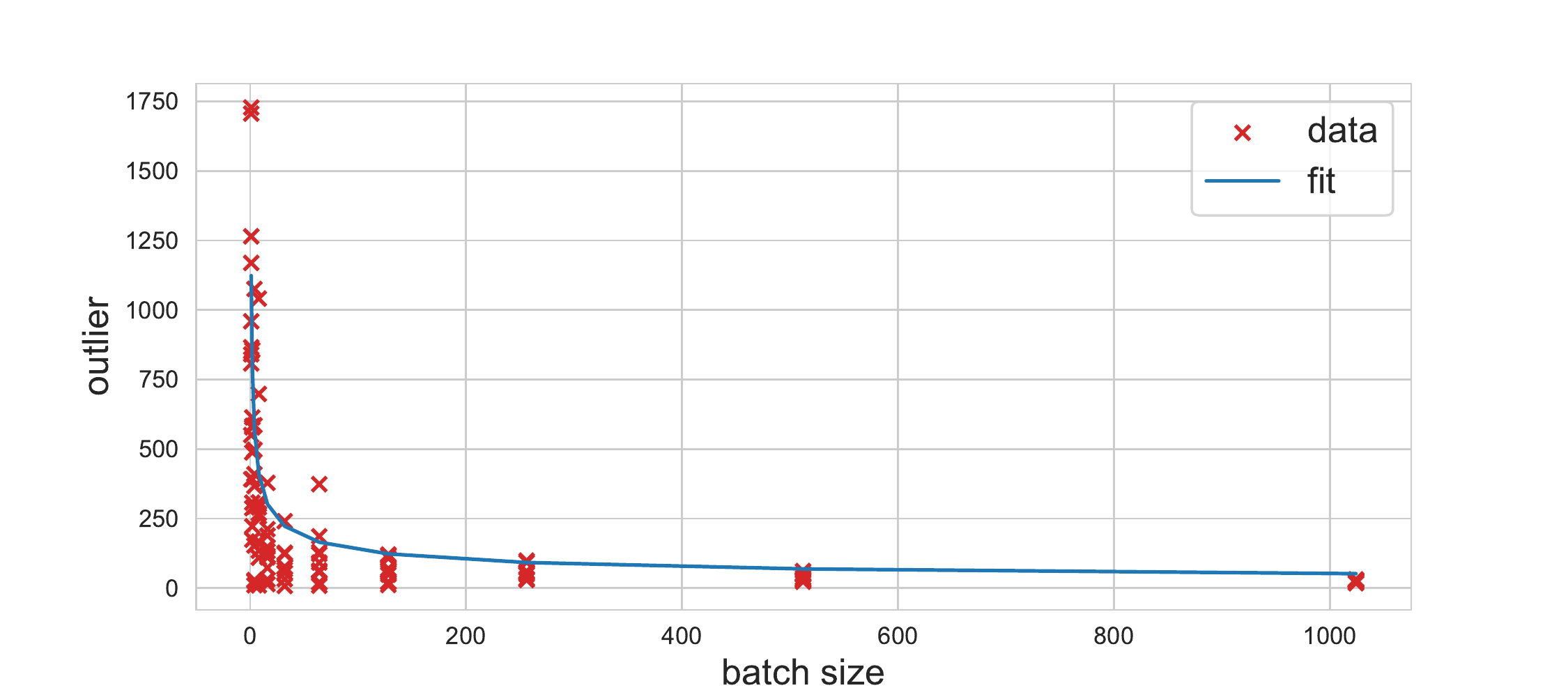}
     \subcaption{Outlier 3, epoch 300}
    \end{subfigure}
    \begin{subfigure}{0.3\linewidth}
     \centering
     \includegraphics[width=\linewidth]{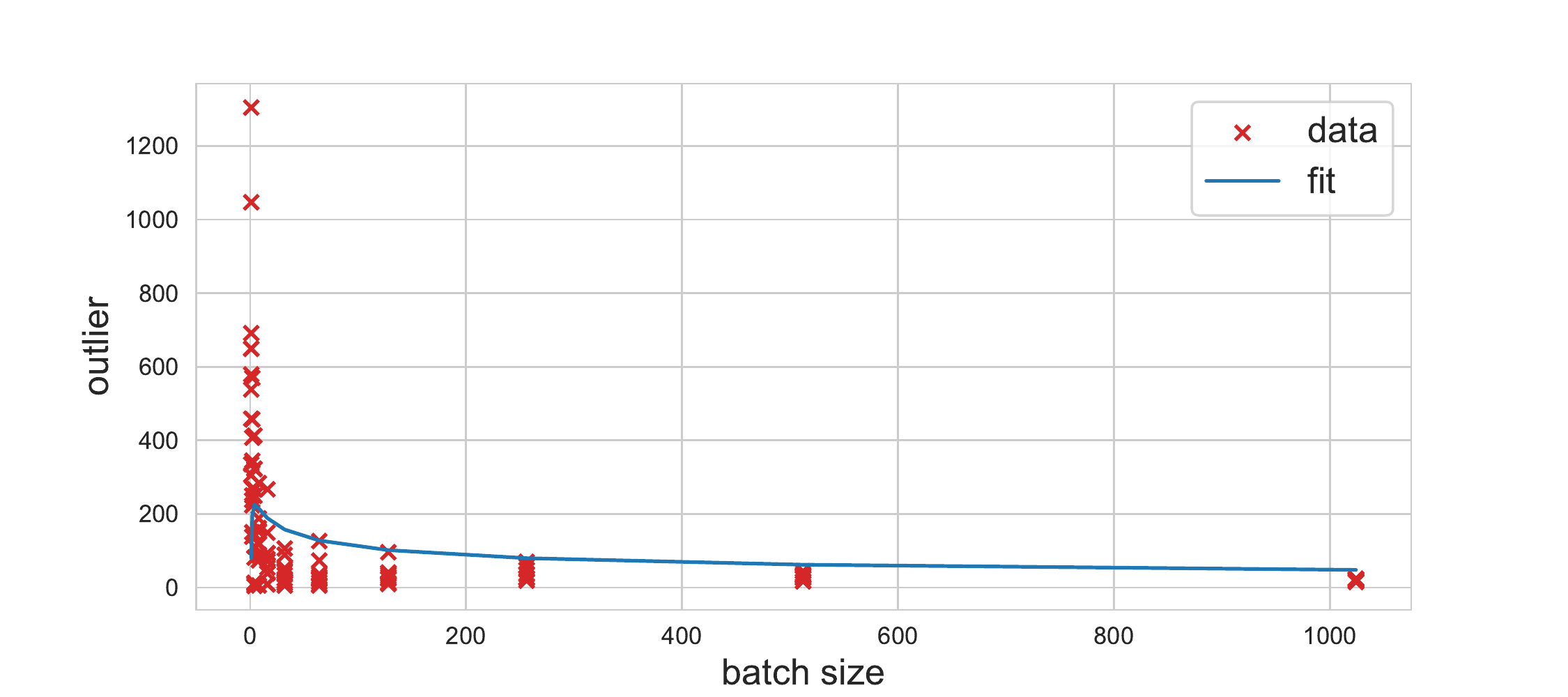}
     \subcaption{Outlier 5, epoch 300}
    \end{subfigure}
    \centering
    \caption{The batch-size scaling of the outliers in the spectra of the Hessians of the Resnet loss on CIFAR100. Training epochs increase top-to-bottom from initialisation to final trained model. Left-to-right the outlier index varies (outlier 1 being the largest). Red cross show results from Lanczos approximations over 10 samples (different batches) for each batch size. The blue lines are parametric power law fits of the form (\ref{eq:omega_fit_form}).}
    \label{fig:outlier_fit_resnet}
\end{figure}

The VGG16 also has excellent agreement between theory and data at epoch 0, and thereafter is similar to the early epochs of the Resnet, i.e. reasonable, but not excellent, until around epoch 225 where the agreement starts to degrade significantly until the almost complete failure at epoch 300 shown in the first row of Figure \ref{fig:outlier_fit_ugly_ducklings}.
The MLP has the worst agreement between theory and data, having again excellent agreement at epoch 0, but really quite poor agreement even by epoch 1, as shown in the second row of Figure \ref{fig:outlier_fit_ugly_ducklings}.
\begin{figure}[h!]
    \centering
    \begin{subfigure}{0.3\linewidth}
     \centering
     \includegraphics[width=\linewidth]{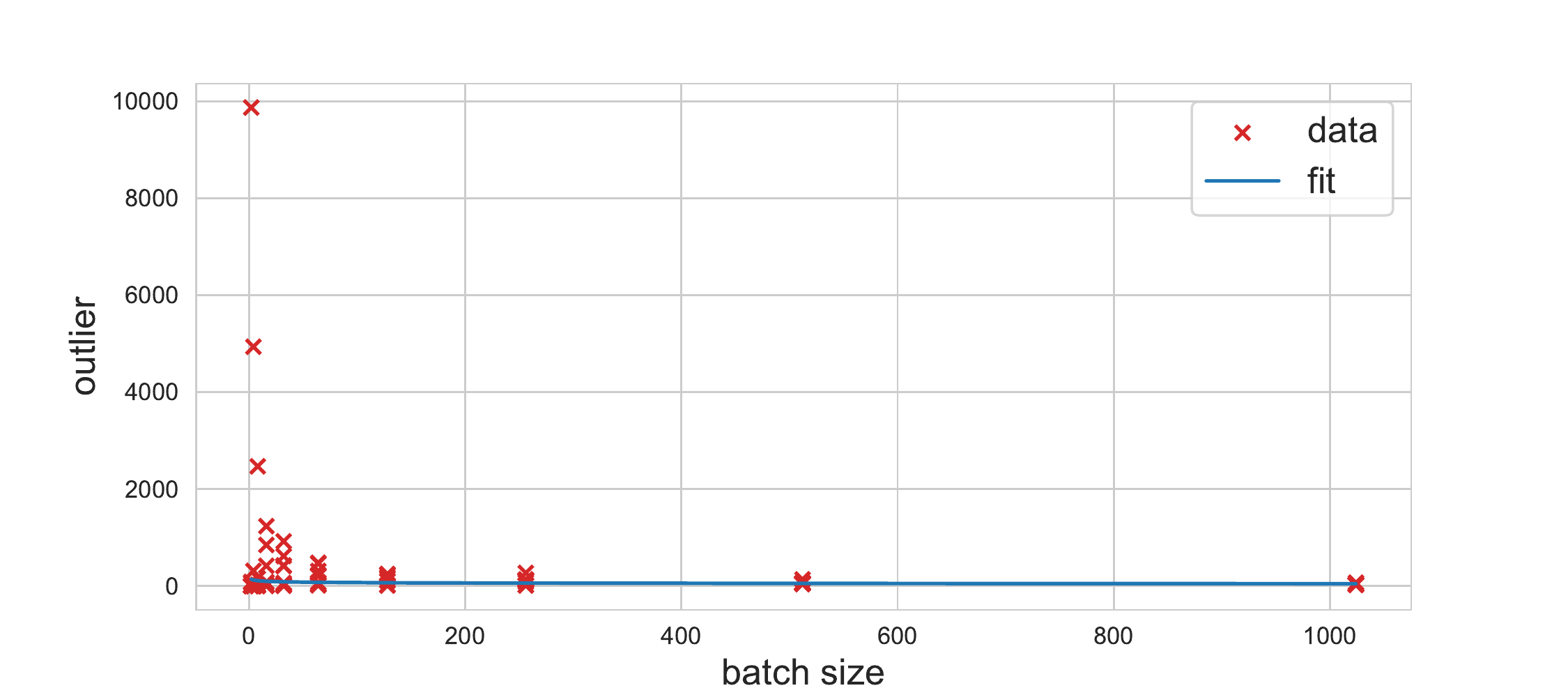}
     \subcaption{VGG16, epoch 300, outlier 1}
    \end{subfigure}
    \begin{subfigure}{0.3\linewidth}
     \centering
     \includegraphics[width=\linewidth]{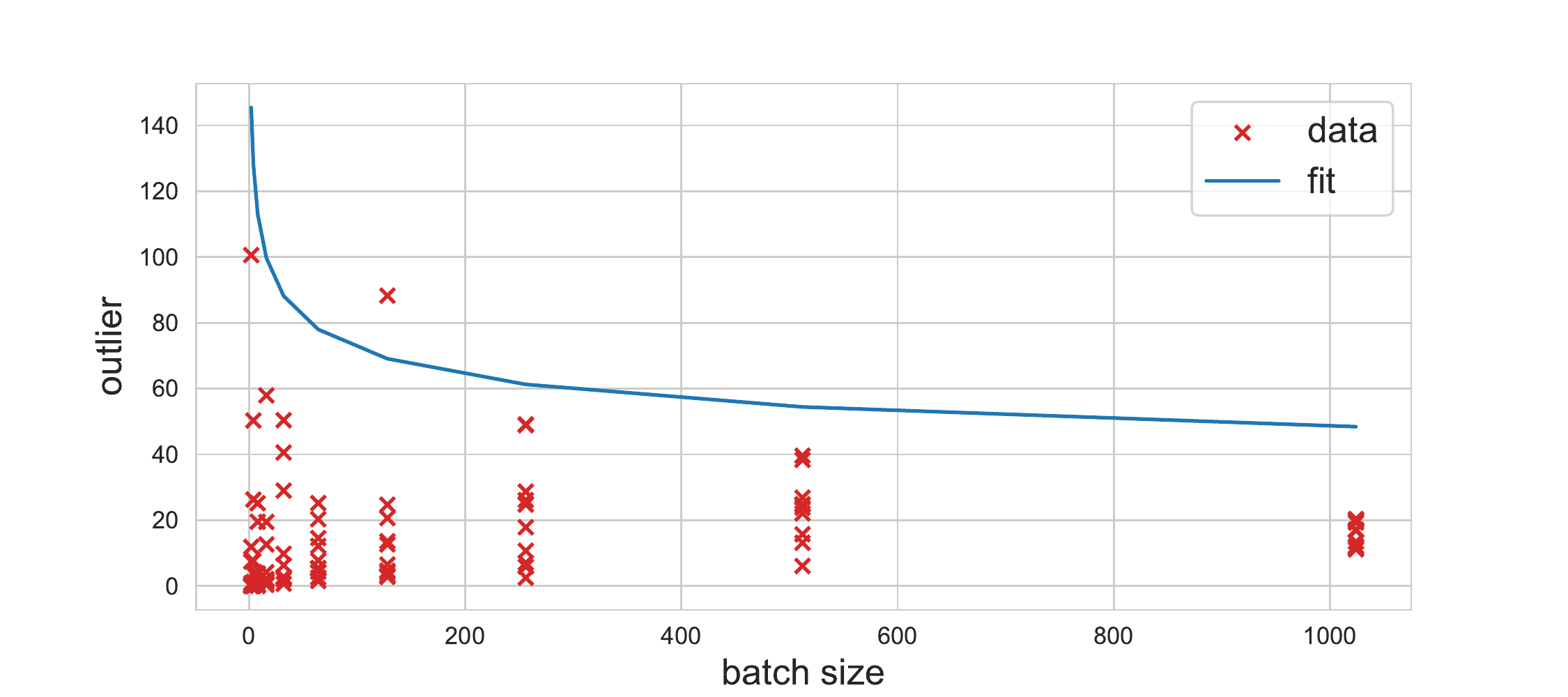}
     \subcaption{VGG16, epoch 300, outlier 3}
    \end{subfigure}
    \begin{subfigure}{0.3\linewidth}
     \centering
     \includegraphics[width=\linewidth]{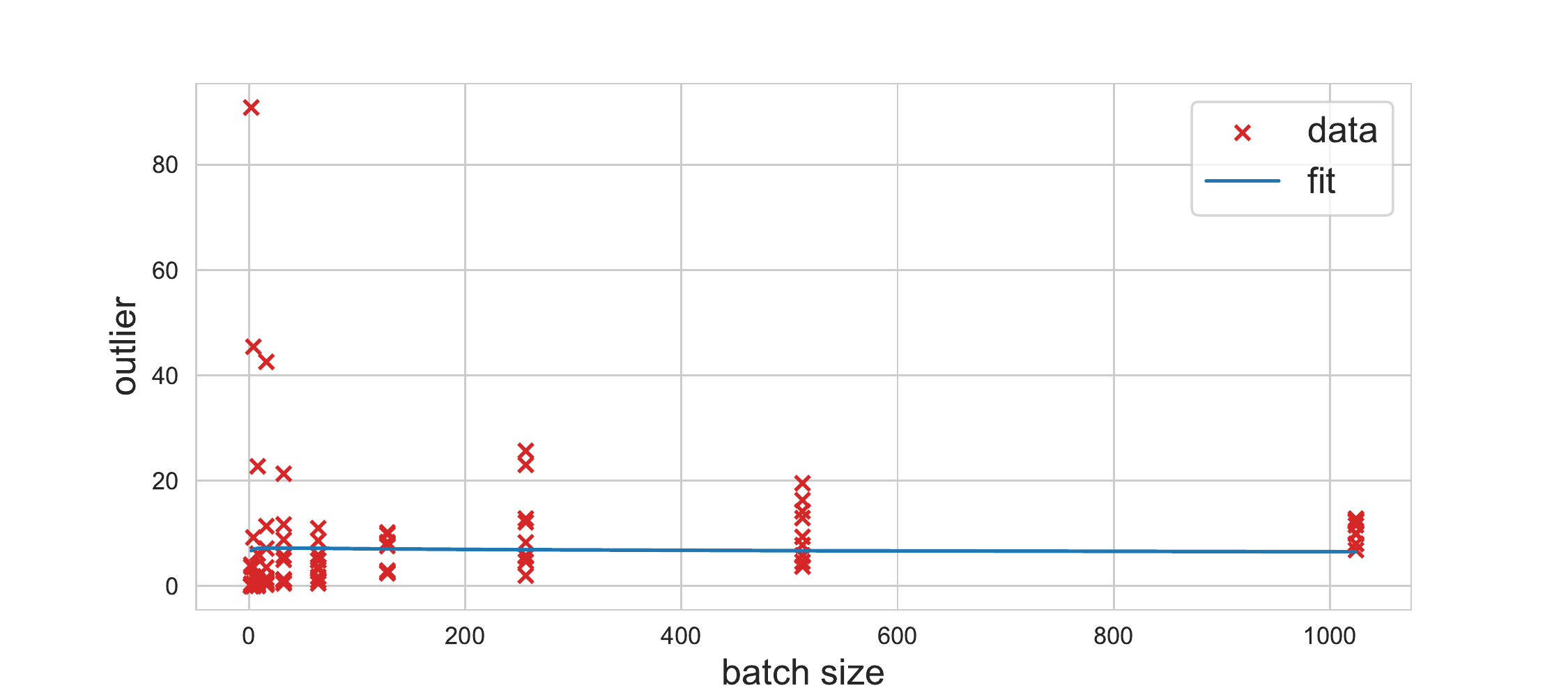}
     \subcaption{VGG16, epoch 300, outlier 5}
    \end{subfigure}
    
    \begin{subfigure}{0.3\linewidth}
     \centering
     \includegraphics[width=\linewidth]{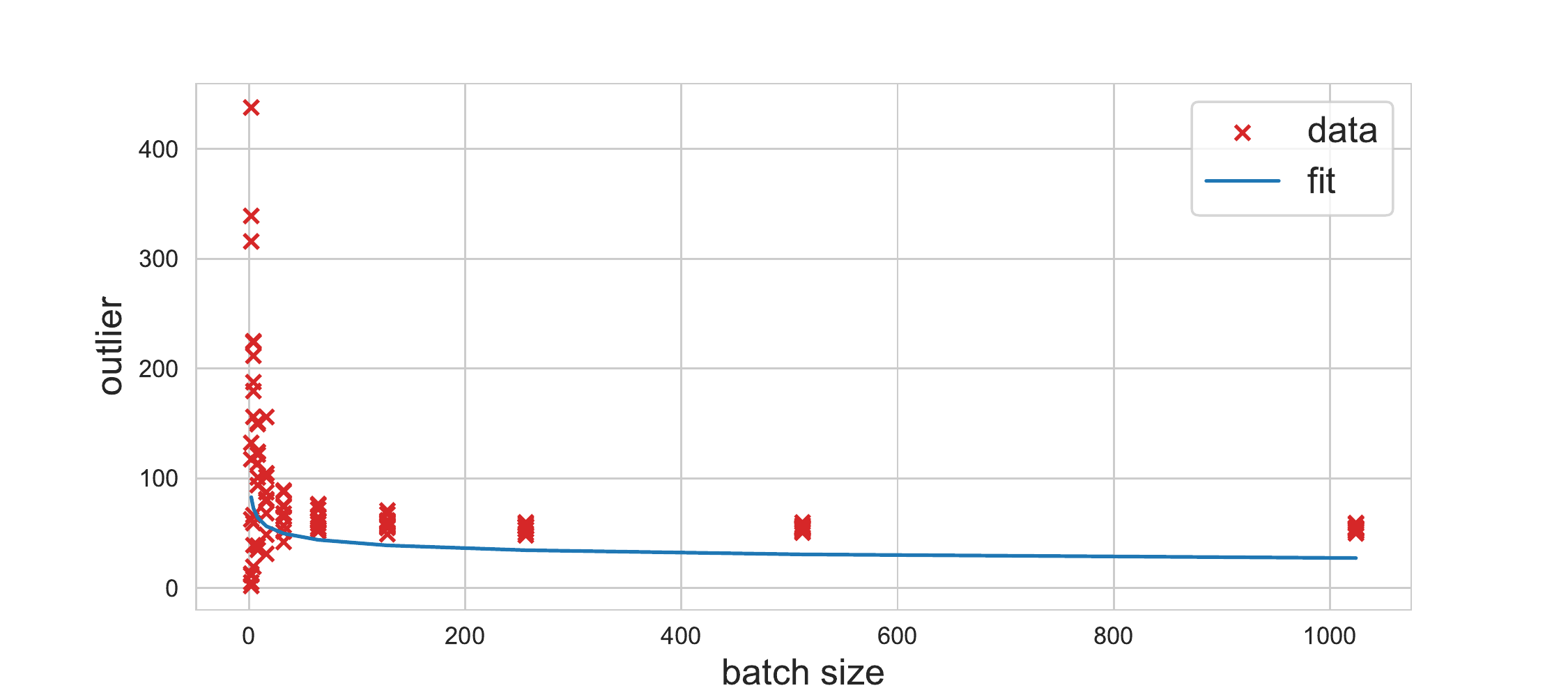}
     \subcaption{MLP, epoch 1, outlier 1}
    \end{subfigure}
    \begin{subfigure}{0.3\linewidth}
     \centering
    \includegraphics[width=\linewidth]{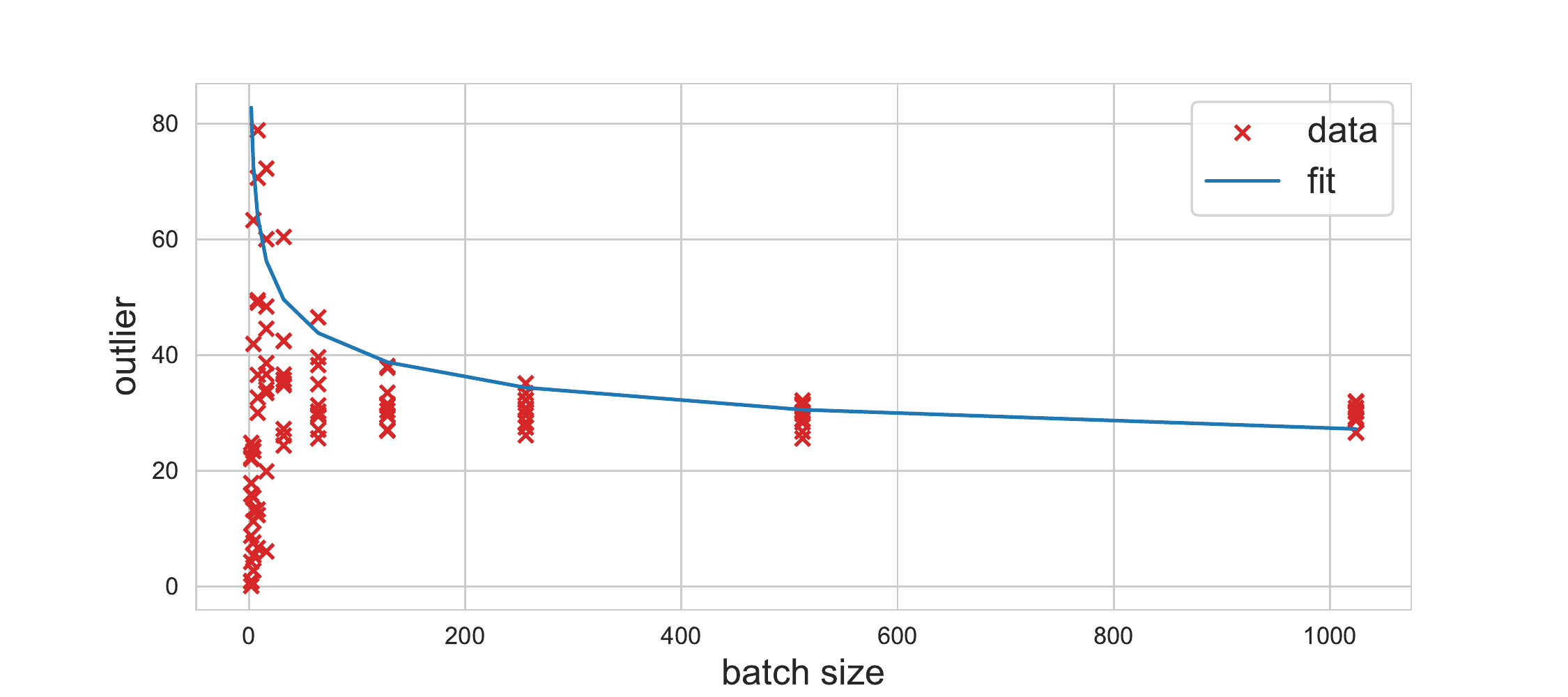}
     \subcaption{MLP, epoch 1, outlier 3}
    \end{subfigure}
    \begin{subfigure}{0.3\linewidth}
     \centering
    \includegraphics[width=\linewidth]{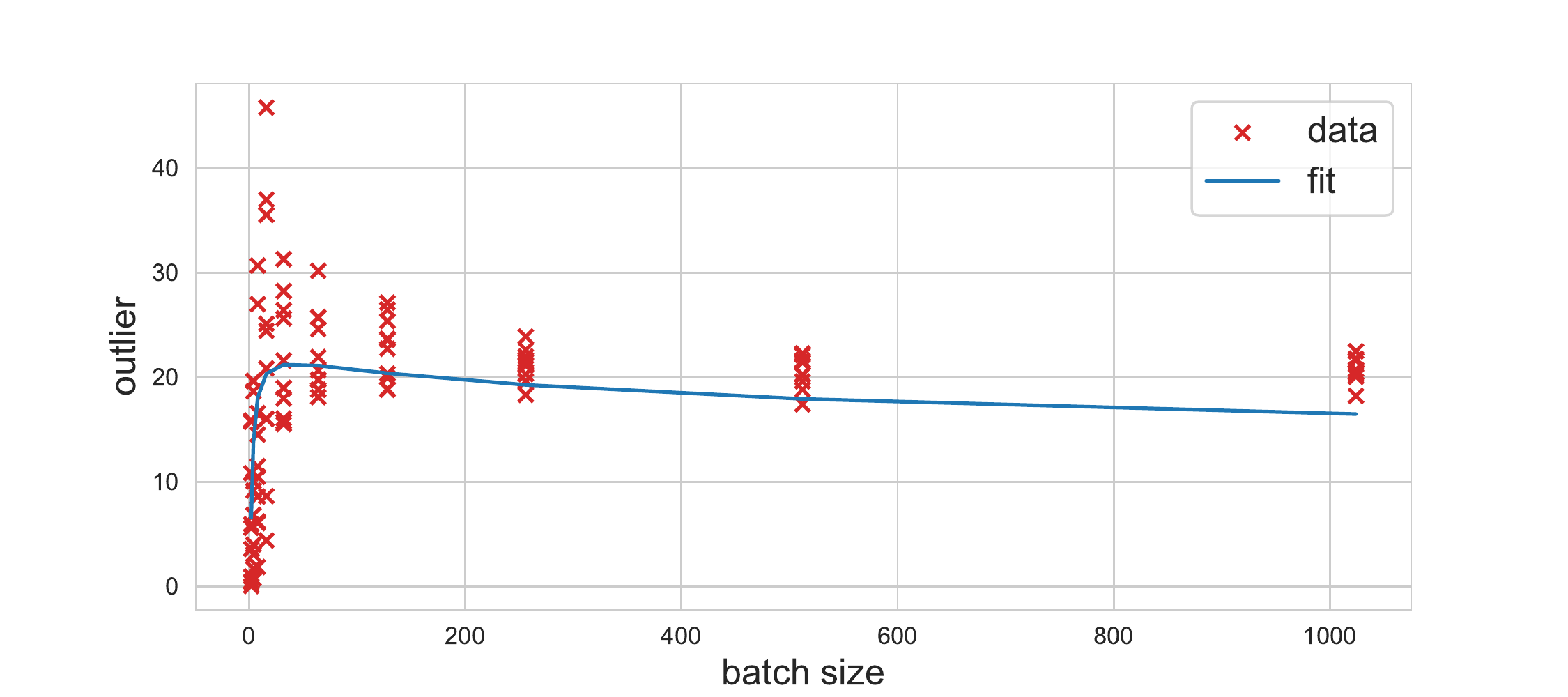}
     \subcaption{MLP, epoch 1, outlier 5}
    \end{subfigure}
    \centering
    \caption{Left-to-right the outlier index varies (outlier 1 being the largest). Red cross show results from Lanczos approximations over 10 samples (different batches) for each batch size. The blue lines are parametric power law fits of the form (\ref{eq:omega_fit_form}). This plot show the final epoch (300) for the VGG16 on CIFAR100 and the first epoch for the MLP on MNIST, both being examples of the parametric fit failing to match the data.}
    \label{fig:outlier_fit_ugly_ducklings}
\end{figure}

The experimental results show an ordering Resnet $>$ VGG $>$ MLP, in terms of how well the random matrix theory loss surface predictions explain the Hessian outliers.
We conjecture that this relates to the difficulty of the loss surfaces.
Resnets are generally believed to have smoother, simpler loss surfaces \cite{li2018visualizing} and be easier to train than other architectures, indeed the residual connections were originally introduced for precisely this reason.
The VGG is generally more sensitive to training set-up, requiring well-tuned hyperparameters to avoid unstable or unsuccessful training \cite{https://doi.org/10.48550/arxiv.2011.08181}.
The MLP is perhaps too small to benefit from high-dimensional highly over-parametrised effects.

\medskip
The parameter values obtained for all models over all epochs are shown in Figure \ref{fig:outlier_fit_summary_params}, with a column for each model.
There are several interesting features to draw out of these plots, however note that we cannot meaningfully interpret the parameters for the MLP beyond epoch 0, as the agreement with (\ref{eq:omega_fit_form}) is so poor. 
Firstly consider the parameter $m_1^{(\mu)}$, which is interpreted as the first moment (i.e. mean) of the spectral density of the noise matrix $X$.
Note that for all models this parameter starts close to 0 and generally grows with training epochs, noting that the right hand side of Figure \ref{fig:vgg_m1} at the higher epochs should be ignored owing to the bad fit discussed above.
$m_1^{(\mu)}=0$ is significant, as it is seen in the case of the a symmetric measure $\mu$, such as the Wigner semicircle used by \cite{granziol2020learning}.
It is interesting also to observe that $\epsilon m_1^{(\eta)}$ remains small for all epochs particularly compared to $m_1^{(\mu)}, k_2^{(\mu)}$.
This is consistent with the derivation of (\ref{eq:omega_fit_form}), which relies on $\epsilon$ being small, however we emphasise that \emph{this was not imposed as a numerical constraint} but arises naturally from the data.
Recall that the magnitude of $\epsilon m_1^{(\eta)}$ measures the extent of the deviation of $A$ from being exactly low rank, so its small but non-zero values suggest that it is indeed important to allow for the true Hessian to have non-zero rank in the $N\rightarrow\infty$ limit.
Finally, we comment that the best exponent is generally not $\upsilon=1/2$.
Again, the results from the Resnet are the most reliable and they appear to show that the batch scaling, as characterised by $\upsilon$, is not constant throughout training, particularly comparing epoch 0 and epoch 300, say.
\begin{figure}[h!]
    \centering
    \begin{subfigure}{0.3\linewidth}
     \centering
     \includegraphics[width=\linewidth]{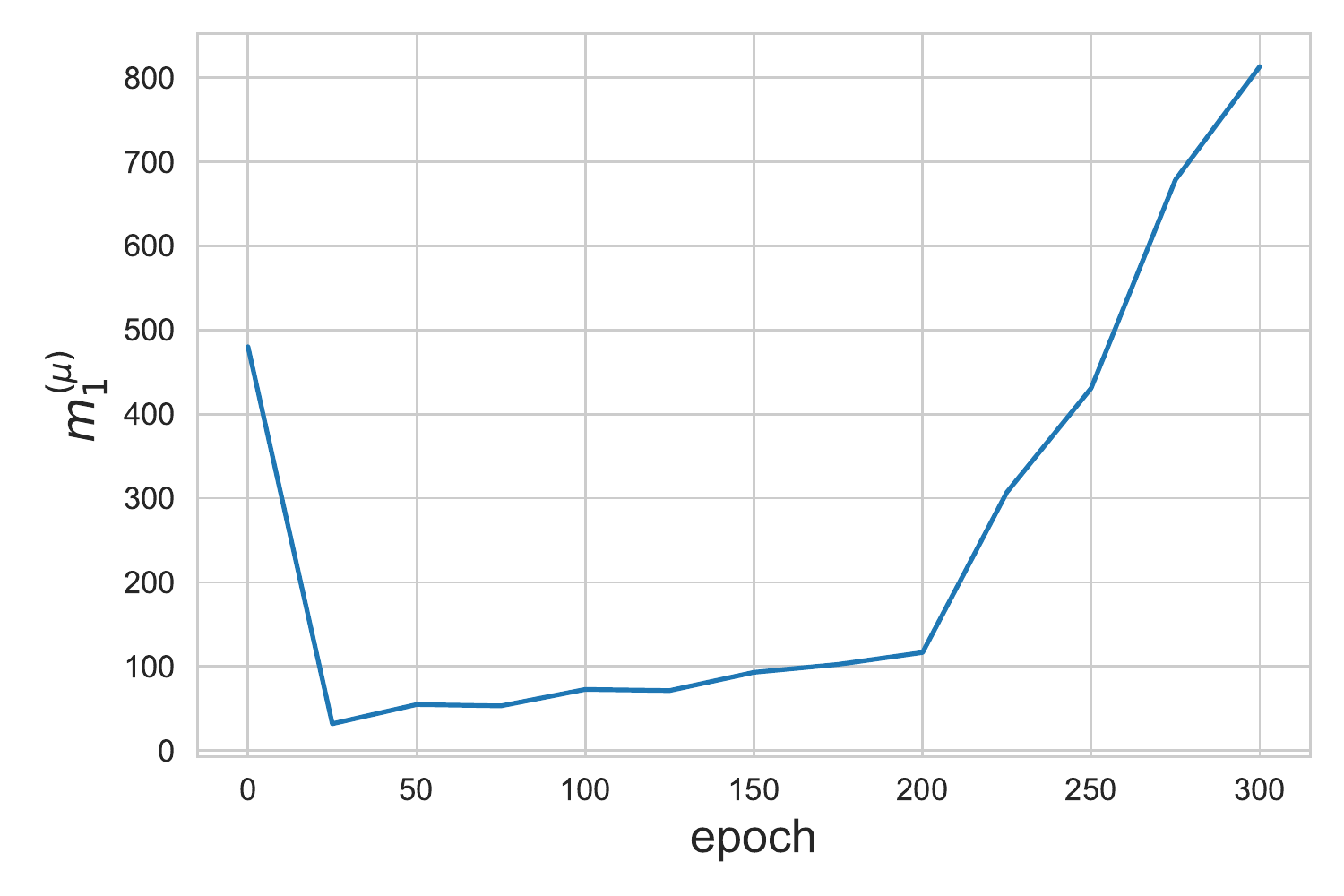}
     \subcaption{$m_1^{(\mu)}$, Resnet on CIFAR100}
    \end{subfigure}
    \begin{subfigure}{0.3\linewidth}
     \centering
     \includegraphics[width=\linewidth]{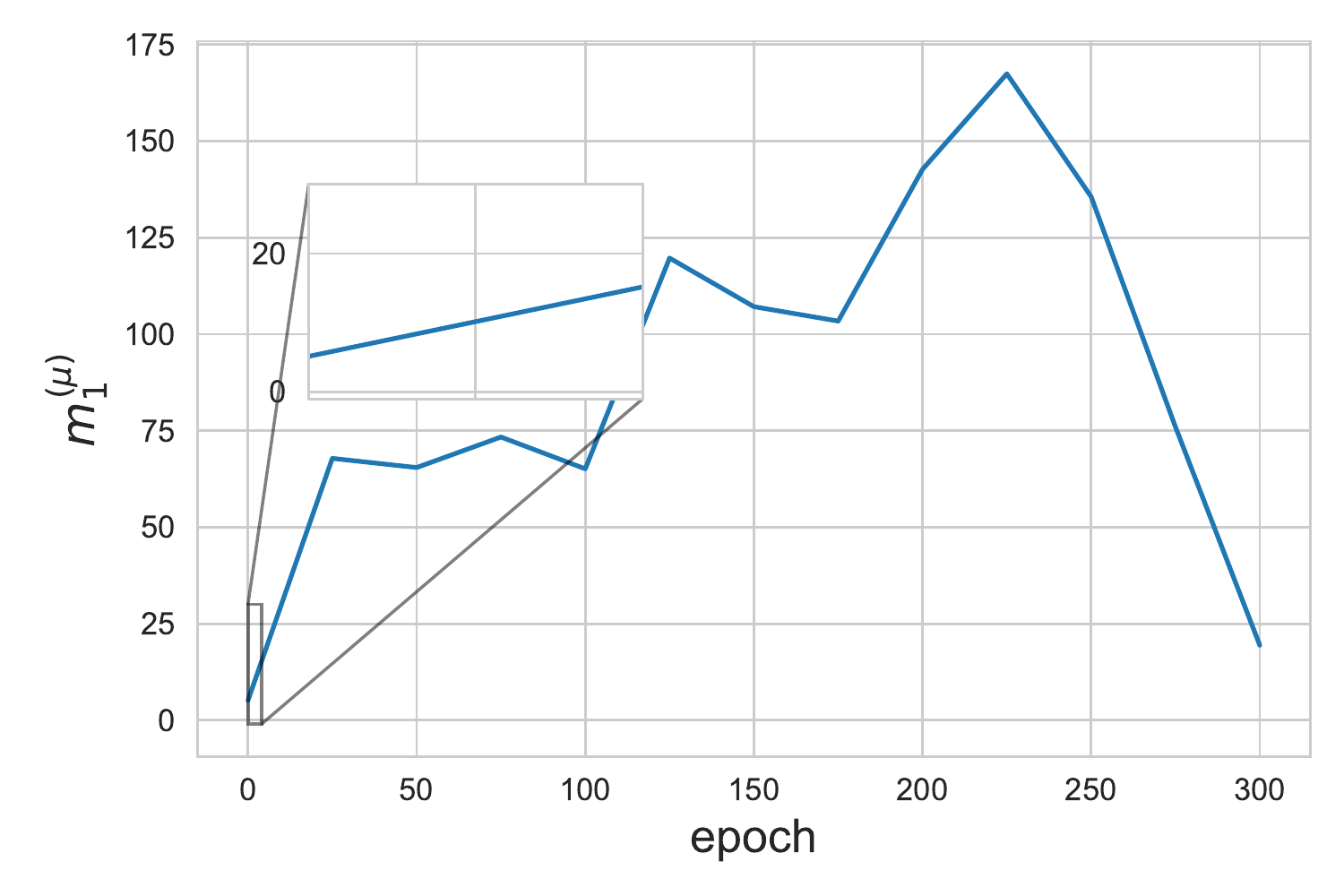}
     \subcaption{$m_1^{(\mu)}$, VGG16 on CIFAR100}
     \label{fig:vgg_m1}
    \end{subfigure}
    \begin{subfigure}{0.3\linewidth}
     \centering
     \includegraphics[width=\linewidth]{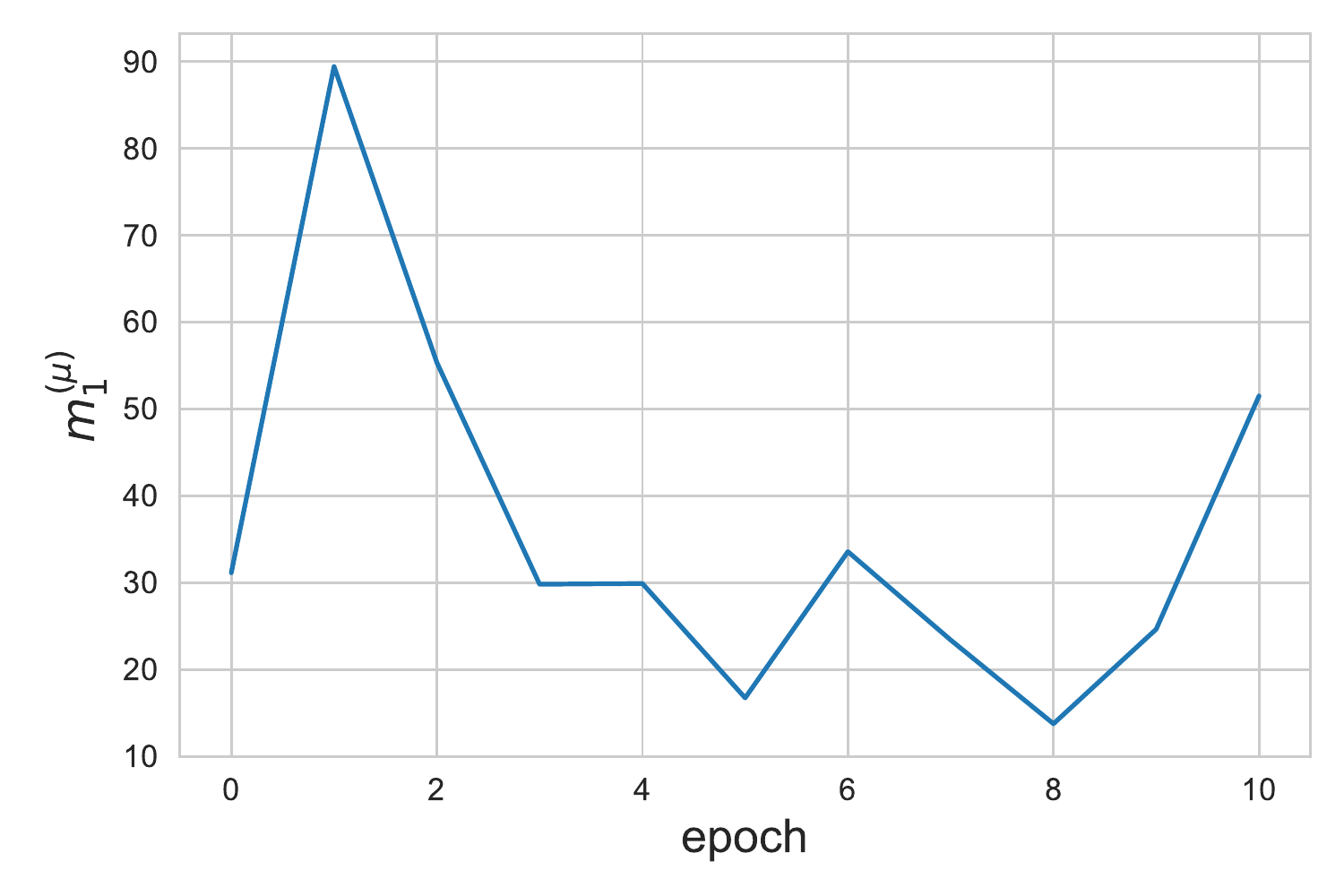}
     \subcaption{$m_1^{(\mu)}$, MLP on MNIST}
    \end{subfigure}
    
    \begin{subfigure}{0.3\linewidth}
     \centering
     \includegraphics[width=\linewidth]{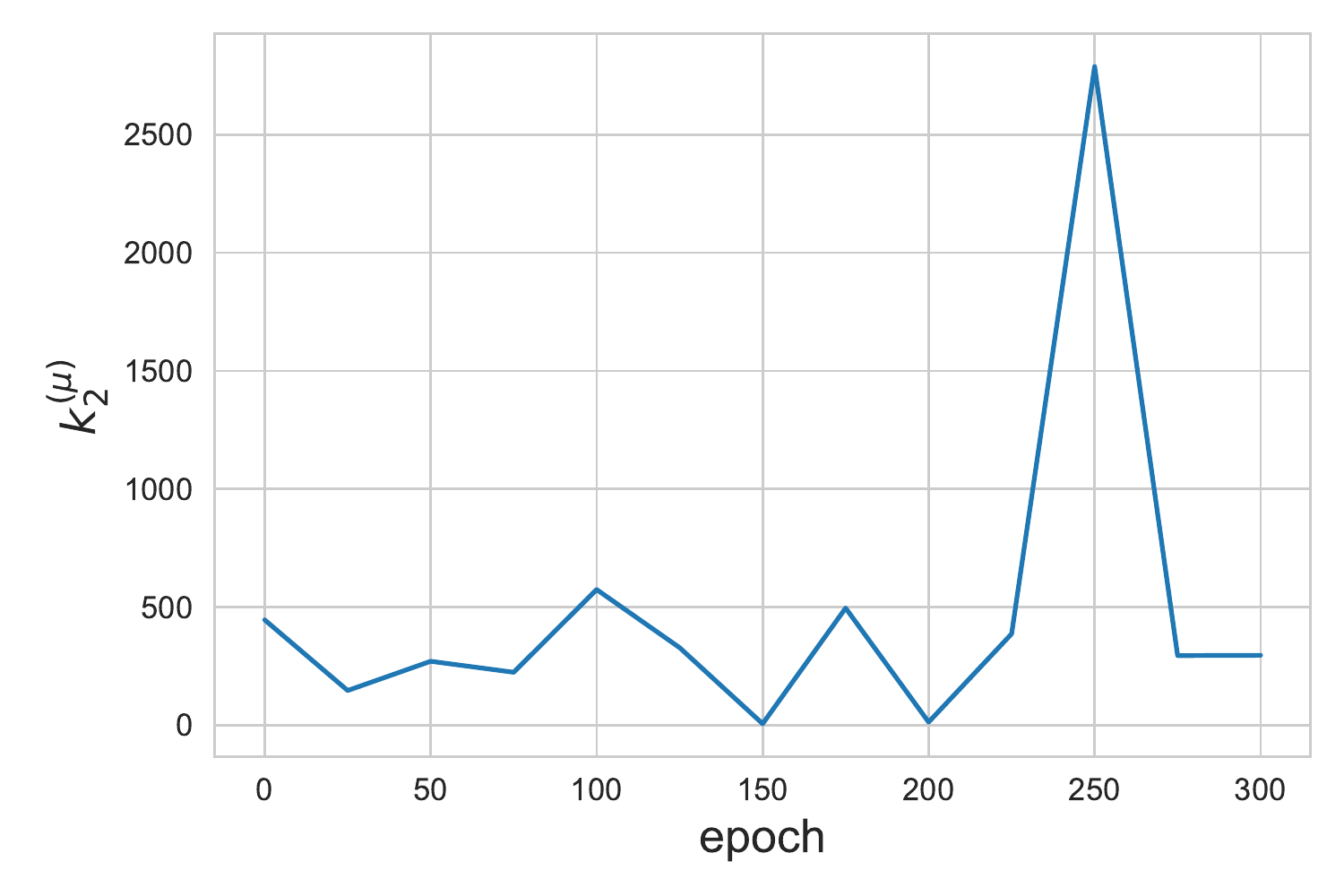}
     \subcaption{$m_2^{(\mu)}$, Resnet on CIFAR100}
    \end{subfigure}
    \begin{subfigure}{0.3\linewidth}
     \centering
     \includegraphics[width=\linewidth]{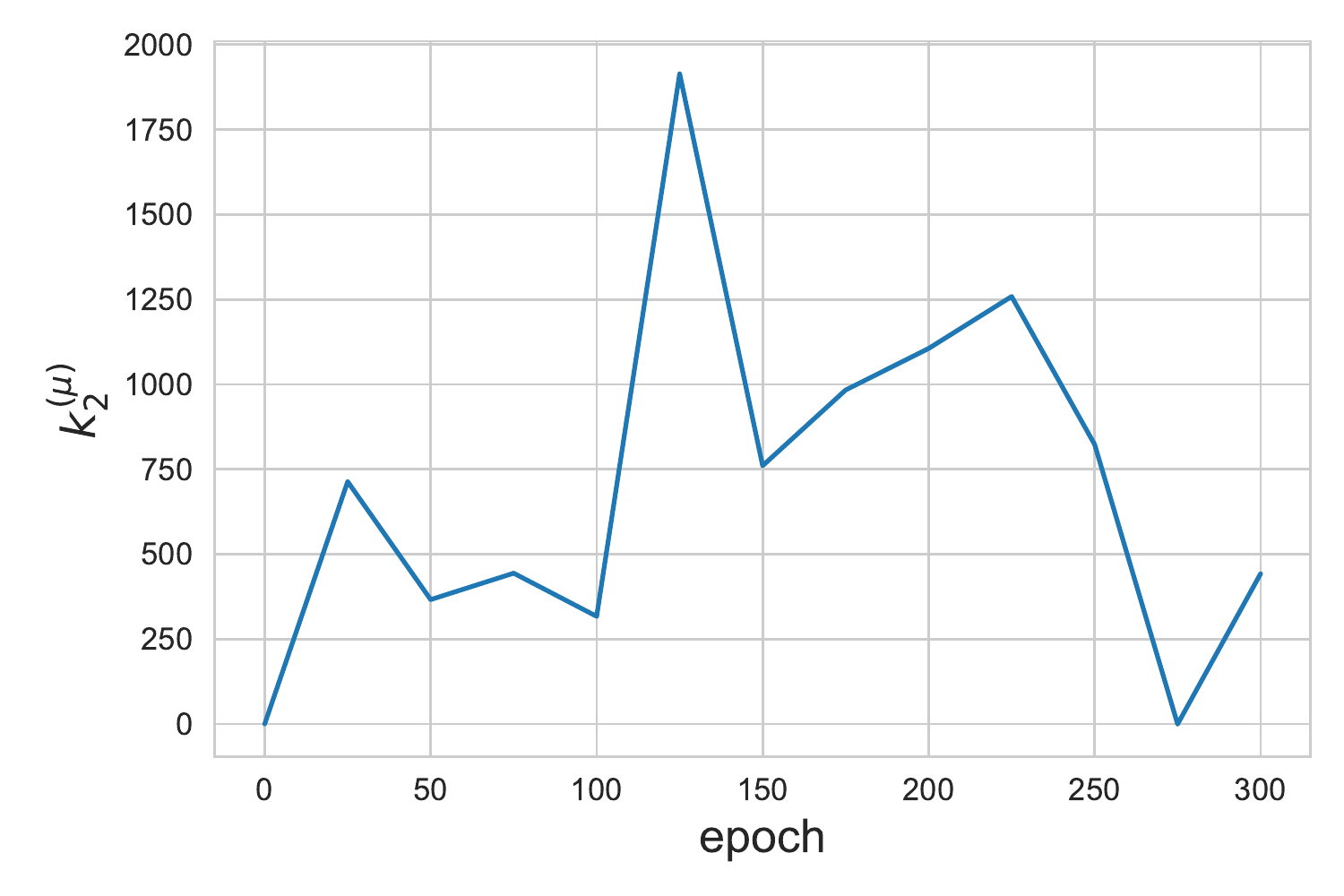}
     \subcaption{$m_2^{(\mu)}$, VGG16 on CIFAR100}
    \end{subfigure}
    \begin{subfigure}{0.3\linewidth}
     \centering
     \includegraphics[width=\linewidth]{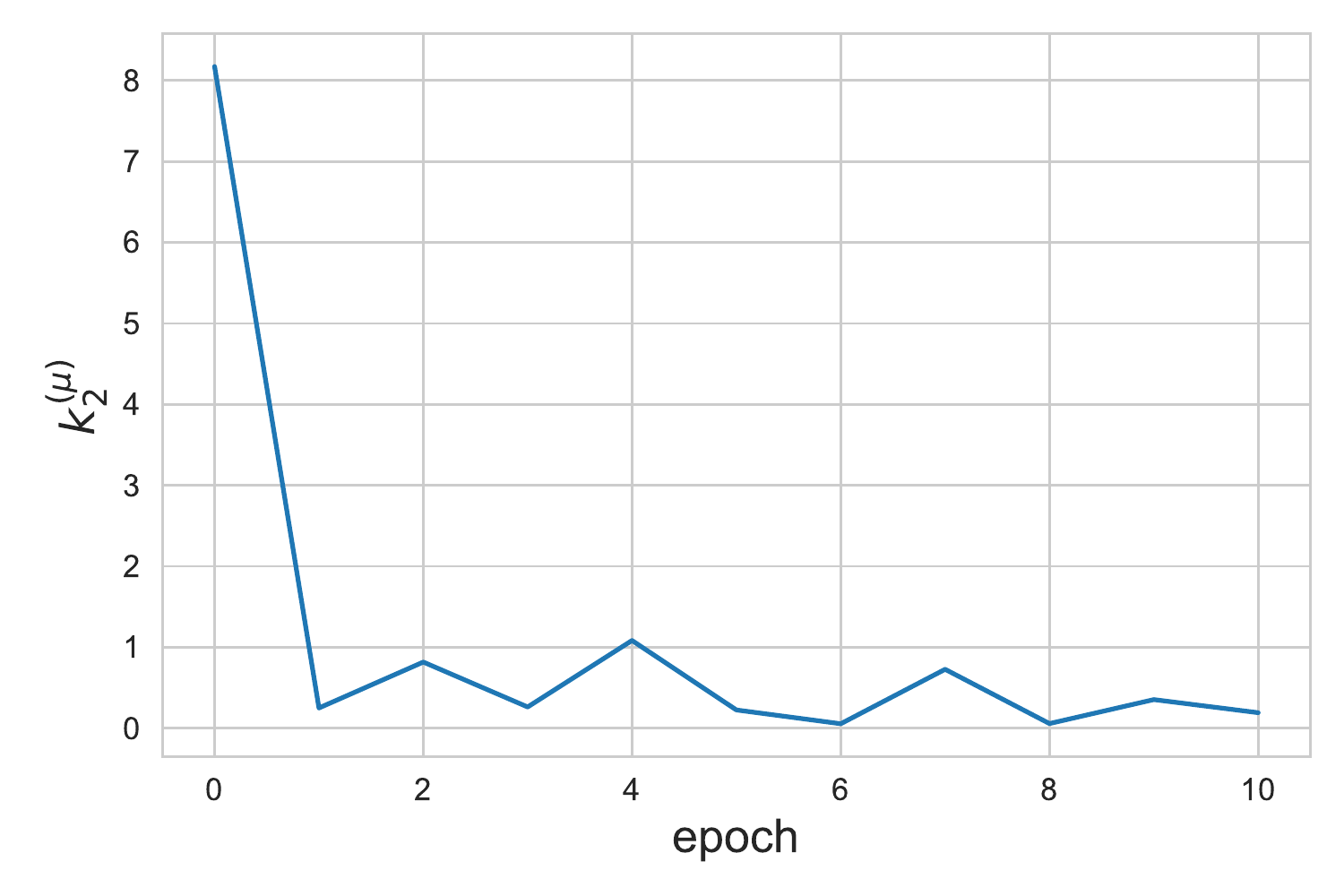}
     \subcaption{$m_2^{(\mu)}$, MLP on MNIST}
    \end{subfigure}
    
    \begin{subfigure}{0.3\linewidth}
     \centering
     \includegraphics[width=\linewidth]{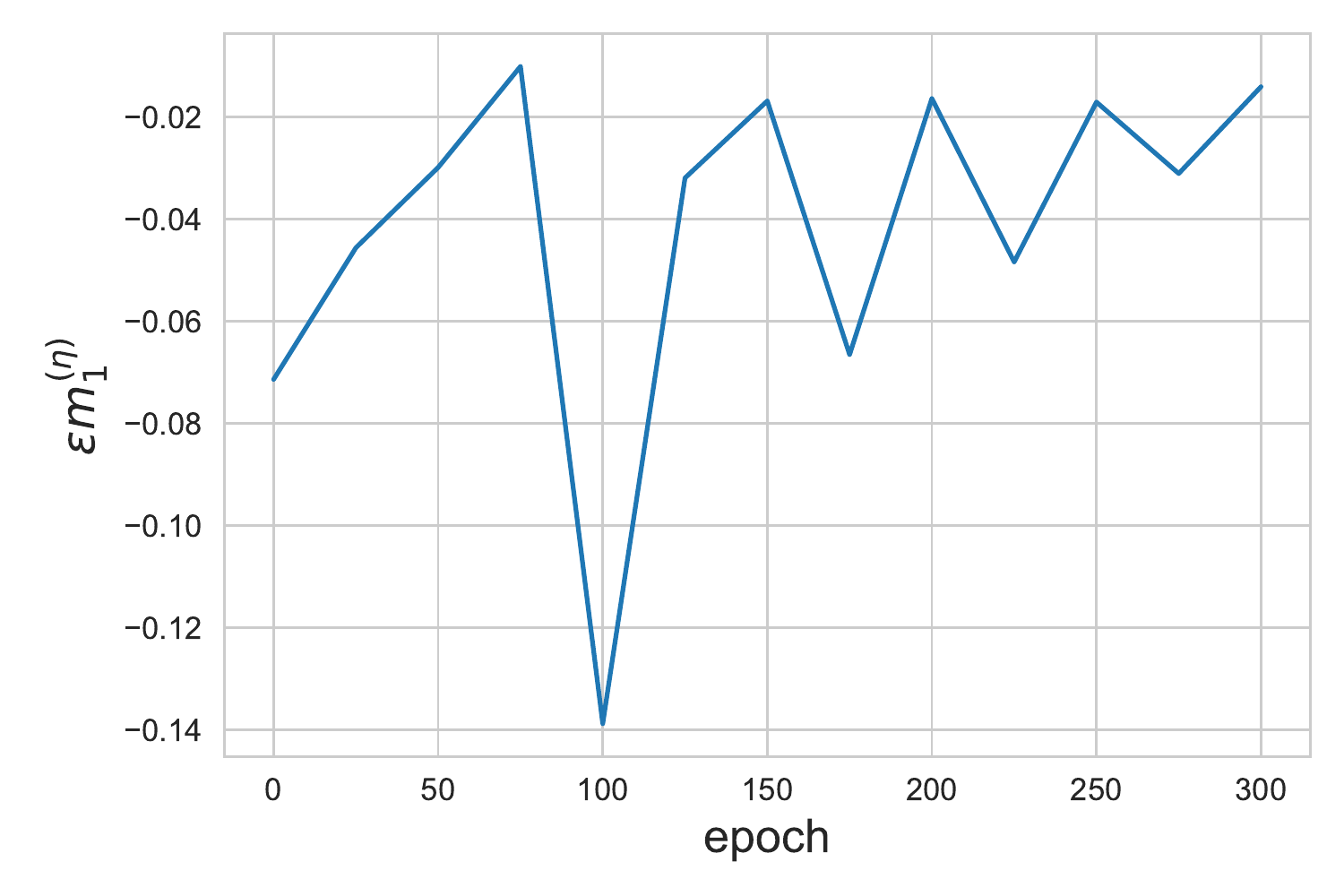}
     \subcaption{$\epsilon m_1^{(\eta)}$, Resnet on CIFAR100}
    \end{subfigure}
    \begin{subfigure}{0.3\linewidth}
     \centering
     \includegraphics[width=\linewidth]{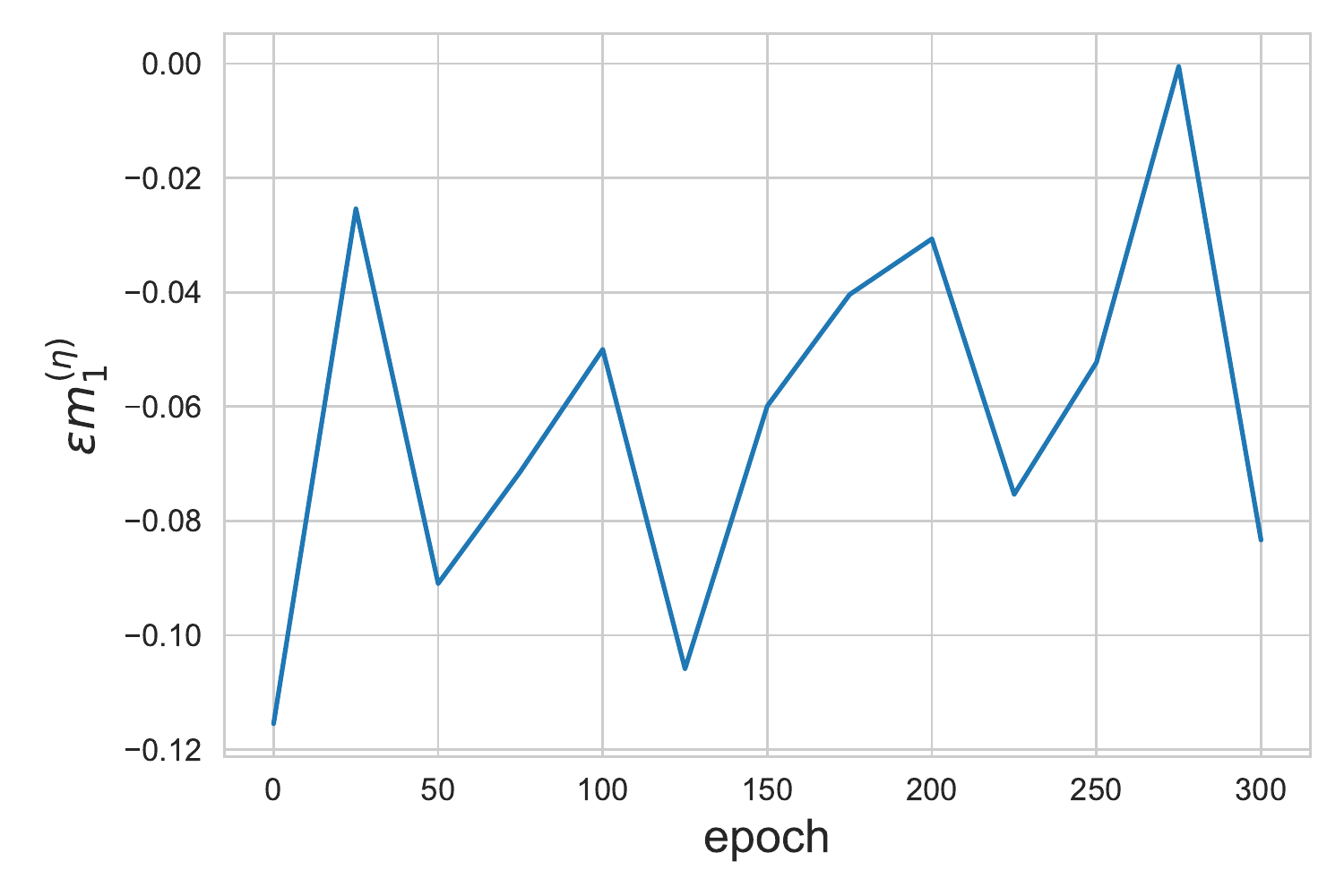}
     \subcaption{$\epsilon m_1^{(\eta)}$, VGG16 on CIFAR100}
    \end{subfigure}
    \begin{subfigure}{0.3\linewidth}
     \centering
     \includegraphics[width=\linewidth]{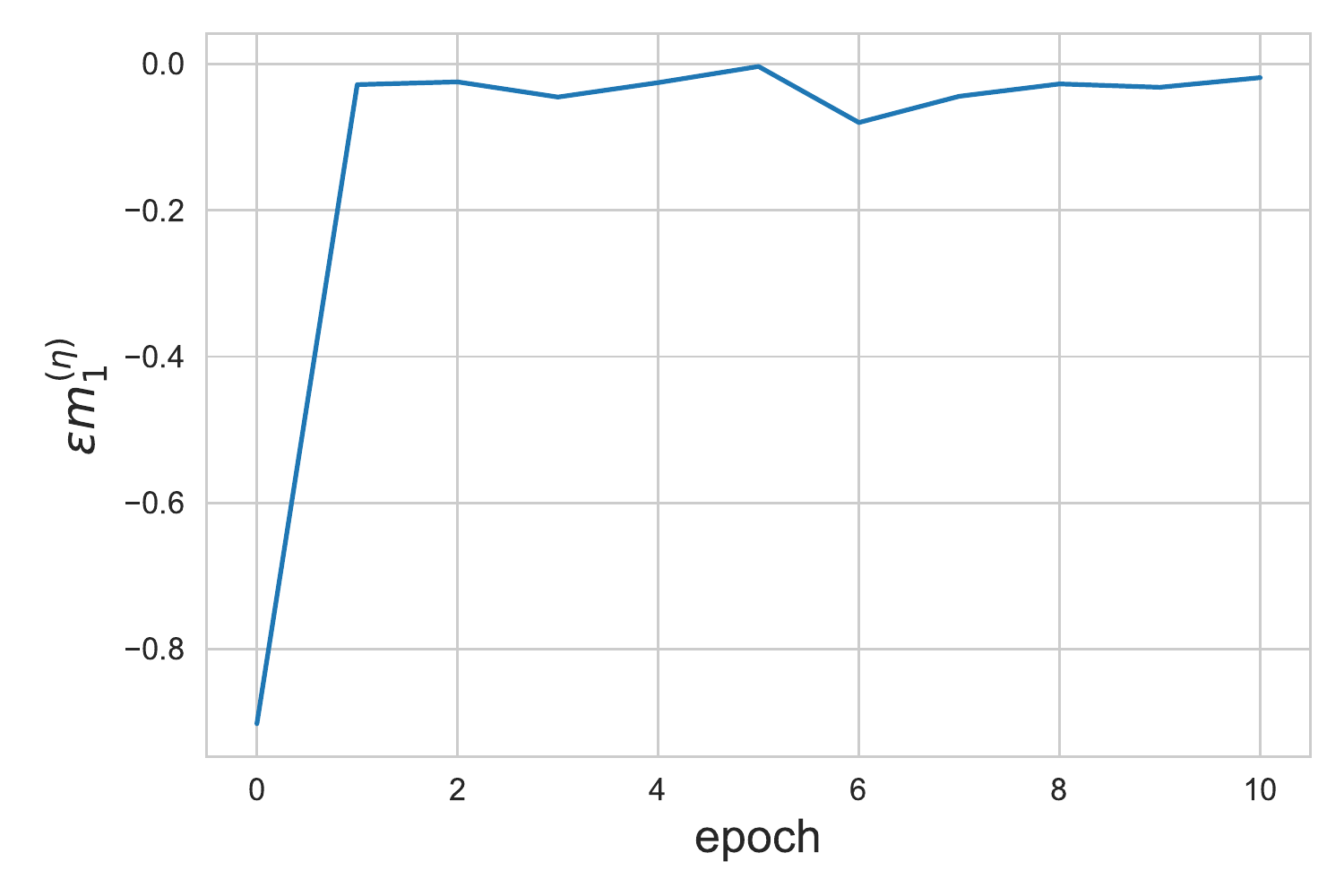}
     \subcaption{$\epsilon m_1^{(\eta)}$, MLP on MNIST}
    \end{subfigure}
    
    \begin{subfigure}{0.3\linewidth}
     \centering
     \includegraphics[width=\linewidth]{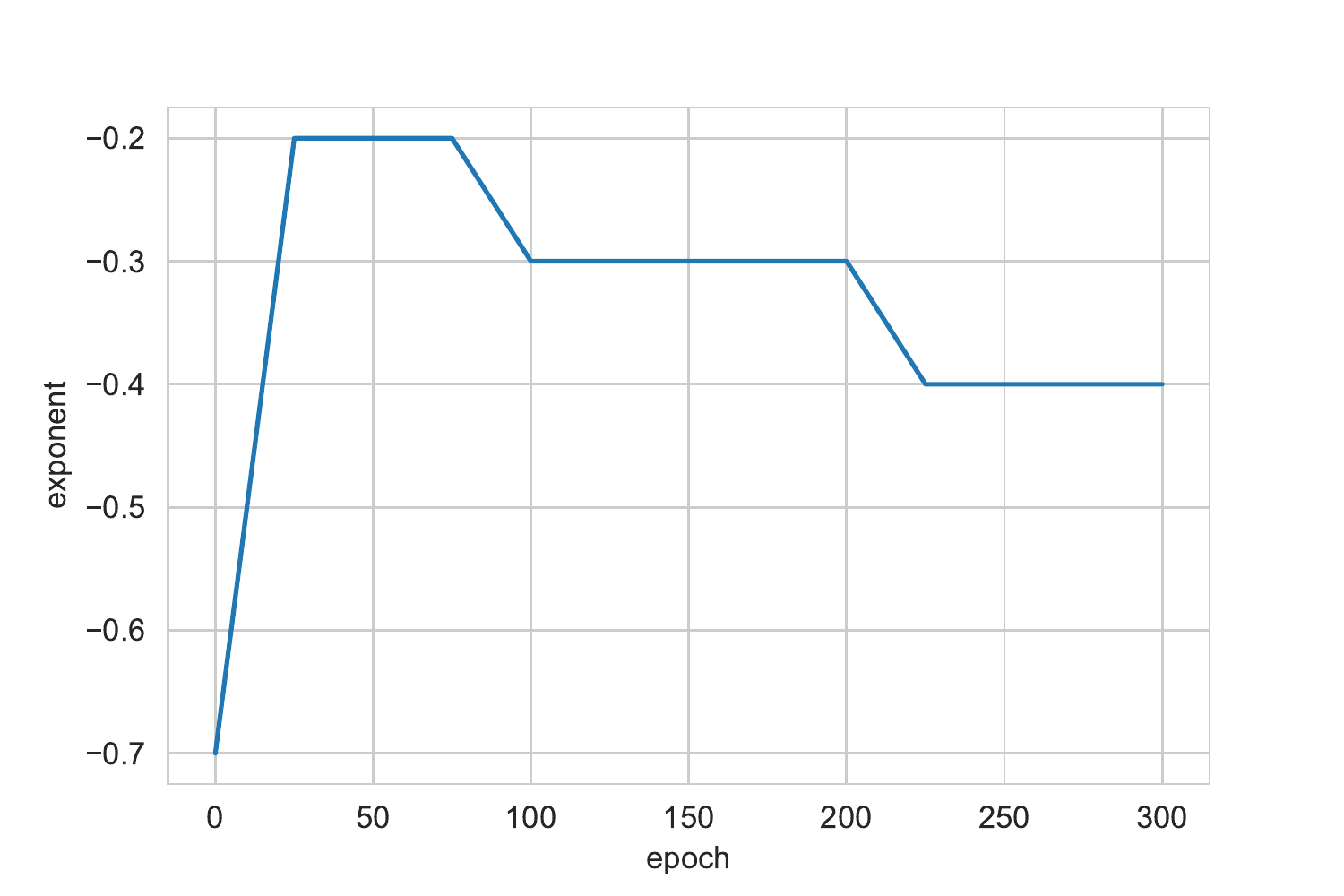}
     \subcaption{Exponent $\upsilon$, Resnet on CIFAR100}
    \end{subfigure}
    \begin{subfigure}{0.3\linewidth}
     \centering
     \includegraphics[width=\linewidth]{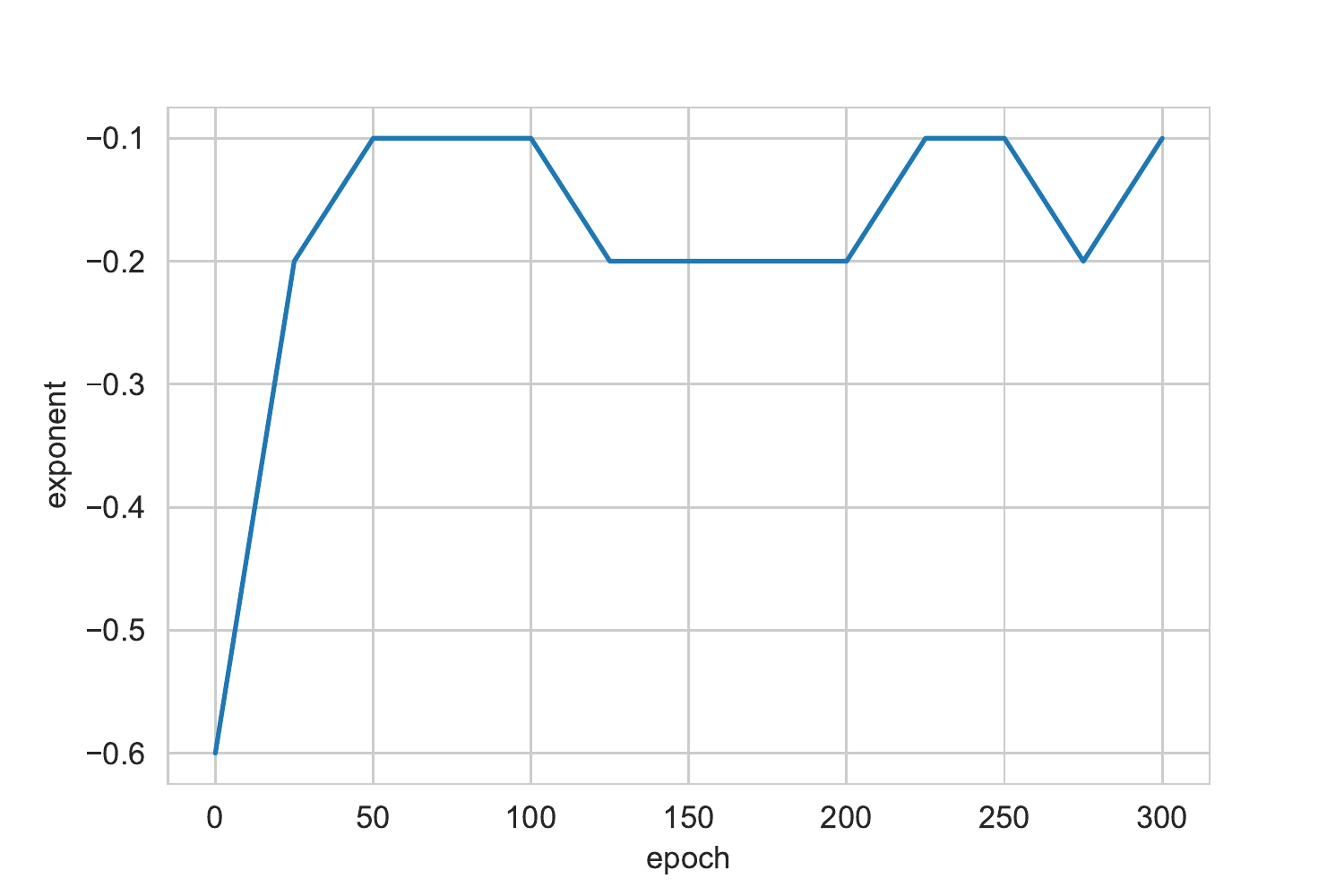}
     \subcaption{Exponent $\upsilon$, VGG16 on CIFAR100}
    \end{subfigure}
    \begin{subfigure}{0.3\linewidth}
     \centering
     \includegraphics[width=\linewidth]{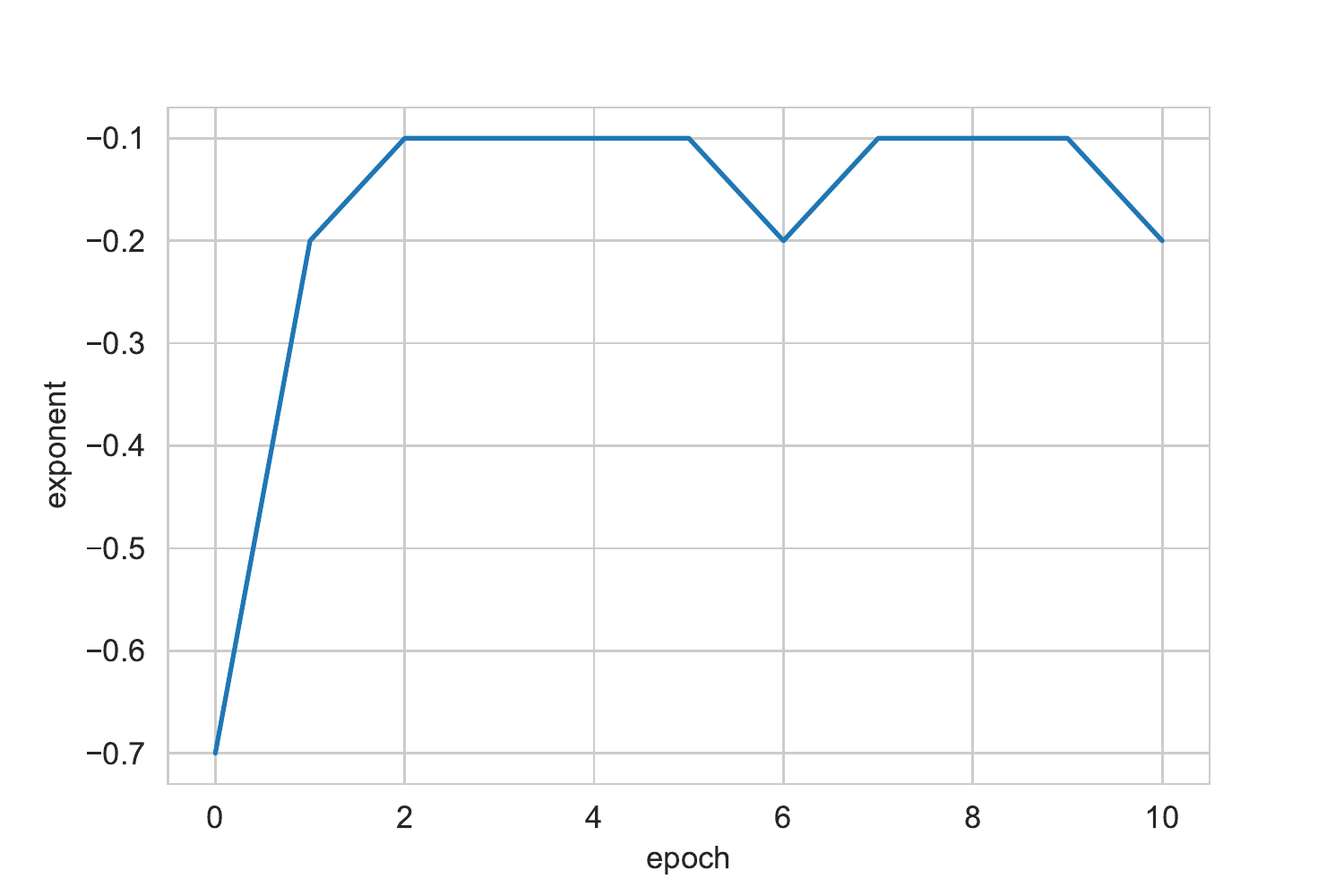}
     \subcaption{Exponent $\upsilon$, MLP on MNIST}
    \end{subfigure}    
    
    \begin{subfigure}{0.3\linewidth}
     \centering
     \includegraphics[width=\linewidth]{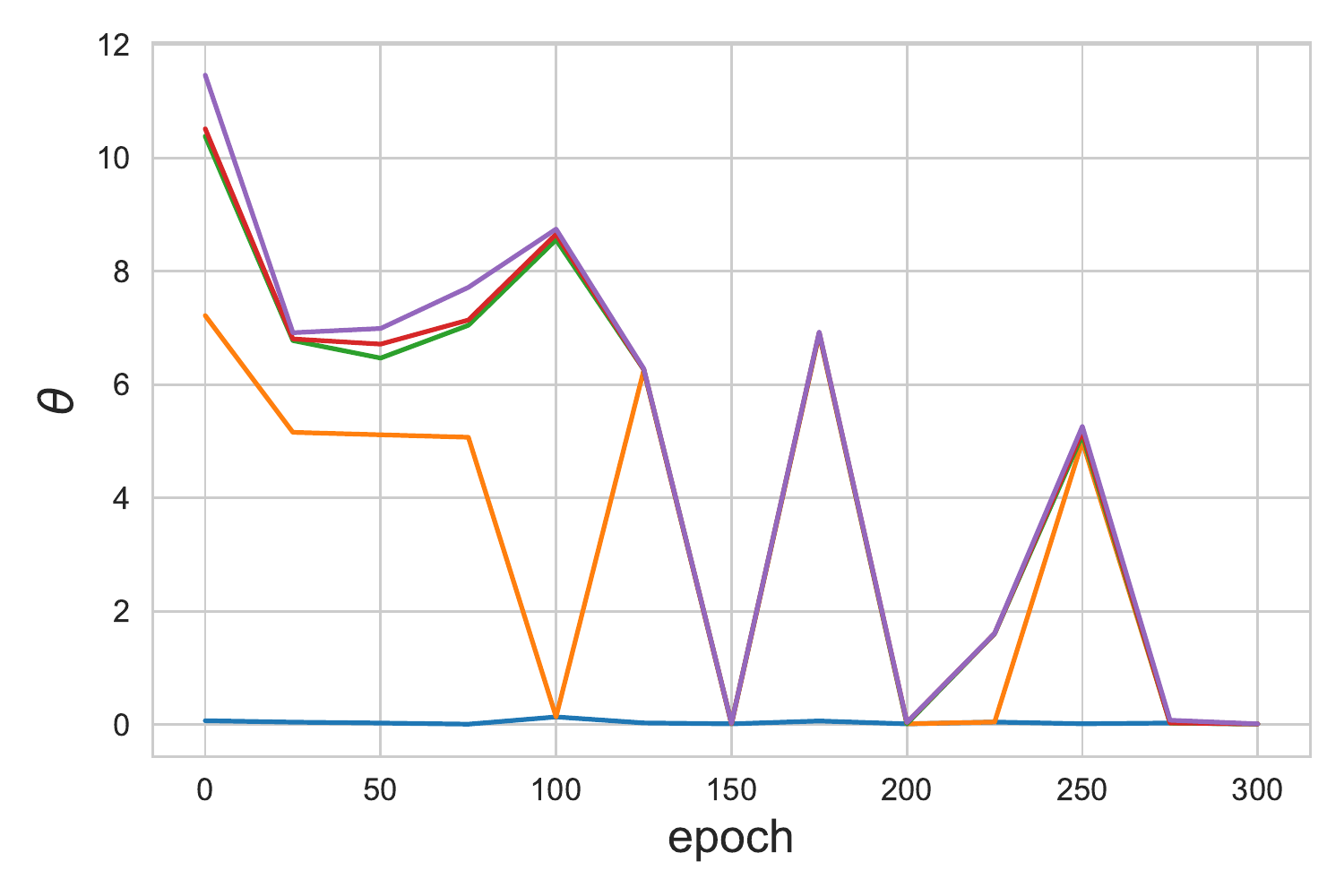}
     \subcaption{$\theta_1,\ldots, \theta_5$, Resnet on CIFAR100}
    \end{subfigure}
    \begin{subfigure}{0.3\linewidth}
     \centering
     \includegraphics[width=\linewidth]{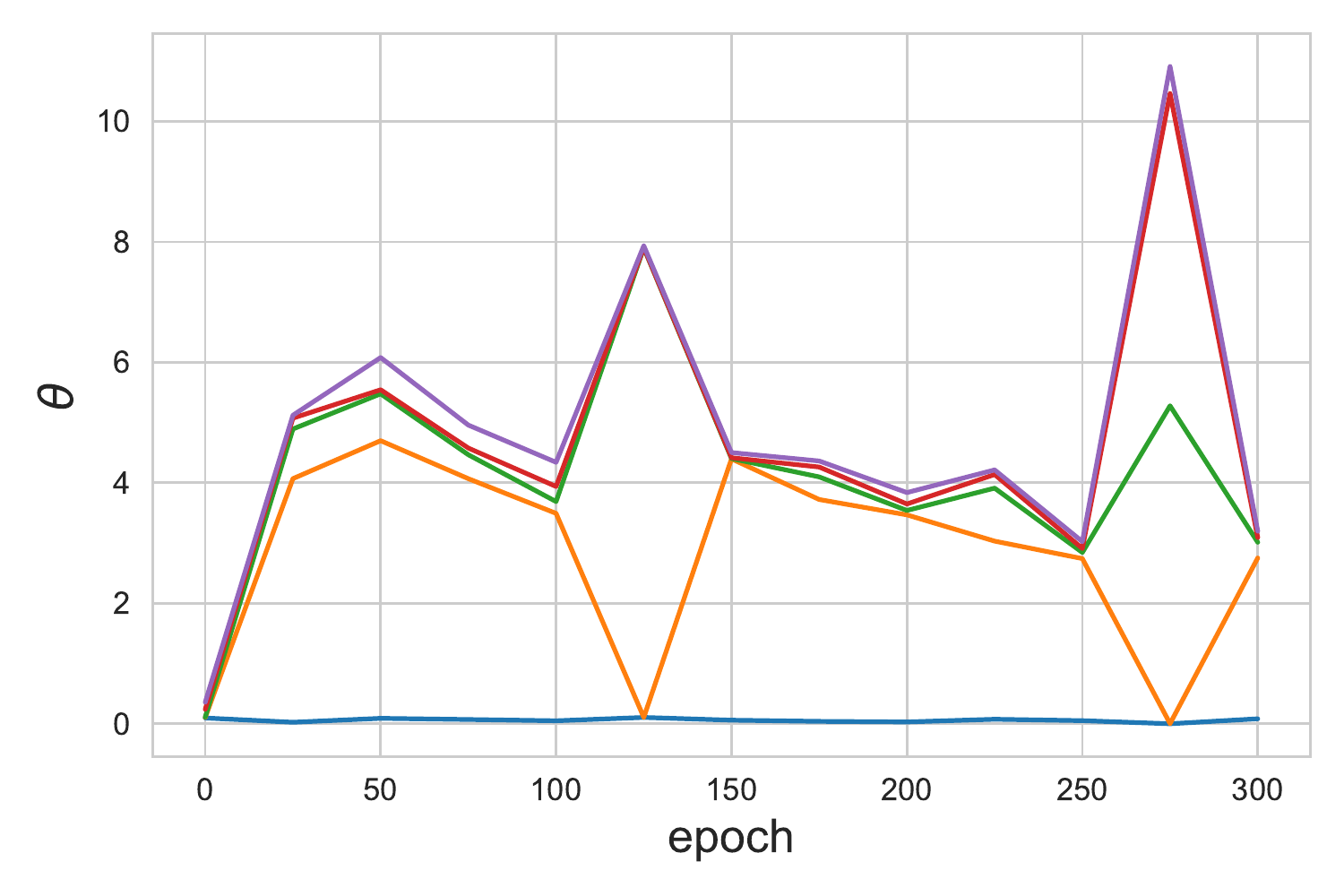}
     \subcaption{$\theta_1,\ldots, \theta_5$, VGG16 on CIFAR100}
    \end{subfigure}
    \begin{subfigure}{0.3\linewidth}
     \centering
     \includegraphics[width=\linewidth]{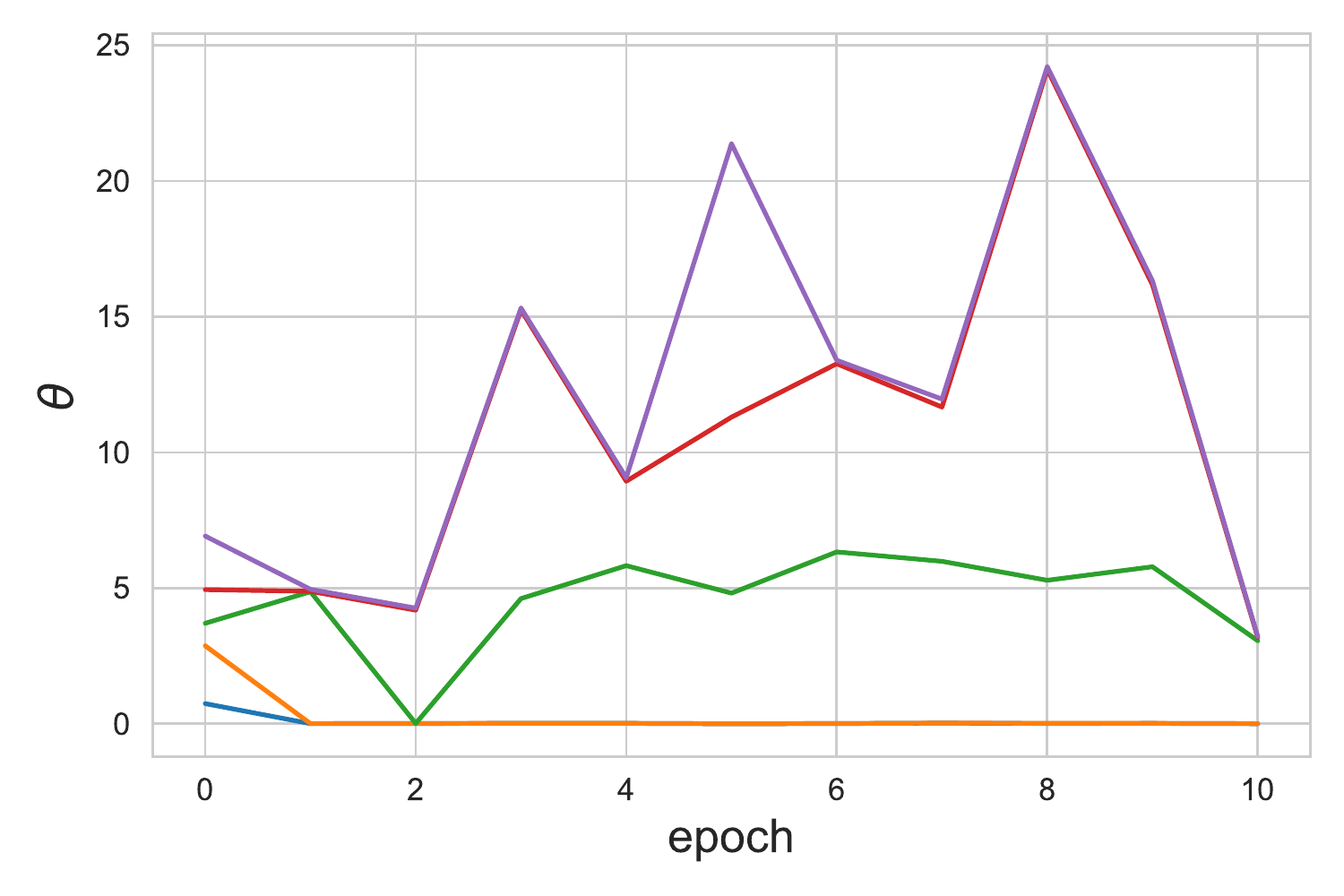}
     \subcaption{$\theta_1,\ldots, \theta_5$, MLP on MNIST}
    \end{subfigure}
    \centering
    \caption{The blue lines are parametric power law fits of the form (\ref{eq:omega_fit_form}). }
    \label{fig:outlier_fit_summary_params}
\end{figure}


\subsection{Justification and motivation of QUE}\label{subsec:que_just}
Work on random matrix universality has shown that a \emph{local law} is the key ingredient in the establishing universal local statistics of  eigenvalues \cite{erdHos2012bulk, erdos2017dynamical} and universal delocalisation of eigenvectors \cite{bourgade2017eigenvector}.
There are several forms of local law, but all make control, with high probability, the error between the (random) matrix Green's function $G(z) = (z - X)^{-1}$ and certain deterministic equivalents.
In all cases we use the set 
\begin{align}
    \vec{S} = \left\{E + i\eta \in \C \mid |E| \leq \omega^{-1}, ~ N^{-1 + \omega} \leq \eta \leq \omega^{-1}\right\}
\end{align}
for $\omega\in(0, 1)$ and the local law statements holds for all (large) $D>0$ and (small) $\xi > 0$ and for all large enough $N$.
The \emph{averaged local law} states:
\begin{align}
   \sup_{z\in\vec{S}} \P\left(\left|\frac{1}{N}\Tr G(z) - g_{\mu}(z)\right| > N^{\xi}\left(\frac{1}{N\eta} + \sqrtsign{\frac{\Im g_{\mu}(z)}{N\eta}}\right)\right) \leq N^{-D}.
\end{align}
The \emph{isotropic local law} states:
\begin{align}\label{eq:iso_local_law}
    \sup_{\|\vec{u}\|,\|\vec{v}\|  = 1, z\in\vec{S}}\P\left( |\vec{u}^TG(z)\vec{v} - g_{\mu}(z)| > N^{\xi}\left(\frac{1}{N\eta} + \sqrtsign{\frac{\Im g_{\mu}(z)}{N\eta}}\right)\right) \leq N^{-D}.
\end{align}
The \emph{anisotropic local law} states:
\begin{align}\label{eq:aniso_local_law}
    \sup_{\|\vec{u}\|,\|\vec{v}\|  = 1, z\in\vec{S}}\P\left( |\vec{u}^TG(z)\vec{v} - \vec{u}^T\Pi(z)\vec{v}| > N^{\xi}\left(\frac{1}{N\eta} + \sqrtsign{\frac{\Im g_{\mu}(z)}{N\eta}}\right)\right) \leq N^{-D}
\end{align}
where $\Pi(\cdot)$ is an $N\times N$ deterministic matrix function on $\mathbb{C}$.
The \emph{entrywise local law} states:
\begin{align}\label{eq:entry_local_law}
    \sup_{z\in\vec{S}, 1\leq i,j\leq N}\P\left( |G_{ij}(z) - \Pi_{ij}(z)| > N^{\xi}\left(\frac{1}{N\eta} + \sqrtsign{\frac{\Im g_{\mu}(z)}{N\eta}}\right)\right) \leq N^{-D}.
\end{align}

The anisotropic local law is a stronger version of the entrywise local law.
The anisotropic local law is a more general version of the isotropic local law, which can be recovered in the isotropic case by taking $\Pi = g_{\mu} I$.
The entrywise local law can also be applied in the isotropic case by taking $\Pi = g_{\mu} I$.
The averaged local law is weaker than all of the other laws.
General Wigner matrices are known to obey isotropic local semi-circle laws \cite{erdHos2017universality}.
Anisotropic local laws are known for general deformations of Wigner matrices and general covariance matrices \cite{knowles2017anisotropic} as well as quite general classes of correlated random matrices \cite{erdHos2019random}.

\medskip
As mentioned above, quantum unique ergodicity was proved for general Wigner matrices in \cite{bourgade2017eigenvector}.
It appears that the key ingredient in the proof of QUE (\ref{eq:que_def}) in \cite{bourgade2017eigenvector} is the isotropic local semicircle law (\ref{eq:iso_local_law}) for general Wigner matrices.
Indeed, all the intermediate results in Sections 4 of \cite{bourgade2017eigenvector} take only (\ref{eq:iso_local_law}) and general facts about the Dyson Brownian Motion eigenvector flow given by \begin{align}
    d\lambda_k &= \frac{dB_{kk}}{\sqrtsign{N}} + \left(\frac{1}{N}\sum_{\ell\neq k} \frac{1}{\lambda_k - \lambda_{\ell}}\right)dt,\\
    du_k &= \frac{1}{\sqrtsign{N}}\sum_{\ell\neq k}\frac{dB_{kl}}{\lambda_k - \lambda_{\ell}} u_{\ell} - \frac{1}{2N} \sum_{\ell\neq k}\frac{dt}{(\lambda_k - \lambda_{\ell})^2} u_{k}.
\end{align}
This can be generalised to 
\begin{align}
    d\lambda_k &= \frac{dB_{kk}}{\sqrtsign{N}} + \left(-V(\lambda_i) + \frac{1}{N}\sum_{\ell\neq k} \frac{1}{\lambda_k - \lambda_{\ell}}\right)dt,\\
    du_k &= \frac{1}{\sqrtsign{N}}\sum_{\ell\neq k}\frac{dB_{kl}}{\lambda_k - \lambda_{\ell}} u_{\ell} - \frac{1}{2N} \sum_{\ell\neq k}\frac{dt}{(\lambda_k - \lambda_{\ell})^2} u_{k}.
\end{align}
where $V$ is a potential function.
Note that the eigenvector dynamics are unaffected by the presence of the potential $V$, so we expect to be able to generalise the proof of \cite{knowles2017anisotropic} to any random matrix ensemble with an isotropic local law by defining the potential $V$ so that the invariant ensemble with distribution $Z^{-1}e^{-N \Tr V(X)}dX$ has equilibrium measure $\mu$ ($Z$ is a normalisation constant).
We show how to construct such a $V$ from $\mu$ in Section \ref{sec:inv}.



The arguments so far suffice to justify a generalisation of the ``dynamical step'' in the arguments of \cite{bourgade2017eigenvector}, so it remains to consider the ``comparison step''. The dynamical step establishes QUE for the matrix ensemble with a small Gaussian perturbation, but in the comparison step one must establish that the perturbation can be removed without breaking QUE.
To our knowledge no such argument has been articulated beyond generalized Wigner matrices, with the independence of entries and comparable scale of variances being critical to the arguments given by \cite{bourgade2017eigenvector}.
Our guiding intuition is that QUE of the form (\ref{eq:que_def}) is a general property of random matrices and can reasonably be expected to hold in most, if not all, cases in which there is a local law and universal local eigenvalue statistics are observed.
At present, we are not able to state a precise result establishing QUE in sufficient generality to be relevant for this work, so we shall take it as an assumption.
\begin{assump}\label{thm:que_gaussian}
    Let $X$ be an ensemble of $N\times N$ real symmetric random matrices. Assume that $X$ admits a limiting spectral measure is $\mu$ with Stieljtes transform $m$. Suppose that the isotropic local law (\ref{eq:iso_local_law}) holds for $X$ with $\mu$.
    Then there is some set $\mathbb{T}_N \subset [N]$ with $|\mathbb{T}_N^c| = o(N)$ such that with $|I|=n$, for any polynomial $P$ in $n$ indeterminates, there exists some $\epsilon(P) > 0$ such that for large enough $N$ we have
    \begin{align}
        \sup_{\substack{I \subset \mathbb{T}_N, |I|=n,\\ \|\vec{q}\| = 1}} \left| \E\left(P\left(\left(N(\vec{q}^T\vec{u}_k)^2\right)_{k\in I}\right)\right) - \E\left(P\left(\left(|\mathcal{N}_j|^2\right)_{k\in I}\right)\right)\right| \leq N^{-\epsilon}.
    \end{align}
\end{assump}
Note that the isotropic local law in Assumption \ref{thm:que_gaussian} can be obtained from the weaker entrywise law (\ref{eq:entry_local_law}) as in Theorem 2.14 of \cite{alex2014isotropic} provided there exists a $C>0$ such that $\E|X_{ij}|^2 \leq CN^{-1}$ for all $i,j$ and there exists $C_p > 0$ such that $\E |\sqrtsign{N}X_{ij}|^p \leq C_p$ for all $i,j$ and integer $p>0$.

\begin{rem}
In \cite{bourgade2017eigenvector} the restriction $I\subset\TN$ is given for the explicit set 
\begin{align}
\mathbb{T}_N = [N] \backslash \{(N^{1/4}, N^{1-\delta}) \cup (N - N^{1-\delta}, N - N^{1/4})\}
\end{align}
for some $0< \delta < 1$.
In the case of generalised Wigner matrices, this restriction on the indices has since been shown to be unnecessary \cite{benigni2022optimal, benigni2020eigenvectors,benigni2021fluctuations,benigni2021fluctuations}.
In our context, we could simply take as an assumption all results holds with $\mathbb{T}_N = [N]$, however our results can in fact be proved using only the above assumption that $|\mathbb{T}_N^c| = o(N)$, so we shall retain this weaker form of the assumptions.
\end{rem}

This section is not intended to prove QUE from explicit known properties of deep neural network Hessians, but rather to provide justification for it as a reasonable modeling assumption in the noise model for Hessians defined in section \ref{subsec:hess_model}.
We have shown how QUE can be obtained from an isotropic (or entrywise) local law beyond the Wigner case.
It is important to go beyond Wigner or any other standard random matrix ensemble, as we have observed above that the standard macroscopic spectral densities of random matrix theory such as the semicircle law are not observed in practice.
That said, we are not aware of any results establishing QUE in the more general case of anistropic local laws, and this appears to be a very significant technical challenge.
We must finally address why a local law assumption, isotropic or otherwise, may be reasonable for the noise matrix $X$ in our Hessian model.
Over the last decade or so, universal local statistics of random matrices in the form of $k$-point correlation function on the appropriate microscopic scale have been established for a litany of random matrix ensembles.
An immediate consequence of such results is that, on the scale of unit mean eigenvalue spacing, Wigner's surmise holds, depending only on the symmetry class (orthogonal, unitary or symplecitic).
As with the QUE proof discussed above, the key ingredient in these proofs, as part of the ``three step strategy'' \cite{erdos2017dynamical}, is establishing a local law.
The theoretical picture that has emerged is that when universal local eigenvalue statistics are observed in random matrices, it is due to the mechanism of short time scale relaxation of local statistics under Dyson Brownian Motion made possible by a local law.
It has been observed that universal local eigenvalue statistics do indeed appear to be present in the Hessian of real, albeit quite small, deep neural networks \cite{baskerville2022appearance}.
Given all of this context, we propose that a local law assumption of some kind is reasonable  for deep neural network Hessians and not particularly restrictive.
As we have shown, if we are willing to make the genuinely restrictive assumption of an isotropic local law for the Hessian noise model, then QUE follows.
However an anistropic local law is arguably more plausible as we expect deep neural networks Hessians to contain a good deal of dependence between entries, and such correlations are know to generically lead to anisotropic local laws \cite{erdHos2019random}.

\subsection{Motivation of true Hessian structure}
In this section we revisit and motivate the assumptions made about the Hessian in Section \ref{subsec:hess_model}.
Firstly note that one can always define $A = \E H_{\text{batch}}$ and it is natural then to associate $A$ with the true Hessian $H_{\text{true}}$.
In light of (\ref{eq:batch_hessian_def}), it is natural to expect some fixed form of the law for $H_{\text{batch}} - A$ for any batch size, but with an overall scaling $\sbf$, which must naturally be decreasing in $b$.
Next we address the assumptions made about the spectrum of $A$.
The first assumption one might think to make is that $A$ has fixed rank relative to $N$, with spectrum consisting only of the spikes $\theta_i, \theta_j'$.
Behind such an assumption is the intuition that the data distribution does not depend on $N$ and so, in the over-parametrised limit $N\rightarrow\infty$, the overwhelming majority of directions in weight space are unimportant.
The form we take for $A$ in the above is a strict generalisation of the fixed rank assumption; $A$ still has a fixed number of spiked directions, but the parameter $\epsilon$ controls the rank of $A$.
Since any experimental investigation is necessarily limited to $N<\infty$, the generalisation to $N>0$ is particularly important.
Compact support of the measures $\mu$ and $\eta$ is consistent with experimental observations of deep neural network Hessian spectra.

\subsection{The batch size scaling}
Our experimental results considered $\sbf = b^{-1/2}$, a choice which we now justify.
From (\ref{eq:batch_hessian_def}) we have \begin{align}
    H_{\text{batch}} = \frac{1}{b}\sum_{i=1}^b\left(H_{\text{true}} + X^{(i)}\right)
\end{align}
where $X^{(i)}$ are i.i.d. samples from the law of $X$.
Suppose that the entries $X_{ij}$ were Gaussian, with $\text{Cov}(X_{ij}, X_{kl}) = \Sigma_{ij,kl}$.
Then $Z = X^{(p)}_{ij} + X^{(q)}_{ij}$ has 
\begin{align}
    \text{Cov}(Z_{ij}, Z_{kl}) = \E X_{ij}^{(p)}X_{kl}^{(p)} + \E X_{ij}^{(q)}X_{kl}^{(q)} -  \E X_{ij}^{(p)}\E X_{kl}^{(p)} - \E X_{ij}^{(q)}\E X_{kl}^{(q)}  = 2\Sigma_{ij, kl}.
\end{align}
In the case of centred $X$, one then obtains\begin{align}
    \frac{1}{b}\sum_{i=1}^b X^{(i)} \overset{d}{=} b^{-1/2}X.
\end{align}
Note that this does not quite match the case described in Section \ref{subsec:hess_model}, since we do not there assume $\E X = 0$, however we take this a rough justification for $\sbf = b^{-1/2}$ as an ansatz.
Moreover, numerical experimentation with $\sbf = b^{-q}$ for values of $q>0$ shows that $q = 1/2$ gives roughly the best fit to the data.
\FloatBarrier
\section{Spectral free addition from QUE}\label{sec:que}

\subsection{Intermediate results on QUE}\label{sec:fourier}
This section establishes some intermediate results that follow from assuming QUE for the eigenvectors of a matrix. They will be crucial for our application in the following section. 

\begin{lem}\label{lem:que_preserved_under_rotation}
    Consider a real orthogonal $N\times N$ matrix $U$ with rows $\{\vec{u}_i^T\}_{i=1}^N$. Assume that $\{\vec{u}_i\}_{i=1}^N$ are the eigenvectors of a real random symmetric matrix with QUE. Let $P$ be a fixed $N\times N$ real orthogonal matrix. Let $V = UP$ and denote the rows of $V$ by $\{\vec{v}_i^T\}_{i=1}^N$. Then $\{\vec{v}_i\}_{i=1}^N$ also satisfy QUE.
\end{lem}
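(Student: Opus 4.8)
The plan is to reduce the claim directly to Assumption~\ref{thm:que_gaussian} (QUE for $\{\vec{u}_i\}$) by an exact change of variables on the unit sphere. First I would record the algebraic fact underlying everything: since $V = UP$, the $i$-th row of $V$ is $\vec{v}_i^T = \vec{u}_i^T P$, equivalently $\vec{v}_i = P^T\vec{u}_i$. Hence for any deterministic unit vector $\vec{q}\in\R^N$ we have $\vec{q}^T\vec{v}_k = \vec{q}^T P^T\vec{u}_k = (P\vec{q})^T\vec{u}_k$, and $\|P\vec{q}\| = \|\vec{q}\| = 1$ since $P$ is orthogonal. Writing $\tilde{\vec{q}} = P\vec{q}$ this gives
\begin{align}
N\left(\vec{q}^T\vec{v}_k\right)^2 = N\left(\tilde{\vec{q}}^T\vec{u}_k\right)^2 \qquad\text{for every } k\in[N].
\end{align}

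Next I would fix $n$, an index set $I\subset\TN$ with $|I| = n$, and a polynomial $\mathcal{Q}$ in $n$ indeterminates. Because $P$ is deterministic, the display above is a pointwise identity of random variables for each fixed $\vec{q}$, so
\begin{align}
\E\left(\mathcal{Q}\left(\left(N(\vec{q}^T\vec{v}_k)^2\right)_{k\in I}\right)\right) = \E\left(\mathcal{Q}\left(\left(N(\tilde{\vec{q}}^T\vec{u}_k)^2\right)_{k\in I}\right)\right).
\end{align}
Since $\vec{q}\mapsto P\vec{q}$ is a bijection of $\{\vec{q} : \|\vec{q}\| = 1\}$ onto itself, taking the supremum over unit $\vec{q}$ and over $I\subset\TN$ with $|I| = n$ on the left equals the same supremum on the right with $\tilde{\vec{q}}$ as the free variable; subtracting the common Gaussian term $\E(\mathcal{Q}((|\mathcal{N}_j|^2)_{j=1}^n))$ then identifies this with exactly the quantity that Assumption~\ref{thm:que_gaussian} bounds by $N^{-\epsilon(\mathcal{Q})}$ for large $N$. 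The very same set $\TN$ (with $|\TN^c| = o(N)$) therefore works for $\{\vec{v}_i\}$, establishing QUE for $\{\vec{v}_i\}_{i=1}^N$.

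I do not expect a genuine obstacle: the points needing care are purely bookkeeping — that $P$ acts on the right of $U$ so the row-index set $I\subset\TN$ is untouched while only the test vector $\vec{q}$ is rotated, and that the square in $(\vec{q}^T\vec{v}_k)^2$ renders the sign/degeneracy ambiguity of eigenvectors immaterial. For internal consistency one can also note that $\{\vec{v}_i\}$ are the eigenvectors of the real symmetric matrix $P^TXP$ (if $X = \sum_i\lambda_i\vec{u}_i\vec{u}_i^T$ then $P^TXP = \sum_i\lambda_i\vec{v}_i\vec{v}_i^T$), but this is not needed for the estimate.
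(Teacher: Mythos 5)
Your proposal is correct and is essentially the paper's own argument: both rewrite $\vec{q}^T\vec{v}_k = (P\vec{q})^T\vec{u}_k$, use orthogonality of $P$ to see that $P\vec{q}$ is again a unit vector, and transfer the QUE bound via the supremum over unit test vectors, with the row-index set unaffected.
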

\begin{proof}
   Take any unit vector $\vec{q}$, then for any $k=1,\ldots, N$
   \begin{align*}
       \vec{q}^T \vec{v}_k = \sum_{j}q_jV_{kj} &= \sum_{j, l}q_j U_{kl}P_{lj} = (P\vec{q})^T \vec{u}_k.
   \end{align*}
   But $\|P\vec{q}\|_2=\|\vec{q}\|_2=1$ since $P$ is orthogonal, so the statement of QUE for $\{\vec{u}_i\}_{i=1}^N$   transfers directly to $\{\vec{v}_i\}_{i=1}^N$ thanks to the supremum of all unit $\vec{q}$.
\end{proof}

\begin{lem}\label{lem:weaker_col_que}
    Consider a real orthogonal $N\times N$ matrix $U$ with rows $\{\vec{u}_i^T\}_{i=1}^N$. Assume that $\{\vec{u}_i\}_{i=1}^N$ are the eigenvectors of a real random symmetric matrix with QUE. Let $\ell_0(\vec{q}) = \sum_i \1\{q_i \neq 0\}$ count the non-zero elements of a vector with respect to a fixed orthonormal basis $\{\vec{e}_i\}_{i=1}^N$. 
    For any fixed integer $s>0$, define the set \begin{align}
            \Vs = \left\{\vec{q}\in \R^N \mid \|\vec{q}\|=1, ~ \ell_0(\vec{q})=s, ~ q_i=0 ~\forall i\in \TN^c\right\}.
    \end{align}
    Then the columns $\{\vec{u}_i'\}_{i=1}^N$ of $U$ satisfy a weaker form of QUE (for any fixed $n, s>0$):
    \begin{align}\label{eq:que_weak_def}
    \sup_{\substack{\vec{q}\in\Vs \\ }}\sup_{\substack{I \subset \TN\\ |I| = n}} \left|\E P\left(\left(N|\vec{q}^T\vec{u}_k|^2\right)_{k\in I}\right) - \E P\left(\left(|\mathcal{N}_j|^2\right)_{j=1}^m
    \right)\right| \leq N^{-\epsilon}.
\end{align}
We will denote this form of QUE as $\hQUE$.
\end{lem}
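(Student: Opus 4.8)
The plan is to reduce $\hQUE$ to a statement about joint moments of the column entries and then transport the assumed (row) QUE through the relation between columns and rows of $U$. In contrast to Lemma~\ref{lem:que_preserved_under_rotation}, where the passage $U\mapsto UP$ simply moves the test vector $\vec q\mapsto P\vec q$ and costs nothing, the passage to columns mixes the rows of $U$ and is genuinely delicate. Since $\hQUE$ is asserted for an arbitrary polynomial $P$, by linearity it suffices to treat monomials: one must show that for every tuple of non-negative integers $(a_k)_{k\in I}$ one has $\E\prod_{k\in I}\bigl(N|\vec q^T\vec u_k'|^2\bigr)^{a_k}=\prod_{k\in I}(2a_k-1)!!+O(N^{-\epsilon})$, uniformly over $\vec q\in\Vs$ and over $I\subseteq\TN$ with $|I|=n$, the right-hand side being the matching moment of an i.i.d. family $(|\mathcal N_k|^2)_{k\in I}$. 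Writing $S=\supp(\vec q)\subseteq\TN$, $|S|=s$, and using $(\vec u_k')_j=U_{jk}=(\vec u_j)_k$, we have $\vec q^T\vec u_k'=\sum_{j\in S}q_j(\vec u_j)_k=(\vec w)_k$ with $\vec w:=\sum_{j\in S}q_j\vec u_j$ a random unit vector ($\|\vec w\|=1$) lying in the span of $s$ of the eigenvectors; so the target is precisely a QUE-type delocalisation of the single vector $\vec w$ along the coordinate set $I$.

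Next I would expand $|\vec q^T\vec u_k'|^2=\sum_{j,j'\in S}q_jq_{j'}(\vec u_j)_k(\vec u_{j'})_k$ and multiply out. Because $s$, $n$ and $\deg P$ are fixed, the moment above becomes a finite linear combination — with coefficients built from the $q_j$ and multinomial factors — of expectations $\E\prod_{j\in S,\,k\in I}(\sqrt N\,U_{jk})^{e_{jk}}$ of monomials in the entries of the fixed $s\times n$ block $(U_{jk})_{j\in S,k\in I}$, where for each $k$ the column total $\sum_{j\in S}e_{jk}=2a_k$ is even. If these block moments converge, uniformly, to those of an i.i.d. standard Gaussian matrix $(G_{jk})$, then the Wick computation (each $\sum_j q_jG_{jk}\sim\mathcal N(0,1)$, independent across $k$, since $\|\vec q\|=1$) reproduces exactly $\prod_k(2a_k-1)!!$. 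So the lemma reduces to showing $(\sqrt N\,U_{jk})_{j\in S,k\in I}$ is asymptotically an i.i.d. standard Gaussian block in the finitely many moment combinations that occur.

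The monomials in which each eigenvector index $j$ occurs to an even power within every coordinate column — equivalently, products of squared entries of the block — are controlled directly by the assumed QUE. Since QUE fixes only one test vector at a time, I would extract them by applying QUE with eigenvector-index set equal to $S$ (legitimate as $S\subseteq\TN$) and with the family of unit test vectors $\vec p(\vec t)\propto\sum_{k\in I}t_k\vec e_k$: for each monomial $\prod_j x_j^{b_j}$ this controls $\E\prod_{j\in S}\bigl(N|\sum_{k\in I}t_kU_{jk}|^2\bigr)^{b_j}$ — a polynomial in $\vec t$ of fixed degree whose coefficients are exactly the block moments we want — up to $O(N^{-\epsilon})$ (times a fixed power of $\sum_k t_k^2$), and pins it to the corresponding i.i.d.-Gaussian polynomial. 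Matching coefficients of the $\vec t$-monomials (a bounded linear operation on polynomials of bounded degree, so the $N^{-\epsilon}$ error is preserved) then pins the required even-power block moments to their Gaussian values; specialising $\vec t$ to a single coordinate recovers the one-coordinate moments, and to two coordinates the ``same eigenvector, distinct coordinates'' cross moments.

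The remaining — and genuinely hard — contributions are the monomials in which some eigenvector index $j\in S$ occurs to an odd power in a coordinate column $k$ (only the column totals $2a_k$ are forced even, not the individual $e_{jk}$); these carry true cross-eigenvector factors $(\vec u_j)_k(\vec u_{j'})_k$ with $j\ne j'$ at a common coordinate, and such expectations are \emph{not} reachable from QUE against a single deterministic test vector, which only ever produces even powers of each overlap. I would dispatch them by showing they are $o$ of the main term, either by invoking the orthonormality relations $\langle\vec u_j,\vec u_{j'}\rangle=\delta_{jj'}$ together with the isotropic local law assumed in Assumption~\ref{thm:que_gaussian} (equivalently, the eigenvector-correlation estimates underlying the QUE machinery) to bound $N\,\E[(\vec u_j)_k(\vec u_{j'})_k]$ and its higher analogues by $O(N^{-\epsilon})$ uniformly for $j\ne j'$ in $S$ and $k\in I$, or by re-summing these terms against the $q$-weights and using $\sum_{j,j'\in S}q_jq_{j'}\langle\vec u_j,\vec u_{j'}\rangle=1$ to collapse the off-diagonal part up to controlled error. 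With the odd-power monomials controlled, the previous steps give $\E\prod_{k\in I}(N|\vec q^T\vec u_k'|^2)^{a_k}=\prod_k(2a_k-1)!!+O(N^{-\epsilon})$ uniformly over $\vec q\in\Vs$ and $I$, which is $\hQUE$. The principal obstacle is exactly this last step: row QUE controls squared overlaps against a fixed deterministic test vector, whereas $\hQUE$ for the columns needs control of correlations between distinct eigenvectors at a shared coordinate, so one must either feed in a form of QUE that already supplies eigenvector-correlation bounds or argue those correlations away using orthogonality and the local law.
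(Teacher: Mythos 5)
Your first step---rewriting $\vec q^{T}\vec u_k'=\sum_{j\in J}q_j\vec e_k^{T}\vec u_j$ with $J\subset\TN$, $|J|=s$---is in fact the whole of the paper's proof: there, the fixed coefficients $q_j$ are simply ``absorbed'' into the arbitrary polynomial $P$ of (\ref{eq:que_def}) and the argument stops. You push further, expanding $\E\prod_{k\in I}\bigl(N|\vec q^{T}\vec u_k'|^2\bigr)^{a_k}$ into moments of the block $(\sqrt N\,U_{jk})_{j\in J,k\in I}$, and that is where your proposal does not close. The even part of your plan is sound: polarising in $\vec t$ and extracting coefficients is a bounded linear operation on a fixed finite-dimensional space of polynomials, so the $N^{-\epsilon}$ error survives and Wick's theorem produces $\prod_k(2a_k-1)!!$. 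But the cross-eigenvector monomials, which you correctly identify as the crux, are left to two strategies, neither of which works as stated. Orthonormality together with the isotropic local law of Assumption \ref{thm:que_gaussian} yields only delocalisation-type size bounds (each $N|U_{jk}U_{j'k}|$ is at most $N^{\xi}$ with high probability); it gives no cancellation in expectation, and the estimate $|N\E[(\vec u_j)_k(\vec u_{j'})_k]|\le N^{-\epsilon}$ you invoke is precisely a two-eigenvector correlation statement, i.e.\ a strictly stronger form of QUE than the single-test-vector statement (\ref{eq:que_def}) that is assumed. Your alternative, re-summing against the $q$-weights and using $\sum_{j,j'}q_jq_{j'}\langle\vec u_j,\vec u_{j'}\rangle=1$, cannot help either: that identity aggregates over all $N$ coordinates, whereas the cancellation you need is at each fixed coordinate $k\in I$. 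So there is a genuine gap at exactly the step you flag as the principal obstacle.

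The gap is not cosmetic, because the cross terms are sensitive to the sign ambiguity of the eigenvectors: flipping the sign of a row $\vec u_j$ leaves (\ref{eq:que_def}) unchanged but changes $|\vec q^{T}\vec u_k'|^2$ through the terms $q_jq_{j'}U_{jk}U_{j'k}$. For instance, fixing signs by requiring $U_{jk_0}\ge 0$ at some coordinate $k_0\in\TN$ gives (e.g.\ for the GOE) $N\E[U_{jk_0}U_{j'k_0}]\to 2/\pi\neq 0$, so with $\vec q$ supported equally on two rows the left side of (\ref{eq:que_weak_def}) stays of order one already for $P(x)=x$. Hence these terms cannot be ``argued away'' by orthogonality or local laws alone: one must fix a sign convention, e.g.\ i.i.d.\ uniform signs (implicit when the eigenvector matrix is treated as Haar-like), under which every block monomial with an odd row total vanishes identically and your polarisation step would then finish the proof. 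By contrast, the paper's one-line absorption argument never isolates the cross terms or the fact that several effective test vectors $\vec e_k$, $k\in I$, appear simultaneously; your diagnosis of where the difficulty sits is sharper than the paper's own proof, but as written your proposal does not resolve it.
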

\begin{proof}
   Take some $\vec{q}\in \Vs$. Then there exists some $J\subset\TN$ with $|J|=s$ and non-zero $\{q_k\}_{k\in J}$ such that\begin{align*}
       \vec{q}^T\vec{u}_k' = \sum_{j\in J} q_j \vec{e}_j^T \vec{u}_k'.
   \end{align*}
   Take $\{\vec{e}_i\}_{i=1}^N$ to be a standard basis with $(\vec{e}_i)_j = \delta_{ij}$, then $\vec{e}_j^T \vec{u}_k' = U_{jk} = \vec{e}_k^T\vec{u}_j$ so \begin{align*}
       \vec{q}^T\vec{u}_k' = \sum_{j\in J} q_j \vec{e}_k^T\vec{u}_j
   \end{align*}
   but then the coefficients $q_j$ can be absorbed into the definition of the general polynomial in the statement (\ref{eq:que_def}) of QUE for $\{\vec{u}_i\}_{i=1}^N$, which completes the proof, noting that the sum only includes indices contained in $\TN$ owing to the definition of $\Vs$.
\end{proof}

\begin{lem}\label{lem:explicit_error}
    Fix some real numbers $\{y_i\}_{i=1}^r$. Fix also a diagonal matrix $\Lambda$ and an orthonormal set of vectors $\{\vec{v}_i\}_{i=1}^N$ that satisfies $\hQUE$. Then there exists an $\epsilon>0$ and $\vec{\eta}_i\in\C^N$ with
      \begin{align}
       \eta_{ij}^2 &\in [-1, 1] ~ \forall j\in\TN,\\
       \eta_{ij}^2 &\in [-N^{\epsilon}, N^{\epsilon}] ~ \forall j\in\TN^c.
   \end{align}
 such that for any integer $l>0$\begin{align}\label{eq:explicit_error_lem1}
        \mathbb{E} \left(\sum_{i=1}^r y_i \vec{v}_i^T\Lambda \vec{v}_i\right)^l - \mathbb{E}\left(\sum_{i=1}^r y_i \frac{1}{N}\vec{g}_i^T\Lambda \vec{g}_i\right)^l = N^{-(1+\epsilon)l} \left( \sum_{i=1}^r y_i \vec{\eta}_i^T\Lambda \vec{\eta}_i\right)^l
    \end{align}
    where the $\vec{g}_i$ are i.i.d. Gaussians $N(0, I_N)$. 
    
\end{lem}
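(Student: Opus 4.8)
The plan is to pass from the quadratic forms $\vec v_i^T\Lambda\vec v_i$ to the scalars $N v_{ij}^2$, which are exactly the objects $\hQUE$ controls, expand the $l$-th power into a multi-index sum, compare with the Gaussian model term by term, and then \emph{reconstruct} a vector $\vec\eta_i$ realising the resulting discrepancy. Write $\Lambda=\diag(\Lambda_{11},\dots,\Lambda_{NN})$, let $v_{ij}$ be the $j$-th component of $\vec v_i$, and set $S_v:=\sum_i y_i\vec v_i^T\Lambda\vec v_i$, $S_g:=\sum_i y_i\tfrac1N\vec g_i^T\Lambda\vec g_i$. Since $\vec v_i^T\Lambda\vec v_i=\tfrac1N\sum_j\Lambda_{jj}(Nv_{ij}^2)$ and $\tfrac1N\vec g_i^T\Lambda\vec g_i=\tfrac1N\sum_j\Lambda_{jj}g_{ij}^2$, expanding both $l$-th powers gives
\begin{align*}
\E S_v^l-\E S_g^l=\frac1{N^l}\sum_{(i_1,j_1),\dots,(i_l,j_l)}\Big(\prod_{k=1}^l y_{i_k}\Lambda_{j_kj_k}\Big)\Big(\E\prod_{k=1}^l Nv_{i_kj_k}^2-\E\prod_{k=1}^l g_{i_kj_k}^2\Big),
\end{align*}
so everything reduces to estimating the inner discrepancy of joint moments.

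First I would bound that discrepancy. For a multi-index whose coordinate labels $j_k$ all lie in $\TN$, one applies $\hQUE$ with a test vector $\vec q\in\Vs$ supported on the distinct $j_k$: using polarisation inside $\Vs$ to disentangle the individual factors $v_{i_kj_k}^2$ from $|\vec q^T\vec v_k|^2$, and absorbing multiplicities into the polynomial $P$, this yields $|\E\prod_k Nv_{i_kj_k}^2-\E\prod_k g_{i_kj_k}^2|\le N^{-\epsilon_0}$ for some $\epsilon_0=\epsilon_0(l)>0$; moreover, when the $l$ positions $(i_k,j_k)$ are pairwise distinct, both moments equal $1+O(N^{-\epsilon_0})$, so the discrepancy in the unscaled variables $v_{ij}^2$ improves to $O(N^{-l-\epsilon_0})$. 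Summing, using $\sum_j v_{ij}^2=1$ to control the coordinate sums and the delocalisation bound $\max_{i,j}\E v_{ij}^2=O(N^{-1})$ (weaker than QUE, and the only hypothesis invoked on the $o(N)$ labels of $\TN^c$) to show that multi-indices touching $\TN^c$ contribute a lower-order error, one obtains $|\E S_v^l-\E S_g^l|\le N^{-\rho}\big(\norm{\Lambda}_{\mathrm{op}}\sum_i|y_i|\big)^l$ for some $\rho=\rho(l)>0$.

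Finally I would reconstruct $\vec\eta_i$. The map $(\eta_{ij}^2)_{ij}\mapsto\sum_i y_i\vec\eta_i^T\Lambda\vec\eta_i=\sum_{ij}y_i\Lambda_{jj}\eta_{ij}^2$ is linear on the box $\prod_{j\in\TN}[-1,1]\times\prod_{j\in\TN^c}[-N^\epsilon,N^\epsilon]$, so its range is exactly an interval $[-M,M]$ with $M=\sum_i|y_i|\big(\sum_{j\in\TN}|\Lambda_{jj}|+N^\epsilon\sum_{j\in\TN^c}|\Lambda_{jj}|\big)=\Theta(N\norm{\Lambda}_{\mathrm{op}}\sum_i|y_i|)$ once $\epsilon$ is small (the $\TN^c$ part being lower order because $|\TN^c|=o(N)$). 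Choosing $\epsilon:=\rho(l)/(2l)>0$, the step-2 bound gives $|N^{(1+\epsilon)l}(\E S_v^l-\E S_g^l)|\le M^l$ for $N$ large, so a real $l$-th root $\tau$ of $N^{(1+\epsilon)l}(\E S_v^l-\E S_g^l)$ exists with $|\tau|\le M$ (for even $l$ one takes the root of the modulus, flipping the final identity to its absolute-value form, which is all the downstream argument needs), and one picks any point of the box with $\sum_{ij}y_i\Lambda_{jj}\eta_{ij}^2=\tau$, setting $\eta_{ij}$ real or purely imaginary according to the sign of $\eta_{ij}^2$. This delivers the identity with the stated bounds on $\eta_{ij}^2$.

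The main obstacle is step 2: $\hQUE$ fixes a \emph{single} test vector $\vec q$ across the whole batch $I$ of eigenvector indices, whereas a generic multi-index varies both the eigenvector label $i_k$ and the coordinate label $j_k$ simultaneously. Reducing such ``mixed'' terms to a legitimate $\hQUE$ application needs the polarisation device within $\Vs$ together with an approximate-independence argument across distinct eigenvectors — which is also what is required to upgrade the bound on the spread-out terms from $N^{-\epsilon_0}$ to the genuinely smaller $N^{-l-\epsilon_0}$. Producing a discrepancy bound whose $N$-exponent is uniform enough that a single $\epsilon=\epsilon(l)$ suffices is the delicate part of the accounting.
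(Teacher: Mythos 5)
Your route is genuinely different from the paper's, but as written it has two real gaps. First, the step you yourself flag as ``delicate'' is precisely the missing one: to control the mixed joint moments $\E\prod_k N v_{i_kj_k}^2$ you must apply $\hQUE$ with polynomials whose degree grows with $l$, and the exponent in Assumption \ref{thm:que_gaussian} is $\epsilon(P)$, i.e.\ it depends on the polynomial; so even if the polarisation inside $\Vs$ could be made to isolate the required mixed moments (not obvious with real test vectors: e.g.\ the $a^2b^2$ coefficient of $\E\bigl[N(av_{i_1j_1}+bv_{i_1j_2})^2\,N(av_{i_2j_1}+bv_{i_2j_2})^2\bigr]$ mixes the wanted moment $v_{i_1j_1}^2v_{i_2j_2}^2$ with $v_{i_1j_2}^2v_{i_2j_1}^2$ and a four-fold cross term), you would only obtain an exponent $\rho(l)$ for each fixed $l$, with no uniformity in $l$, and the claimed upgrade to $O(N^{-l-\epsilon_0})$ for spread-out multi-indices needs an approximate-independence input that $\hQUE$ does not supply. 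Second, and decisively, the lemma demands a single $\epsilon>0$ and a single family $\{\vec{\eta}_i\}$ such that the exact identity (\ref{eq:explicit_error_lem1}) holds for every integer $l>0$; your reconstruction takes $\epsilon=\rho(l)/(2l)$ and picks a box point realising an $l$-th root of the discrepancy, so both $\epsilon$ and $\vec{\eta}_i$ depend on $l$, and for even $l$ you only recover the identity up to absolute value. That weaker statement is not enough downstream: in the proof of Theorem \ref{thm:nearly_free_addn} the lemma is applied to every term of the exponential series and the terms are resummed into $\exp\bigl(-iN^{-(1+\epsilon)}\sum_i y_i\vec{\eta}_i^T\Lambda\vec{\eta}_i\bigr)$, which requires the same $\epsilon$ and the same (signed) $\vec{\eta}_i$ for all $l$ simultaneously. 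A smaller issue: the delocalisation bound $\max_{i,j}\E v_{ij}^2=O(N^{-1})$ you invoke on $\TN^c$ is neither a hypothesis nor a consequence of $\hQUE$ (which says nothing about those coordinates); all that is available there is normalisation $\sum_j v_{ij}^2=1$ combined with QUE on $\TN$, which is exactly why the statement allows the larger window $\eta_{ij}^2\in[-N^\epsilon,N^\epsilon]$ on $\TN^c$.

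For contrast, the paper never touches joint moments of degree higher than one: it applies $\hQUE$ only to the finitely many first-moment discrepancies $N\E(\vec{e}_j^T\vec{v}_i)^2-\E(\vec{e}_j^T\vec{g}_i)^2$ for $i\le r$, $j\in\TN$, sets $\epsilon$ to be the minimum of the finitely many resulting exponents, defines $\eta_{ij}^2:=N^{\epsilon}\bigl[N\E(\vec{e}_j^T\vec{v}_i)^2-\E(\vec{e}_j^T\vec{g}_i)^2\bigr]$ once and for all (using the crude normalisation bound on $\TN^c$), and then resums the multi-index expansion into the $l$-th power, so $\epsilon$ and $\vec{\eta}_i$ are automatically independent of $l$. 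Your observation that the difference of the two $l$-th moments involves joint moments rather than products of single-coordinate discrepancies is a fair criticism of how that resummation is written in the paper, but replacing it by per-$l$ bounds and an intermediate-value reconstruction, as you propose, does not prove the lemma as stated; to repair your route you would need a mixed-moment estimate uniform in $l$ and an $\eta$-construction performed once, independently of $l$ --- at which point you are essentially led back to the coordinatewise definition of $\eta_{ij}$ used in the paper.
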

\begin{proof}
   Let $\{\vec{e}_i\}_{i=1}^N$ be the standard orthonormal basis from above. Then 
   \begin{align}
       \E \left(\sum_{i=1}^r y_i \vec{v}_i^T\Lambda \vec{v}_i\right)^l &= \E \sum_{i_1,\ldots, i_l=1}^r\prod_{k=1}^{l} y_{i_k} \vec{v}_{i_k}^T\Lambda\vec{v}_{i_k}\notag\\
       &= \E \sum_{i_1,\ldots, i_l=1}^r\sum_{j_1,\ldots, j_l=1}^N\prod_{k=1}^{l} y_{i_k} \lambda_{j_k} (\vec{e}_{j_k}^T\vec{v}_{i_k})^2\label{eq:que_error_bound1}\\
       \implies \E \left(\sum_{i=1}^r y_i \vec{v}_i^T\Lambda \vec{v}_i\right)^l-\mathbb{E}\left(\sum_{i=1}^r y_i \frac{1}{N}\vec{g}_i^T\Lambda \vec{g}_i\right)^l &= N^{-l}\sum_{i_1,\ldots, i_l=1}^r\sum_{j_1,\ldots, j_l=1}^N\prod_{k=1}^{l} y_{i_k} \lambda_{j_k} \left[N\E(\vec{e}_{j_k}^T\vec{v}_{i_k})^2 - \E (\vec{e}_{j_k}^T\vec{g}_{i_k})^2\right]\notag\\
       &=N^{-l}\sum_{i_1,\ldots, i_l=1}^r\sum_{j_1,\ldots, j_l\in\TN}\prod_{k=1}^{l} y_{i_k} \lambda_{j_k} \left[N\E(\vec{e}_{j_k}^T\vec{v}_{i_k})^2 - \E (\vec{e}_{j_k}^T\vec{g}_{i_k})^2\right]\notag\\
       &+ N^{-l}\sum_{i_1,\ldots, i_l=1}^r\sum_{\substack{j_1\in\TN^c,\\ j_2,\ldots, j_l\in\TN}}\prod_{k=1}^{l} y_{i_k} \lambda_{j_k} \left[N\E(\vec{e}_{j_k}^T\vec{v}_{i_k})^2 - \E (\vec{e}_{j_k}^T\vec{g}_{i_k})^2\right]\notag\\
       &+ \ldots
   \end{align}
   The ellipsis represents the similar terms where further of the $j_1, \ldots, j_r$ are in $\TN^c$.
   For $j\in\TN^c$ the terms \begin{align}
       \left[N\E(\vec{e}_{j_k}^T\vec{v}_{i_k})^2 -\E (\vec{e}_{j_k}^T\vec{g}_{i_k})^2\right]
   \end{align}
   are excluded from the statement of $\hQUE$, however we can still bound them crudely.
   Indeed \begin{align}
       \sum_{j\in\TN^c} N(\vec{e}_j^T\vec{v}_i)^2 = \sum_{j=1}^N N(\vec{e}_j^T\vec{v}_i)^2 - \sum_{j\in\TN}N(\vec{e}_j^T\vec{v}_i)^2 = N - \sum_{j\in\TN}N(\vec{e}_j^T\vec{v}_i)^2
   \end{align}
   but since the bound of $\hQUE$ applies for $j\in\TN$ \begin{align}
       N\E (\vec{e}_j^T\vec{v}_i)^2 = \E(\vec{e}_j^T\vec{g})^2 + o(1) = 1 + o(1) ~~ \forall j\in\TN,
   \end{align}
   then \begin{align}
       \sum_{j\in\TN^c} N(\vec{e}_j^T\vec{v}_i)^2 = N - N(1 + o(1)) = o(N) ~ \implies ~  \E(\vec{e}_j^T\vec{v}_i)^2 = o(1) ~ \forall j \in\TN^c.
   \end{align}
   Note that this is error term is surely far from optimal, but is sufficient here.
   Overall we can now say \begin{align}
      \left|\left[N\E(\vec{e}_{j}^T\vec{v}_{i})^2 -\E (\vec{e}_{j}^T\vec{g}_{i})^2\right]\right| \leq 1 + o(1) \leq 2 ~ \forall j \in\TN^c.
   \end{align}

   We can apply $\hQUE$ to the terms in square parentheses to give $\epsilon_{1}, \ldots, \epsilon_{r}>0$ such that 
   \begin{align}
       |N\E(\vec{e}_{j_k}^T\vec{v}_{i_k})^2 - \E (\vec{e}_{j_k}^T\vec{g}_{i_k})^2| \leq N^{-\epsilon_{i_k}} ~~~ \forall j_k\in\TN ~\forall i_k=1,\ldots, r.
   \end{align}
   We can obtain a single error bound by setting $\epsilon = \min_i \epsilon_i$, where clearly $\epsilon > 0$ and then write
   \begin{align}\label{eq:eta_defn}
    N\E(\vec{e}_{j_k}^T\vec{v}_{i_k})^2 - \E (\vec{e}_{j_k}^T\vec{g}_{i_k})^2 = \eta_{i_kj_k}^2 N^{-\epsilon}    
   \end{align}
   where $ \eta_{i_kj_k}^2 \in [-1, 1]$. To further include the indices $j\in\TN^c$, we extended the expression (\ref{eq:eta_defn}) to all $j_k$ by saying
   \begin{align}
       \eta_{i_kj_k}^2 &\in [-1, 1] ~ \forall j_k\in\TN,\\
       \eta_{i_kj_k}^2 &\in [-N^{\epsilon}, N^{\epsilon}] ~ \forall j_k\in\TN^c.
   \end{align}
Overall we have \begin{align}
     \E \left(\sum_{i=1}^r y_i \vec{v}_i^T\Lambda \vec{v}_i\right)^l-\mathbb{E}\left(\sum_{i=1}^r y_i \frac{1}{N}\vec{g}_i^T\Lambda \vec{g}_i\right)^l = N^{-l(1+\epsilon)}\sum_{i_1,\ldots, i_l=1}^r\sum_{j_1,\ldots, j_l=1}^N\prod_{k=1}^{l} y_{i_k} \lambda_{j_k} \eta_{i_kj_k}^2
\end{align}
but by comparing with (\ref{eq:que_error_bound1}) we can rewrite as
\begin{align}
    \E \left(\sum_{i=1}^r y_i \vec{v}_i^T\Lambda \vec{v}_i\right)^l-\mathbb{E}\left(\sum_{i=1}^r y_i \frac{1}{N}\vec{g}_i^T\Lambda \vec{g}_i\right)^l = \left( \sum_{i=1}^r N^{-(1+\epsilon)}y_i \vec{\eta}_i^T\Lambda \vec{\eta}_i\right)^l
\end{align}
where $\vec{\eta}_i^T = (\eta_{i1},\ldots, \eta_{iN})$.

\end{proof}

\subsection{Main result}
\begin{thm}\label{thm:nearly_free_addn}
    Let $X$ be an $N\times N$ real symmetric random matrix and let $D$ be an $N\times N$ symmetric matrix (deterministic or random). Let $\hat{\mu}_X, \hat{\mu}_{D}$ be the empirical spectral measures of the sequence of matrices $X, D$ and assume there exist deterministic limit measures $\mu_X, \mu_D$. Assume that $X$ has QUE, i.e. \ref{thm:que_gaussian}.
    Assume also the $\hat{\mu}_X$ concentrates in the sense that 
    \begin{align}\label{eq:concentration_condition}
        \P(W_1(\hat{\mu}_X, \mu_X) > \delta) \lesssim e^{-N^{\tau} f(\delta)}
    \end{align}
    where $\tau>0$ and $f$ is some positive increasing function.
    Then $H = X + D$ has a limiting spectral measure and it is given by the free convolution $\mu_X \boxplus \mu_D$.
\end{thm}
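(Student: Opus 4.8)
The plan is to prove convergence of the empirical moments $\frac1N\Tr H^k$ for each fixed integer $k\geq 1$ and deduce the statement by a standard moment argument. Concretely I would establish (i) $\frac1N\E\Tr H^k\to m_k(\mu_X\boxplus\mu_D)$, the $k$-th moment of the free convolution, and (ii) a variance bound $\Var\!\big(\frac1N\Tr H^k\big)=O(N^{-2+o(1)})$, which by Borel--Cantelli upgrades (i) to almost sure convergence of each moment; since $\mu_X,\mu_D$ are compactly supported, $\mu_X\boxplus\mu_D$ is compactly supported and hence determined by its moments, and one concludes $\hat\mu_H\to\mu_X\boxplus\mu_D$ weakly a.s. (working throughout on the overwhelmingly likely event that $\|X\|$ is $O(1)$, or else phrasing everything through the Stieltjes transform to avoid edge effects). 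Step (ii) will be handled by the same machinery as (i) applied to $\frac1{N^2}\Tr H^k\,\Tr H^k$, now involving at most $2k$ eigenvector indices, together with the concentration hypothesis \eqref{eq:concentration_condition} to absorb the fluctuations coming from the eigenvalues of $X$. When $D$ is random one conditions on $D$ (using $\hat\mu_D\to\mu_D$) and applies the deterministic case.

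For (i), diagonalise $X=U\Lambda U^{T}$, so $\Lambda=\diag(\lambda_1,\dots,\lambda_N)$ carries the eigenvalues of $X$ and the columns $\vec u_1,\dots,\vec u_N$ of $U$ are its eigenvectors, obeying QUE; by conjugating $H$ by the eigenbasis of $D$ (which preserves QUE via Lemma~\ref{lem:que_preserved_under_rotation} and leaves $\hat\mu_X$ unchanged) I may assume $D$ is diagonal. Expanding $\Tr(X+D)^k$ over the $2^k$ words in the letters $X,D$ and merging maximal runs of equal letters --- crucially using $U^{T}U=I$ to collapse consecutive $X$'s into a single block $U\Lambda^{a}U^{T}$ \emph{before} doing anything probabilistic, since this is exactly what keeps each eigenvector quadratic in what follows --- one is left with a finite sum of terms of the shape
\begin{equation*}
\Tr\!\left(U\Lambda^{a_1}U^{T}D^{b_1}\cdots U\Lambda^{a_m}U^{T}D^{b_m}\right)=\sum_{p_1,\dots,p_m=1}^{N}\left(\prod_{t=1}^{m}\lambda_{p_t}^{a_t}\right)\prod_{t=1}^{m}\left(\vec u_{p_t}^{T}D^{b_t}\vec u_{p_{t+1}}\right),
\end{equation*}
with $p_{m+1}\equiv p_1$ and $\sum_t(a_t+b_t)=k$. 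The structural point is that in the fully expanded right-hand side each eigenvector $\vec u_{p_s}$ occurs exactly twice (more often only on the lower-order configurations where $p$-indices collide), so every term's $U$-dependence is a polynomial that is at most quadratic in each individual eigenvector.

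The heart of the proof is then to replace $\E_U$ of these eigenvector polynomials by their Gaussian surrogates --- the same expressions with $\vec u_{p_s}$ replaced by $N^{-1/2}\vec g_{p_s}$, $\vec g_{p_s}\sim\mathcal N(0,I_N)$ i.i.d. --- which is precisely what Lemma~\ref{lem:explicit_error} provides, once the bilinear forms $\vec u_p^{T}D^{b}\vec u_q$ are written in the eigenbasis of $D$ and reduced, by polarisation of test vectors into the sparse sets $\Vs$ with $s\in\{1,2\}$, to the weighted diagonal sums $\sum_i y_i\vec v_i^{T}\Lambda\vec v_i$ governed by $\hQUE$ (Lemma~\ref{lem:weaker_col_que}); the finitely many indices in $\TN^{c}$ are absorbed by the crude bounds already used there, and the randomness of $\Lambda$ is handled by conditioning on the spectrum of $X$. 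Because each eigenvector is called on only quadratically the replacement costs $N^{-\epsilon}$ per term, and there are finitely many terms for fixed $k$, so the total error is $o(1)$. It then remains to evaluate the Gaussian model: conditioning on $\Lambda$, a Wick/Isserlis expansion organised by the pairing structure of the $\vec g$'s (equivalently, a supersymmetric integral representation evaluated by saddle point, or a direct appeal to the classical asymptotic freeness of $\Lambda$ and $ODO^{T}$ for Haar $O$) shows that the surviving configurations are indexed by non-crossing partitions and reproduce exactly the moment--cumulant relations defining $m_k(\mu_X\boxplus\mu_D)$; the residual randomness in $\Lambda$ is then removed using $\hat\mu_X\to\mu_X$ via \eqref{eq:concentration_condition}.

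The main obstacle I anticipate is the QUE step, not the combinatorics. Converting the single-test-vector statement of QUE into control of the genuinely multilinear objects $\prod_t\vec u_{p_t}^{T}D^{b_t}\vec u_{p_{t+1}}$, which a priori involve many different directions and several distinct eigenvectors, requires the polarisation argument that extracts mixed moments from the supremum over unit $\vec q$, together with the run-merging observation that no eigenvector is ever used more than quadratically; making these errors uniform enough (in the word, in the indices, and including $\TN^{c}$), obtaining a genuine $O(N^{-2+o(1)})$ variance rather than a mere $o(1)$ expectation, and then marrying the resulting deterministic free-probability computation with the eigenvalue concentration \eqref{eq:concentration_condition} is where the real work lies; everything else is bookkeeping.
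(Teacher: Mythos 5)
Your reduction hides a genuine gap at exactly the step you flag as ``where the real work lies'', and it is not a matter of bookkeeping: the moment expansion forces you to control \emph{signed, off-diagonal} eigenvector statistics, which the paper's QUE assumption does not provide. After merging runs you must evaluate cyclic products $\prod_t \vec{u}_{p_t}^T D^{b_t}\vec{u}_{p_{t+1}}$ with distinct $p_t$'s; writing $D$ in its eigenbasis these are built from factors $(\vec{e}_j^T\vec{u}_p)(\vec{e}_j^T\vec{u}_q)$ with $p\neq q$ (e.g.\ already for $k=2$ one needs $\E(\vec{u}_p^T D\vec{u}_q)^2$). Such quantities are not polynomials in the squared overlaps $N|\vec{q}^T\vec{u}_k|^2$ for any \emph{single} test vector $\vec{q}$, which is all that Assumption \ref{thm:que_gaussian}, $\hQUE$ and Lemma \ref{lem:explicit_error} control. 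Your proposed fix by polarisation fails for a concrete reason: with $\vec{q}_{\pm}=(\vec{e}_j\pm\vec{e}_l)/\sqrt{2}$ one has
\begin{align*}
(2N)^2(\vec{e}_j^T\vec{u}_p)(\vec{e}_l^T\vec{u}_p)(\vec{e}_j^T\vec{u}_q)(\vec{e}_l^T\vec{u}_q)
=\left(N|\vec{q}_+^T\vec{u}_p|^2-N|\vec{q}_-^T\vec{u}_p|^2\right)\left(N|\vec{q}_+^T\vec{u}_q|^2-N|\vec{q}_-^T\vec{u}_q|^2\right),
\end{align*}
and expanding produces cross terms such as $N^2|\vec{q}_+^T\vec{u}_p|^2\,|\vec{q}_-^T\vec{u}_q|^2$ that involve two \emph{different} test vectors simultaneously; the stated QUE (one $\vec{q}$ per polynomial) cannot disentangle these, so the mixed moments you need are simply not determined by the hypotheses. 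What you would actually need is the stronger multi-direction form of eigenvector universality (joint Gaussianity of $(\sqrt{N}\vec{q}_a^T\vec{u}_k)_{a,k}$ over several test vectors), which is a materially stronger input than the theorem assumes.

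This is precisely the obstruction the paper's proof is engineered to avoid: it never touches $\Tr H^k$. Instead it represents the Stieltjes transform supersymmetrically, so the $X$-dependence enters only through $\E_X e^{-i\Tr KX}$ with $K$ of fixed finite rank; after rotating, $\Tr KX=\sum_{i=1}^r y_i\vec{v}_i^T\Lambda\vec{v}_i$ involves only \emph{diagonal} quadratic forms of single vectors, exactly the statistics that Lemmas \ref{lem:que_preserved_under_rotation}--\ref{lem:explicit_error} convert to Gaussian surrogates, after which the concentration hypothesis (\ref{eq:concentration_condition}) is used via Laplace's method on $\E_\Lambda$ to show $\E G_{D+X}(z)=\E G_{D+\bX}(z)+o(1)$, with $\bX$ Haar-rotated, whence the free convolution. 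Two further overclaims in your plan: the QUE error is only $N^{-\epsilon}$ for some small unspecified $\epsilon$, so your promised variance bound $O(N^{-2+o(1)})$ (and hence Borel--Cantelli and almost-sure convergence) cannot be extracted from these tools --- and note (\ref{eq:concentration_condition}) concerns only the eigenvalues of $X$, not the eigenvector randomness; and ``finitely many terms for fixed $k$'' glosses over the $N^m$ index tuples per word, where you need the replacement error to be uniform and weighted correctly so that it stays subdominant after summation. The paper's route sidesteps all of this by keeping the rank of the relevant observable fixed in $N$.
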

\begin{rem}
A condition like (\ref{eq:concentration_condition}) is required so that the Laplace method can be applied to the empirical measure $\hat{\mu}_X$. There are of course other ways to formulate such a condition.
Consider for example the conditions used in Theorems 1.2 and 4.1 of \cite{arous2021exponential}.
There it is assumed the existence of a sequence of deterministic measures $(\mu_N)_{N \geq 1}$ and a constant $\kappa>0$ such that for large enough $N$
\begin{align}\label{eq:concentration_condition_deterministic}
    W_1(\E \hat{\mu_X}, \mu_N) \leq N^{-\kappa}, ~~ W_1(\mu_N, \mu_X) \leq N^{-\kappa},
\end{align}
which is of course just a deterministic version of (\ref{eq:concentration_condition}).
\cite{arous2021exponential} introduce the extra condition around concentration of Lipschitz traces:
\begin{align}\label{eq:lipschitz_traces}
    \P\left( \left|\frac{1}{N} \Tr f(H_N) - \frac{1}{N}\E \Tr f(H_N)\right| > \delta \right) \leq \exp\left(-\frac{c_{\zeta}}{N^{\zeta}} \min\left\{\left(\frac{N\delta}{\|f\|_{Lip}}\right)^2, \left(\frac{N\delta}{\|f\|_{Lip}}\right)^{1+\epsilon_0} \right\}\right),
\end{align}
for all $\delta>0$, Lipschitz $f$ and $N$ large enough, where $\zeta, c_{\zeta}>0$ are some constants.
As shown in the proof of Theorem 1.2, this condition is sufficient to obtain \begin{align}\label{eq:concetration_integral}
    \P \left( \left|\int |\lambda| d\hat{\mu}_X(\lambda) - \int |\lambda| d\E\hat{\mu}_X(\lambda)\right| \leq t \right) \leq \exp\left(-\frac{c_{\zeta}}{N^{\zeta}} \min\left\{ (2Nt\eta)^2, (2Nt\eta)^{1+\epsilon_0}\right\}\right)
\end{align}
for any $t>0$ and for large enough $N$.
Note that \cite{arous2021exponential} prove this instead for integration against a regularised version of $\log|\lambda\|$, but the proof relies only the integrand's being Lipschitz, so it goes through just the same here.
(\ref{eq:concetration_integral}) and (\ref{eq:concentration_condition_deterministic}) clearly combine to give (\ref{eq:concentration_condition}).
The reader may ignore this remark if they are content to take (\ref{eq:concentration_condition}) as an assumption.
Alternatively, as we have shown, (\ref{eq:concentration_condition}) can be replaced by (\ref{eq:concentration_condition_deterministic}) and (\ref{eq:lipschitz_traces}), conditions which have already been used for quite general results in the random matrix theory literature.
\end{rem}
\begin{proof}
We shall denote use the notation \begin{align}
    G_H(z) = \frac{1}{N}\Tr (z - H)^{-1}.
\end{align}
Recall the supersymmetric approach to calculating the expected trace of the resolvent of a random matrix ensemble:
\begin{align}
    \E_H G_H(z) = \frac{1}{N}\frac{\partial}{\partial \j}\Bigg|_{\j=0} \mathbb{E}_H Z_H(\j)
\end{align}
where \begin{align}
    Z_H(\j) &= \frac{\det(z + \j - H)}{\det(z - H)} = \int d\Psi e^{-i\Tr AH} e^{i\Tr \Psi\Psi^{\dagger}J},\\
    A &= \phi\phi^{\dagger} + \chi\chi^{\dagger},\\
    J &= I_N \otimes \left(\begin{array}{cc} z & 0 \\ 0 & \j + z\end{array}\right),\\
    d\Psi &= \frac{d\phi d\phi^* d\chi d\chi^*}{-(2\pi)^N i},\\
    \Psi &= \left(\begin{array}{c} \phi \\ \chi \end{array}\right)
\end{align}
with $\phi\in\C^N$ and $\chi,\chi^*$ being $N$-long vectors of anti-commuting variables.
Independence of $X$ and $D$ gives
\begin{align}
    \E_H Z_H(\j) &= \int d\Psi e^{i\Tr \Psi\Psi^{\dagger}J} \E_{X,D} e^{-i\Tr A(X + D)} \notag\\
    &= \int d\Psi e^{i\Tr \Psi\Psi^{\dagger}J} \E_D e^{-i\Tr AD} \E_X e^{-i\Tr AX}.
\end{align}
$\E_D$ simply means integration against a delta-function density if $D$ is deterministic. 

\medskip
Let us introduce some notation: for $N\times N$ matrices $K$, $\Phi_X(K) = \E_X e^{-i\Tr XK}$, and similarly $\Phi_D$. We also define a new matrix ensemble $\bar{X} \overset{d}{=} O^T \Lambda O$, where $\Lambda=\text{diag}(\lambda_1, \ldots, \lambda_N)$ are equal in distribution to the eigenvalues of $X$ and $O$ is an entirely independent Haar-distributed orthogonal matrix.

Now
\begin{align}
   &\E_H Z_H(\j) = \int d\Psi e^{i\Tr \Psi\Psi^{\dagger}J} \Phi_{\bX}(K)\Phi_D(K) + \int d\Psi e^{i\Tr \Psi\Psi^{\dagger}J} (\Phi_X(K) - \Phi_{\bX}(K))\Phi_D(K)\notag\\
   \implies &\E G_{D+X}(z) = \E G_{D+\bX}(z) + \frac{1}{N}\frac{\partial}{\partial \j}\Bigg|_{\j = 0} \int d\Psi e^{i\Tr \Psi\Psi^{\dagger}J} (\Phi_X(K) - \Phi_{\bX}(K))\Phi_D(K) \equiv   \E G_{D+\bX}(z) + E(z)
\end{align}
and so we need to analyse the error term $E(z)$.\\

Now consider $X = U^T\Lambda U$ where the rows of $U$ are the eigenvectors $\{\vec{u}_i\}_i$ of $X$. Say also that $K = Q^TYQ$ for diagonal $Y = (y_1, \ldots, y_r, 0, \ldots, 0)$, where we note that $K$ has fixed rank, by construction. Then \begin{align*}
    \Tr XK = Y (UQ^T)^T\Lambda (UQ^T)
\end{align*}
but Lemma \ref{lem:que_preserved_under_rotation} establishes that the rows of $UQ^T$ obey QUE, since the rows of $U$ do. Further, Lemma \ref{lem:weaker_col_que} then establishes that the columns of $UQ^T$ obey $\hQUE$ as required by Lemma \ref{lem:explicit_error}. Let $\{\vec{v}_i\}$ be those columns, then we have \begin{align}\label{eq:trace_as_vsum}
    \Tr XK = \sum_{i=1}^r y_i \vec{v}_i^T \Lambda \vec{v}_i.
\end{align}

The expectation over $X$ can be split into eigenvalues and conditional eigenvectors 
\begin{align}\label{eq:phix_expansion}
    \Phi_X(K) = \E_{\Lambda}\E_{U\mid \Lambda}\sum_{l=0}^{\infty} \frac{1}{l!} (-i)^l\left(\Tr U^T \Lambda UK\right)^l.
\end{align}
We can simply bound \begin{align}
   \left| \sum_{l=0}^{n} \frac{1}{l!} (-i)^l\left(\Tr U^T \Lambda U\right)^l\right| \leq e^{|\Tr U^T\Lambda UK}
\end{align}
for any $n$, but clearly \begin{align}
    \E_{U\mid \Lambda}e^{|\Tr U^T\Lambda UK|} < \infty
\end{align}
since, whatever the distribution of $U\mid\Lambda$, the integral is over a compact group (the orthogonal group $O(N)$) and the integrand has no singularities. 
Therefore, by the dominated convergence theorem
\begin{align}
     \Phi_X(K) = \E_{\Lambda}\sum_{l=0}^{\infty} \frac{1}{l!} (-i)^l\E_{U\mid \Lambda}\left(\Tr U^T \Lambda UK\right)^l
\end{align}
and in precisely the same way
\begin{align}\label{eq:phibx_expansion}
      \Phi_X(K) = \E_{\Lambda}\sum_{l=0}^{\infty} \frac{1}{l!} (-i)^l\E_{O\sim \mu_{Haar}}\left(\Tr O^T \Lambda OK\right)^l.
\end{align}
Recalling (\ref{eq:trace_as_vsum}) we now have
\begin{align}\label{eq:phix_conditional_expansion}
    \Phi_X(K) = \E_{\Lambda}\sum_{l=0}^{\infty} \frac{1}{l!} (-i)^l \E_{U\mid \Lambda}\left(\sum_{i=1}^r y_i \vec{v}_i^T \Lambda \vec{v}_i\right)^l.
\end{align}
and similarly 
\begin{align}\label{eq:phixbar_conditional_expansion}
    \Phi_{\bX}(K) = \E_{\Lambda}\sum_{l=0}^{\infty} \frac{1}{l!} (-i)^l \E_{U\mid \Lambda}\left(\sum_{i=1}^r y_i \bar{\vec{v}}_i^T \Lambda \bar{\vec{v}}_i\right)^l.
\end{align}
where the $\bar{\vec{v}}_i$ are defined in the obvious way from $\bX$.
We would now apply $\hQUE$, but to do so we must insist that $\E_{\Lambda}$ is taken over the \emph{ordered} eigenvalues of $X$.
Having fixed that convention, lemma \ref{lem:explicit_error} can be applied to the terms 
\begin{align}
    \E_{U\mid \Lambda}\left(\sum_{i=1}^r y_i \vec{v}_i^T \Lambda \vec{v}_i\right)^l
\end{align}
in (\ref{eq:phix_conditional_expansion}).
The terms in $\Phi_{\bX}$ can be treated similarly.
This results in \begin{align}
    \Phi_X(K) - \Phi_{\bX}(K) &= \E_{\Lambda}\Bigg[\sum_{l=0}^{\infty} \frac{i^l}{l!} \left\{\E_{\{\vec{g}_i\}_{i=1}^r}\left(\sum_{i=1}^r y_i \frac{1}{N}\vec{g}_i^T\Lambda \vec{g}_i\right)^l + \left(\sum_{i=1}^r N^{-(1+\epsilon)} y_i \vec{\eta}_i^T\Lambda \vec{\eta}_i\right)^l \right\}\notag\\
    & ~~~~~~~~-\sum_{l=0}^{\infty} \frac{i^l}{l!} \left\{\E_{\{\vec{g}_i\}_{i=1}^r}\left(\sum_{i=1}^r y_i \frac{1}{N}\vec{g}_i^T\Lambda \vec{g}_i\right)^l + \left(\sum_{i=1}^r N^{-(1+\epsilon)} y_i \bar{\vec{\eta}}_i^T\Lambda \bar{\vec{\eta}}_i\right)^l \right\}\Bigg]
\end{align}
The exponential has infinite radius of convergence, so we may re-order the terms in the sums to give cancellation
\begin{align*}
    \Phi_X(K) - \Phi_{\bX}(K) = \E_{\Lambda}\sum_{l=1}^{\infty}\frac{1}{l!} N^{-(1+\epsilon)l}(-i)^l\left(\sum_{i=1}^r y_i \vec{\eta}_i^T\Lambda \vec{\eta}_i\right)^l -\E_{\Lambda}\sum_{l=1}^{\infty}\frac{1}{l!} N^{-(1+\epsilon)l}(-i)^l\left(\sum_{i=1}^r y_i \bar{\vec{\eta}}_i^T\Lambda \bar{\vec{\eta}}_i\right)^l.
\end{align*}
Here $\epsilon>0$ and $\vec{\eta}_i, \tilde{\vec{\eta}}_i\in \C^N$ with 
\begin{align}
   -1\leq [(\vec{\eta}_i)_j]^2 , [(\bar{\vec{\eta}}_i)_j]^2 \leq 1 ~&~ \forall i=1,\ldots, r, ~  \forall j\in\TN,\\
   -N^{\epsilon}\leq [(\vec{\eta}_i)_j]^2 , [(\bar{\vec{\eta}}_i)_j]^2 \leq N^{\epsilon} ~&~ \forall i=1,\ldots, r, ~  \forall j\in\TN.
\end{align}

Simplifying, we obtain \begin{align}
    \Phi_X(K) - \Phi_{\bX}(K) = \E_{\Lambda}\exp\left(-iN^{-(1+\epsilon)}\sum_{i=1}^r y_i \vec{\eta}_i^T\Lambda \vec{\eta}_i\right) - \E_{\Lambda}\exp\left(-iN^{-(1+\epsilon)}\sum_{i=1}^r y_i \tilde{\vec{\eta}}_i^T\Lambda \tilde{\vec{\eta}}_i\right).
\end{align}
Since $|\TN^c| \leq 2N^{1-\delta}$ we have \begin{align}
    \sum_{j\in\TN^c} |\lambda_j| \leq \mathcal{O}(N^{1-d} N^{-1}) \Tr |\Lambda|
\end{align}
and so \begin{align}
    |\vec{\eta}_i^T\Lambda \vec{\eta}_i| \leq \Tr|\Lambda|\left( 1 + \mathcal{O}(N^{\epsilon - \delta})\right).
\end{align}
For any fixed $\delta>0$, $\epsilon$ can be reduced if necessary so that $\epsilon < \delta$ and then for sufficiently large $N$ we obtain, say, 
\begin{align}
    |\vec{\eta}_i^T\Lambda \vec{\eta}_i| \leq 2 \Tr|\Lambda|.
\end{align}
Thence we can write $\vec{\eta}_i^T\Lambda\vec{\eta}_i = \Tr|\Lambda| \xi_i$ for $\xi_i\in[-2, 2]$, and similarly $\tilde{\vec{\eta}}_i^T\Lambda\tilde{\vec{\eta}}_i = \Tr|\Lambda| \tilde{\xi}_i$. Now
\begin{align*}
    \E_{\Lambda}\exp\left(-iN^{-(1+\epsilon)} \sum_{i=1}^r \xi_iy_i\int \Tr|\Lambda|\right)=\E_{\Lambda}\exp\left(-iN^{-\epsilon} \sum_{i=1}^r \xi_iy_i\int d\hat{\mu}_X(\lambda)|\lambda|\right)
\end{align*}
so we can apply Laplace's method to the empirical spectral measure $\hat{\mu}_X$ to obtain \begin{align}
    \E_{\Lambda}\exp\left(-iN^{-(1+\epsilon)} \sum_{i=1}^r \xi_iy_i\int \Tr|\Lambda|\right)=\exp\left(-iN^{-\epsilon} (q+o(1))\sum_{i=1}^r \xi_iy_i\right) + o(1)
\end{align}
where the $o(1)$ terms do not depend on the $y_i$ and where we have defined \begin{align}
    q = \int d\mu_X(\lambda)|\lambda|.
\end{align}
Further, we can write $\sum_{i=1}^r \xi_i y_i = \zeta\Tr K$, where $\zeta \in [\min_i\{\xi_i\}, \max_i\{\xi_i\}]\subset [-1, 1]$, and similarly $\sum_{i=1}^r \tilde{\xi}_i y_i = \tilde{\zeta}\Tr K$. Then \begin{align}
    \Phi_X(K) - \Phi_{\bX}(K) = e^{-iN^{-\epsilon}\zeta (q + o(1)) \Tr K} - e^{-iN^{-\epsilon}\tilde{\zeta} (q + o(1)) \Tr K} + o(1)
\end{align}
but
\begin{align}
    &\frac{1}{N}\frac{\partial}{\partial \j}\Bigg|_{\j=0}\int d\Psi e^{i\Tr \Psi\Psi^{\dagger}J}e^{-iN^{-\epsilon}\zeta (q + o(1)) \Tr K}\Phi_D(K) = \E G_{D + N^{-\epsilon}\zeta(q + o(1))I}(z) = \E G_D(z + \mathcal{O}(N^{-\epsilon}))\notag\\
    \implies & E(z) = \E G_D(z + \mathcal{O}(N^{-\epsilon})) + o(1) - \E G_D(z + \mathcal{O}(N^{-\epsilon})) - o(1) = o(1).
\end{align}
We have thus established that \begin{align}
    \E G_{D+X}(z) = \E G_{D+\bX}(z) + o(1)
\end{align}
from which one deduces that $\mu_{D+X} = \mu_{D + \bX} = \mu_D \boxplus \mu_{\bX} = \mu_D\boxplus\mu_X$.
\end{proof}

\begin{rem}
We have also constructed a non-rigorous argument for Theorem \ref{thm:nearly_free_addn} where the supersymmetric approach is replaced by the replica method. This approach simplifies some of the analysis but at the expense of being less convincing.
\end{rem}

\subsection{Experimental validation}
We consider the following matrix ensembles:
\begin{align*}
    M\sim GOE^n ~ &: ~ \text{Var}(M_{ij}) = \frac{1 + \delta_{ij}}{2n},\\
    M\sim UWig^n &: ~ \sqrtsign{n}M_{ij} \overset{i.i.d}{\sim} U(0, \sqrtsign{6}) ~~\text{up to symmetry},\\
    M\sim \Gamma Wig^n &: ~ 2\sqrtsign{n}M_{ij} \overset{i.i.d}{\sim} \Gamma(2) ~~\text{up to symmetry},\\
    M\sim UWish^n &: ~ M\overset{d}{=}\frac{1}{m}XX^T, ~ X_{ij}\overset{i.i.d}{\sim}U(0, \sqrtsign{12}) ~~\text{for }X\text{ of size }n\times m, ~~ \frac{n}{m} \overset{n,m\rightarrow\infty}{\rightarrow} \alpha,\\
    M\sim Wish^n &: M\overset{d}{=}\frac{1}{m}XX^T, ~X_{ij}\overset{i.i.d}{\sim}\mathcal{N}(0,1)~~\text{for }X\text{ of size }n\times m, ~~ \frac{n}{m} \overset{n,m\rightarrow\infty}{\rightarrow} \alpha.
\end{align*}

All of the $GOE^n, UWig^n, \Gamma Wig^n$ have the same limiting spectral measure, namely $\mu_{SC}$, the semi-circle of radius $\sqrtsign{2}$. $UWish^n, Wish^n$ have a Marcenko-Pastur limiting spectral measure $\mu_{MP}$, and the constant $\sqrtsign{12}$ is chosen so that the parameters of the MP measure match those of a Gaussian Wishart matrix $Wish^n$. $GOE^n, Wish^n$ are the only ensembles whose eigenvectors are Haar distributed, but all ensembles obey a local law in the sense above. It is known that the sum of $GOE^n$ and any of the other ensembles will have limiting spectral measure given by the free additive convolution of $\mu_{SC}$ and the other ensemble's measure (so either $\mu_{SC}\boxplus\mu_{MP}$ or $\mu_{SC}\boxplus\mu_{SC}$). Our result implies that the same holds for addition of the non-invariant ensembles. Sampling from the above ensembles is simple, so we can easily generate spectral histograms from multiple independent matrix samples for large $n$. $\mu_{SC}\boxplus\mu_{SC}$ is just another semi-circle measure but with radius $2$. $\mu_{SC}\boxplus\mu_{MP}$ can be computed in the usual manner with $R$-transforms and is given by the solution to the polynomial \begin{align*}
    \frac{\alpha}{2}t^3 - \left(\frac{1}{2} + \alpha z\right)t^2 + (z + \alpha - 1)t - 1 = 0.
\end{align*}
i.e. Say the cubic has roots $\{r_1, r_2 + is_2, r_2 - is_2\}$ for $s_2\geq 0$, then the density of $\mu_{SC}\boxplus\mu_{MP}$ at $z$ is $s_2/\pi$. This can all be solved numerically. The resulting plots are in Figure \ref{fig:free_sum_comparison} and clearly show agreement between the free convolutions and sampled spectral histograms.

\begin{figure}[h]
    \centering
    \begin{subfigure}{0.3\linewidth}
        \includegraphics[width=\textwidth]{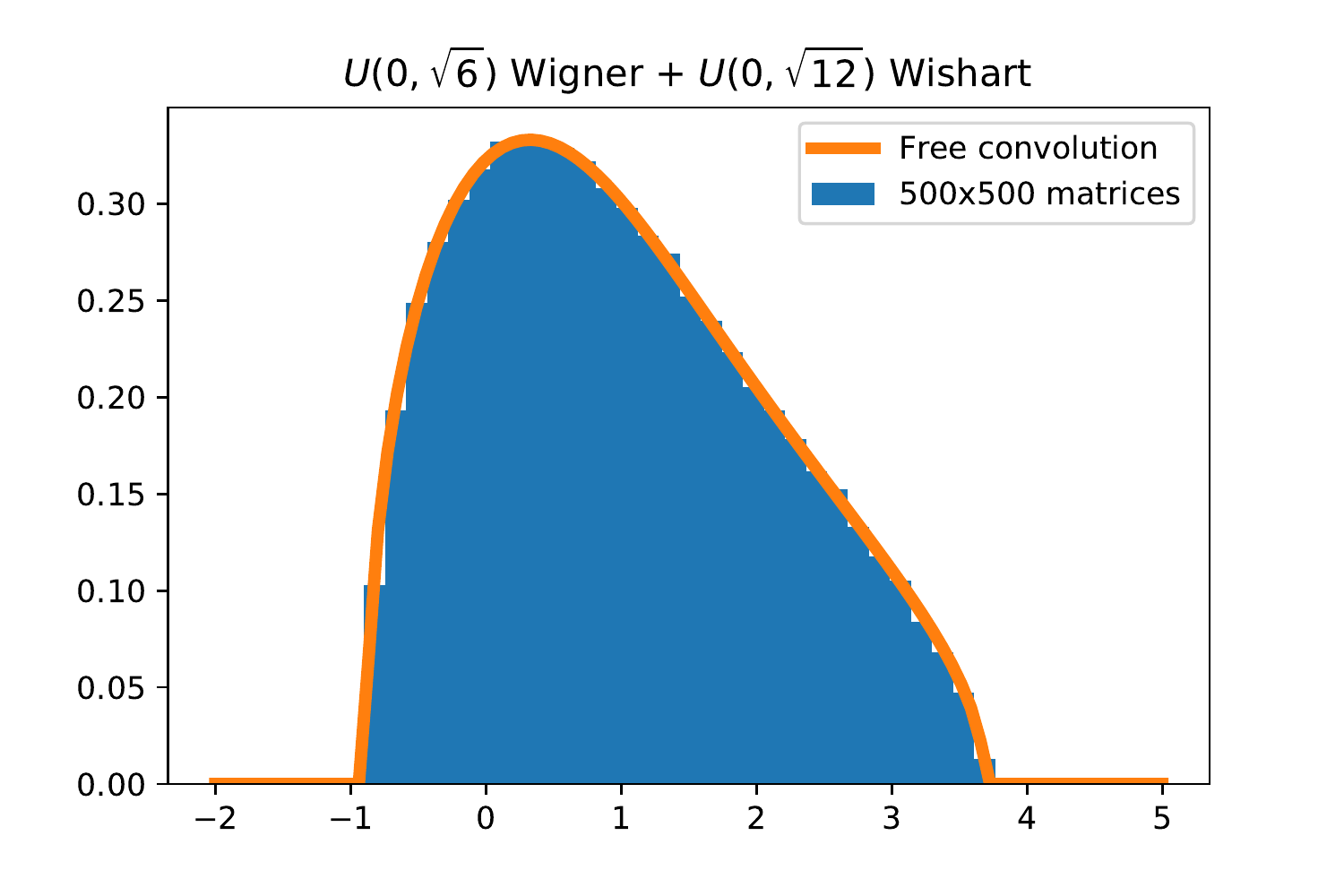}
        \subcaption{ $UWig^n + UWish^n$}
    \end{subfigure}
    \begin{subfigure}{0.3\linewidth}
        \includegraphics[width=\textwidth]{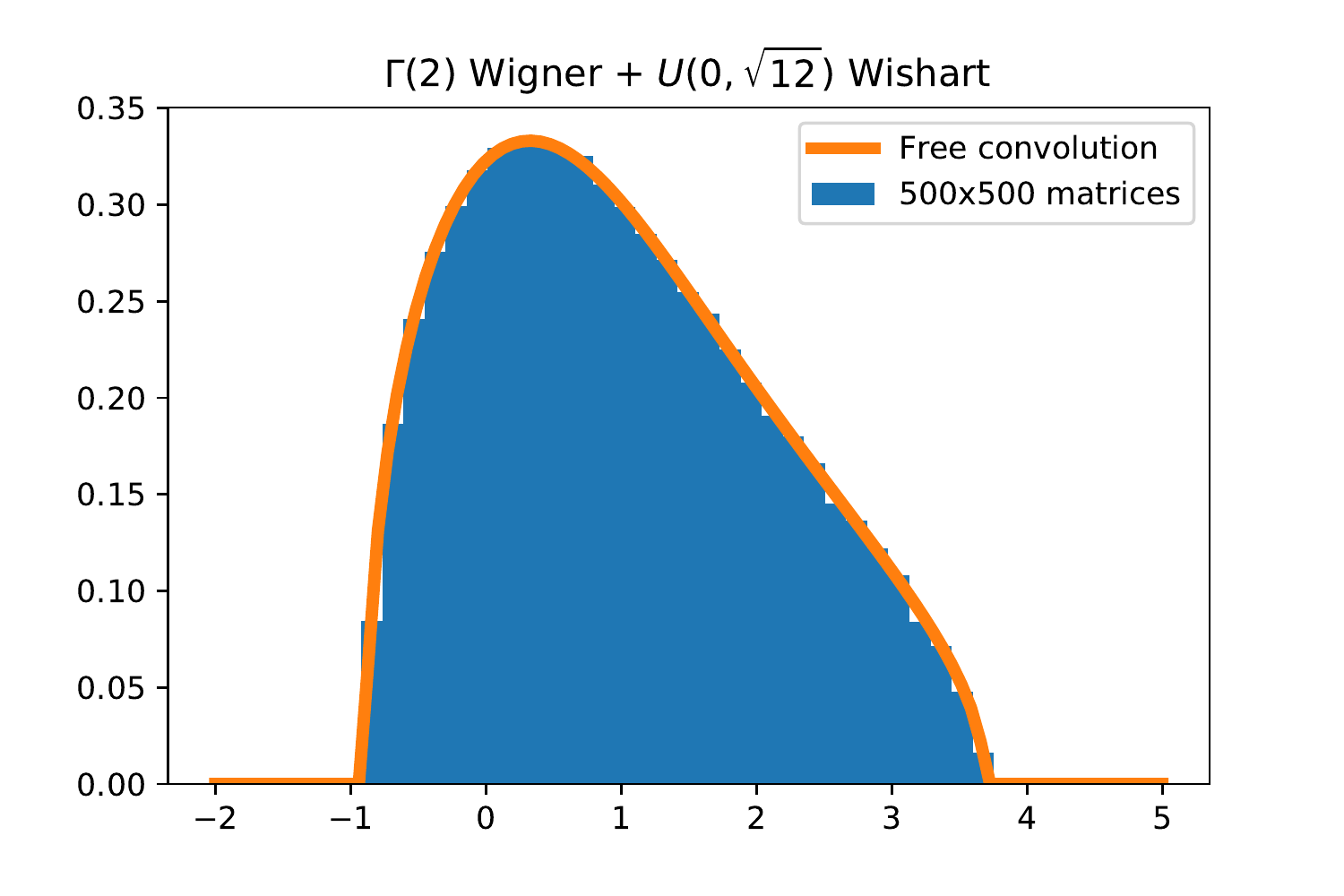}
        \subcaption{ $\Gamma Wig^n + UWish^n$}
    \end{subfigure}
    \begin{subfigure}{0.3\linewidth}
        \includegraphics[width=\textwidth]{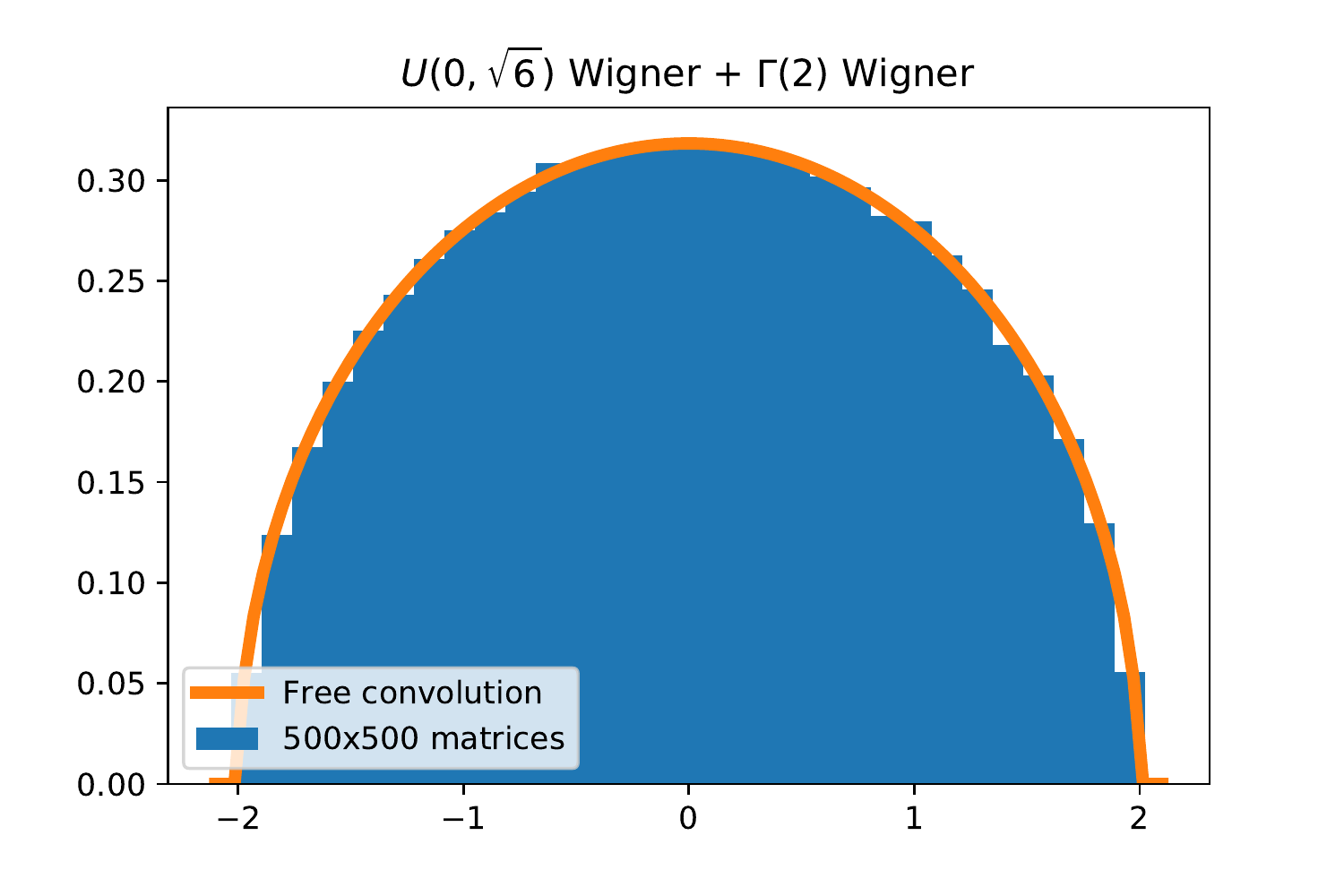}
        \subcaption{ $UWig^n + \Gamma Wig^n$}
    \end{subfigure}
    \caption{Comparison of theoretical spectral density and empirical from sampled matrices all of size $500\times 500$. We combine $50$ independent matrix samples per plot.}
    \label{fig:free_sum_comparison}
\end{figure}

We can also test the result in another more complicated case. Consider the case of random $d$-regular graphs on $N$ vertices. Say $M\sim Reg^{N,d}$ is the distribution of the adjacency matrix of such random graphs. The limiting spectral density of $M\sim Reg^{N,d}$ is known in closed form, as is its Stieljtes transform \cite{bauerschmidt2019local} and \cite{bauerschmidt2019local} established a local law of the kind required for our results. Moreover, there are known efficient algorithms for sampling random $d$-regular graphs \cite{kim2003generating, steger1999generating} along with implementations \cite{SciPyProceedings_11}. 
Let $\mu_{KM}^{(d)}$ be the Kesten-McKay law, the limiting spectral measure of $d$-regular graphs. We could find an explicit degree-6 polynomial for the Stieljtes transform of $\mu_{KM}^{(d)}\boxplus \mu_{SC}$ and compare to spectral histograms as above. Alternatively we can investigate agreement with $\mu_{KM}^{(d)}\boxplus \mu_{SC}$ indirectly by sampling and comparing spectra from say $Reg^{N,d} + UWig^N$ and also from  $Reg^{N,d} + GOE^N$.
The latter case will certainly yield the distribution $\mu_{KM}^{(d)}\boxplus \mu_{SC}$ since the GOE matrices are freely independent from the adjacency matrices.
Figure shows a q-q plot for samples from these two matrix distributions and demonstrates near-perfect agreement, thus showing that indeed the spectrum of $Reg^{N,d} + UWig^N$  is indeed described by $\mu_{KM}^{(d)}\boxplus \mu_{SC}$.
We reached the same conclusion when repeating the above experiment with $UWish^N + Reg^{N,d}$ and $Wish^n + Reg^{N,d}$.
\begin{figure}
    \centering
    \includegraphics[width=0.4\textwidth]{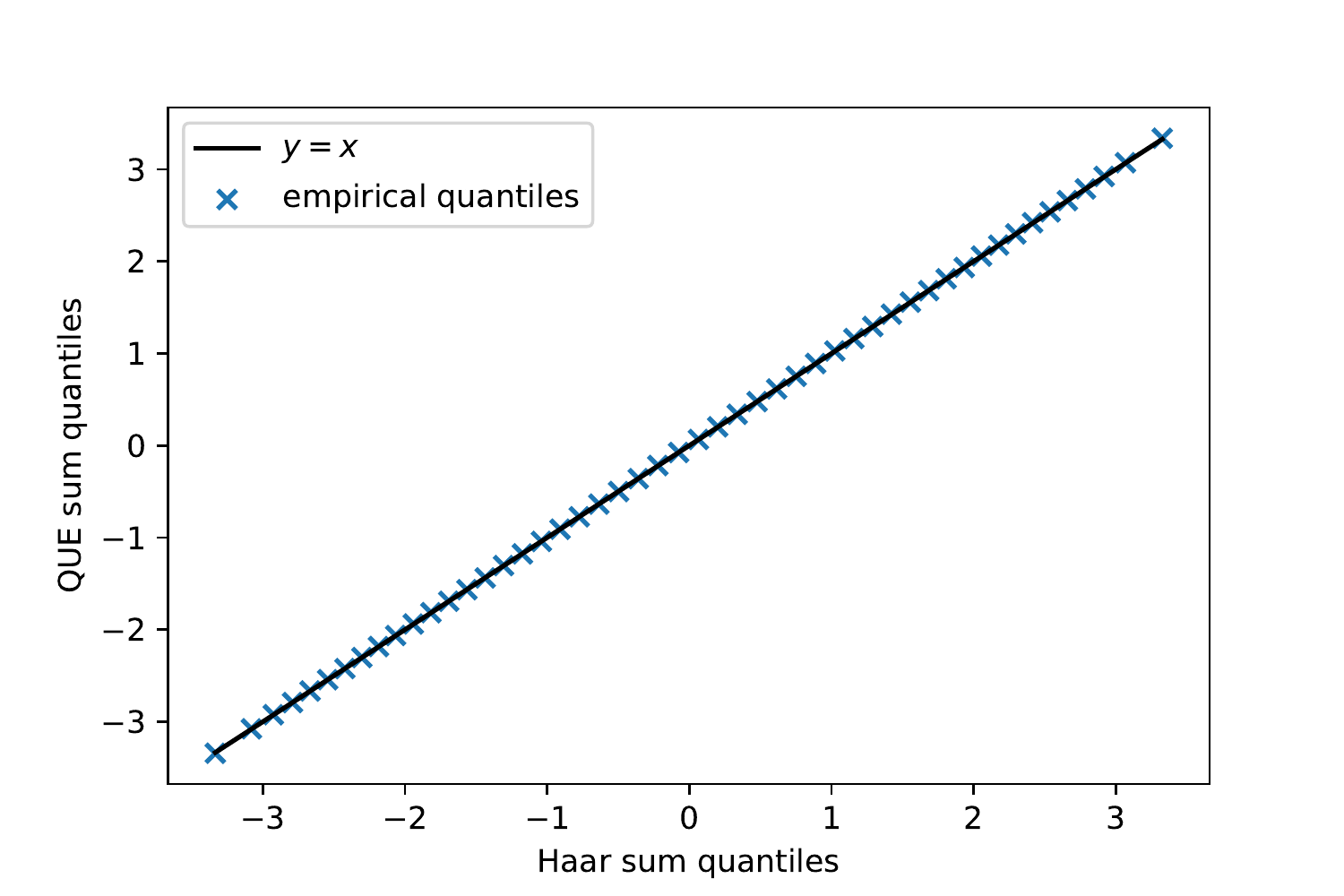}
    \caption{q-q plot comparing the spectrum of samples from $Reg^{N,d} + UWig^N$ ($y$-axis) to samples from $Reg^{N,d} + GOE^N$ ($x$-axis).}
    \label{fig:qq_plot}
\end{figure}
\section{Universal complexity of loss surfaces}\label{sec:determinants}

\subsection{Extension of a key result and prevalence of minima}
Let's recall Theorem 4.5 from \cite{arous2021exponential}. $H_N(u)$ is our random matrix ensemble with some parametrisation $u\in\R^m$ and its limiting spectral measure is $\mu_{\infty}(u)$.

Define 
\begin{align}
    \mathcal{G}_{-\epsilon} = \{u\in\R^m \mid \mu_{\infty}(u) \left( (-\infty, 0) \right) \leq \epsilon\}.
\end{align}
So $\mathcal{G}_{-\epsilon}$ is the event that $\mu_{\infty}(u)$ is close to being supported only on $(0, \infty)$. Let $l(u), r(u)$ be te left and right edges respectively of the support of $\mu_{\infty}(u)$.

\begin{thm}[\cite{arous2021exponential} Theorem 4.5]
    Fix some $\mathcal{D}\subset \R^m$ and suppose that $\mathcal{D}$ and the matrices $H_N(u)$ satisfy the following.
    \begin{itemize}
        \item For every $R>0$ and every $\epsilon>0$, we have 
        \begin{align}\label{eq:bl_condition}
            \lim_{N\rightarrow\infty} \frac{1}{N\log N}\log\left[\sup_{u\in B_R}\P\left(d_{BL}(\hat{\mu}_{H_N(u)}, \mu_{\infty}(u) \right) > \epsilon\right] = -\infty.
        \end{align}
    \item Several other assumptions detailed in \cite{arous2021exponential}.
    \end{itemize}
    
    Then for any $\alpha>0$ and any fixed $p\in\N$, we have
    \begin{align}
        \lim_{N\rightarrow\infty} \frac{1}{N}\log\int_{\mathcal{D}} e^{-(N+p)\alpha u^2}\E\left[|\det(H_N(u))|\1\{i(H_N(u)) = 0\}\right]du = \sup_{u\in\mathcal{D}\cap\mathcal{G}}\left\{\int_{\mathbb{R}} \log|\lambda| d\mu_{\infty}(u) (\lambda) - \alpha u^2\right\}.
    \end{align}
\end{thm}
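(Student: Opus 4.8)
The plan is to reduce the statement to a Laplace--Varadhan computation in the variable $u$, once the exponential rate of the inner expectation has been pinned down for each fixed $u$. Write $\Phi(u) := \int_{\R}\log|\lambda|\,d\mu_\infty(u)(\lambda) - \alpha u^2$, which I take to be finite and continuous in $u$ (this, and continuity of $u\mapsto\mu_\infty(u)$, are among the unlisted hypotheses of \cite{arous2021exponential}), and set $\mathcal{G} := \mathcal{G}_0 = \{u : l(u)\ge 0\} = \{u : \mu_\infty(u)((-\infty,0)) = 0\}$. The goal is $\frac1N\log\int_{\mathcal{D}} e^{-(N+p)\alpha u^2}\,\E\left[|\det H_N(u)|\1\{i(H_N(u))=0\}\right]du \to \sup_{u\in\mathcal{D}\cap\mathcal{G}}\Phi(u)$.

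\emph{Pointwise rate.} First I would establish that, for each fixed $u$,
\[
    \frac1N\log\E\left[|\det H_N(u)|\1\{i(H_N(u))=0\}\right] \longrightarrow
    \begin{cases}
        \int_{\R}\log|\lambda|\,d\mu_\infty(u)(\lambda), & u\in\mathcal{G},\\
        -\infty \ \text{(at speed }N\log N\text{)}, & u\notin\mathcal{G}.
    \end{cases}
\]
Writing $\frac1N\log|\det H_N(u)| = \int\log|\lambda|\,d\hat\mu_{H_N(u)}(\lambda)$, I split the expectation according to $\mathcal{A}_\epsilon(u) := \{d_{BL}(\hat\mu_{H_N(u)},\mu_\infty(u))\le\epsilon\}$; by \eqref{eq:bl_condition}, $\P(\mathcal{A}_\epsilon(u)^c)\le e^{-c_\epsilon N\log N}$. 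On $\mathcal{A}_\epsilon(u)$, if $u\in\mathcal{G}$ with $l(u)>0$ then for small $\epsilon$ every eigenvalue is positive, so $\1\{i=0\}=1$ and bounded-Lipschitz closeness --- together with a bound on $\E|\det H_N(u)|^{1+\kappa}$ and on the probability of an anomalously small eigenvalue (among the ``other assumptions''), which makes $\{\int\log|\lambda|\,d\hat\mu_{H_N(u)}(\lambda)\}_N$ uniformly integrable --- yields convergence to $\int\log|\lambda|\,d\mu_\infty(u)(\lambda)$; if instead $u\notin\mathcal{G}$ then $\mu_\infty(u)$ charges some $(-\infty,-\delta)$ with mass $\rho>0$, so on $\mathcal{A}_\epsilon(u)$ the index is at least $(\rho-\epsilon)N>0$ and $\1\{i=0\}=0$. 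On $\mathcal{A}_\epsilon(u)^c$, Cauchy--Schwarz and the crude bound $\E|\det H_N(u)|^2\le e^{CN(1+u^2)}$ give $\E[|\det H_N(u)|\1_{\mathcal{A}_\epsilon(u)^c}]\le e^{CN(1+u^2)/2}e^{-c_\epsilon N\log N/2} = e^{-\Omega(N\log N)}$, which is negligible precisely because \eqref{eq:bl_condition} operates at the super-exponential scale $N\log N$ rather than $N$.

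\emph{From pointwise to the integral.} These estimates are uniform for $u$ in a compact ball $B_R$ (the $\sup_{u\in B_R}$ in \eqref{eq:bl_condition}, uniform-in-$u$ moment bounds, continuity of $\Phi$): for every $\epsilon>0$ and $N$ large, $\frac1N\log\left(e^{-(N+p)\alpha u^2}\E[|\det H_N(u)|\1\{i=0\}]\right)\le\Phi(u)+\epsilon$ for all $u\in B_R$, with the reverse inequality on neighbourhoods of points of $\mathcal{G}$ where $l(u)>0$. The tail $\{|u|>R\}$ is killed by the weight $e^{-(N+p)\alpha u^2}$ against the polynomial-in-$u$ growth of $\E|\det H_N(u)|$, so $\int_{\mathcal{D}\setminus B_R}(\cdots)\,du\le e^{-cNR^2}$ for $R$ large. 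Laplace's method then gives the upper bound, $\frac1N\log\int_{\mathcal{D}\cap B_R}e^{N(\Phi(u)+\epsilon)}\,du\le\frac1N\log|\mathcal{D}\cap B_R| + \sup_{\mathcal{D}}\Phi+\epsilon$, together with the tail estimate and $\epsilon\downarrow 0$; and the lower bound by choosing $u^\star\in\mathcal{D}\cap\mathcal{G}$ with $l(u^\star)>0$ and $\Phi(u^\star)\ge\sup_{\mathcal{D}\cap\mathcal{G}}\Phi-\epsilon$ (using continuity of $\Phi$ to pass from general points of $\mathcal{G}$ to those with $l>0$), and a ball $B(u^\star,r)\subset\mathcal{D}$ on which $\Phi>\Phi(u^\star)-\epsilon$, giving $\int_{\mathcal{D}}(\cdots)\,du\ge\int_{B(u^\star,r)}(\cdots)\,du\ge r^m e^{N(\Phi(u^\star)-2\epsilon)}$.

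\emph{Main obstacle.} The crux is the logarithmic singularity of $\lambda\mapsto\log|\lambda|$ at $0$, invisible to the bounded-Lipschitz control in \eqref{eq:bl_condition}: turning $d_{BL}(\hat\mu_{H_N(u)},\mu_\infty(u))\le\epsilon$ into convergence of $\int\log|\lambda|\,d\hat\mu_{H_N(u)}(\lambda)$ genuinely needs the finer inputs of \cite{arous2021exponential} (no anomalously small eigenvalues; control of $\E|\det H_N(u)|^{1+\kappa}$). The truly delicate case is $l(u)=0$, where the index constraint $\1\{i=0\}$ switches off exactly at the left edge of the limiting support, and the lower bound there requires an edge/rigidity estimate showing $\P(i(H_N(u))=0)$ is not exponentially small, rather than soft concentration. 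A secondary point is that $\mathcal{D}$ must be regular near the maximisers (e.g. contained there in the closure of its interior) for the ball argument in the lower bound --- again part of the hypotheses inherited from \cite{arous2021exponential}.
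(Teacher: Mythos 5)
You should first note a structural point: the paper does not prove this statement at all — it is imported verbatim as Theorem 4.5 of \cite{arous2021exponential} and used as a black box — so there is no internal proof to compare yours against. The only glimpse the paper gives of the original argument is inside the proof of Corollary \ref{cor:general_k}, where it recalls that the upper bound in \cite{arous2021exponential} hinges on showing $\lim_{N\rightarrow\infty}\frac{1}{N}\log\int_{(\mathcal{G}_{-\epsilon})^c}e^{-N\alpha u^2}\,\E\left[|\det(H_N(u))|\1\{i(H_N(u))=0\}\right]du=-\infty$ for every $\epsilon>0$. Your treatment of $u\notin\mathcal{G}$ — bounded-Lipschitz closeness forces an index of order $N$, so only the BL-far event contributes, and its $e^{-c_\epsilon N\log N}$ probability beats the $e^{CN}$ size of the determinant via Cauchy--Schwarz — is exactly this mechanism (it is also the mechanism the paper reuses, with a Lipschitz test function $f_\epsilon$, to prove its Corollary). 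So your sketch is consistent with the cited proof strategy as far as this paper reveals it, and your identification of the hard points (the $\log|\lambda|$ singularity being invisible to $d_{BL}$, the edge case $l(u)=0$ in the lower bound, regularity of $\mathcal{D}$ near maximisers) correctly locates where the unlisted hypotheses of \cite{arous2021exponential} — least-singular-value control and determinant moment bounds — must do real work; deferring to them is legitimate since the statement itself only says ``several other assumptions.''

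One internal tension worth fixing: in the concentration step you use the crude bound $\E|\det H_N(u)|^2\le e^{CN(1+u^2)}$, but when killing the tail $\{|u|>R\}$ you claim only polynomial-in-$u$ growth of $\E|\det H_N(u)|$. If the moment bound genuinely grows like $e^{CNu^2}$ with $C$ possibly exceeding $\alpha$, the Gaussian weight $e^{-(N+p)\alpha u^2}$ does not close the tail estimate; the precise $u$-dependence of the determinant moments assumed in \cite{arous2021exponential} is what makes that step work, and your write-up should either quote it or at least use a single consistent growth hypothesis in both places.
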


We claim the following extension 
\begin{cor}\label{cor:general_k}
    Under the same assumptions as the above theorem and for any integer sequence $k(N) > 0$ such that $k/N \rightarrow 0$ as $N\rightarrow\infty$, we have
    \begin{align}
        \lim_{N\rightarrow\infty} \frac{1}{N}\log\int_{\mathcal{D}} e^{-(N+p)\alpha u^2}\E\left[|\det(H_N(u))|\1\{i(H_N(u)) \leq k\}\right]du = \sup_{u\in\mathcal{D}\cap\mathcal{G}}\left\{\int_{\mathbb{R}} \log|\lambda| d\mu_{\infty}(u) (\lambda) - \alpha u^2\right\}.
    \end{align}
\end{cor}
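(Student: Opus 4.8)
The lower bound is free: since $\1\{i(H_N(u))=0\}\le\1\{i(H_N(u))\le k\}$, the integral in the corollary dominates the one in \cite{arous2021exponential}, Theorem~4.5, so dividing by $N$, taking logarithms and $\liminf_{N\to\infty}$, and quoting that theorem gives ``$\ge$''. The substance is the matching upper bound, and the plan is to re-run the proof of \cite{arous2021exponential}, Theorem~4.5, checking that the relaxed constraint $\{i(H_N(u))\le k\}$ enters at exactly the points, and in exactly the manner, that $\{i(H_N(u))=0\}$ does. The reason to expect this to work is that the exact-zero constraint is used there only through the consequence that, on the spectral-concentration event, $\hat{\mu}_{H_N(u)}$ charges $(-\infty,0)$ with vanishing mass; the bound $i(H_N(u))\le k$ with $k/N\to0$ gives $\hat{\mu}_{H_N(u)}((-\infty,0))\le k/N\to0$, which is the same thing.

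\textbf{Step 1: reduce to a bounded set and discard the tail event.} I would first fix $R>0$ and split $\int_{\mathcal D}=\int_{\mathcal D\cap B_R}+\int_{\mathcal D\setminus B_R}$; the Gaussian weight $e^{-(N+p)\alpha u^2}$, combined with the moment bounds on $H_N(u)$ assumed in \cite{arous2021exponential}, makes the $\mathcal D\setminus B_R$ contribution of exponential rate at most $-\alpha R^2+O(1)\to-\infty$, so only $\int_{\mathcal D\cap B_R}$ matters. On $B_R$, split the inner expectation over the concentration event $\mathcal C_{N,\epsilon}(u)=\{d_{BL}(\hat{\mu}_{H_N(u)},\mu_{\infty}(u))\le\epsilon\}$ and its complement. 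On $\mathcal C_{N,\epsilon}(u)^c$, Cauchy--Schwarz gives
\begin{align*}
\E\big[|\det H_N(u)|\,\1_{\mathcal C_{N,\epsilon}(u)^c}\1\{i(H_N(u))\le k\}\big]\le\big(\E|\det H_N(u)|^2\big)^{1/2}\,\P\big(\mathcal C_{N,\epsilon}(u)^c\big)^{1/2},
\end{align*}
and hypothesis \eqref{eq:bl_condition} forces $\sup_{u\in B_R}\P(\mathcal C_{N,\epsilon}(u)^c)^{1/2}\le e^{-cN\log N}$ for every $c>0$ once $N$ is large, which defeats the $e^{O(N)}$ second moment of the determinant. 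This contribution is therefore super-exponentially negligible --- and this step never sees $k$.

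\textbf{Step 2: the concentration event.} On $\mathcal C_{N,\epsilon}(u)$ two inputs are used, exactly as in \cite{arous2021exponential} but with $k$ in place of $0$. First, $\frac{1}{N}\log|\det H_N(u)|=\int\log|\lambda|\,d\hat{\mu}_{H_N(u)}(\lambda)=\int\log|\lambda|\,d\mu_{\infty}(u)(\lambda)+o(1)$, the passage from $\hat\mu$ to $\mu_{\infty}$ for the non-Lipschitz integrand $\log|\lambda|$ being handled by the regularity-near-the-origin assumptions of \cite{arous2021exponential} --- independent of $k$. Second, the indicator $\1\{i(H_N(u))\le k\}$ localises $u$ near $\mathcal G$: for $\delta>0$ take a Lipschitz $f$ with $\1_{(-\infty,-2\delta)}\le f\le\1_{(-\infty,-\delta)}$ and $\|f\|_{Lip}\le\delta^{-1}$; then on $\mathcal C_{N,\epsilon}(u)\cap\{i(H_N(u))\le k\}$,
\begin{align*}
\mu_{\infty}(u)\big((-\infty,-2\delta)\big)\le\int f\,d\hat{\mu}_{H_N(u)}+(1+\delta^{-1})\epsilon\le\frac{k}{N}+(1+\delta^{-1})\epsilon .
\end{align*}
This confines $u$ to a set that collapses onto $\mathcal G$ as $N\to\infty$ and then $\epsilon\to0$ --- precisely the squeezing device of \cite{arous2021exponential}, the sole modification being the harmless extra term $k/N\to0$. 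Combining the two inputs, on $\mathcal D\cap B_R$ one has $\frac{1}{N}\log\big(e^{-(N+p)\alpha u^2}\E[|\det H_N(u)|\1\{i(H_N(u))\le k\}]\big)\le\big(\int\log|\lambda|\,d\mu_{\infty}(u)(\lambda)-\alpha u^2\big)+o(1)$ for $u$ in that collapsing set, and only the negligible $\mathcal C^c$ piece otherwise. A standard covering of $B_R$ by finitely many small balls (as in \cite{arous2021exponential}) then permits Laplace integration over $u$; passing to the limit in $N$, then $\epsilon$, then $R$, and using the same upper-semicontinuity of $\sup\{\cdots\}$ in the defining parameter of $\mathcal G$ as in \cite{arous2021exponential}, gives $\limsup_{N\to\infty}\frac{1}{N}\log(\cdots)\le\sup_{u\in\mathcal D\cap\mathcal G}\{\int\log|\lambda|\,d\mu_{\infty}(u)(\lambda)-\alpha u^2\}$, matching the lower bound. (Equivalently, one may write $\1\{i\le k\}=\sum_{j=0}^{k}\1\{i=j\}$, note that $\log(k+1)=o(N)$ since $k\le N$, and apply the argument to each $j$ uniformly in $0\le j\le k$.)

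\textbf{Expected main obstacle.} There is no new estimate to invent: the single thing to verify carefully is that every appearance of $\{i=0\}$ in the proof of \cite{arous2021exponential}, Theorem~4.5 is really an appearance of ``$\hat{\mu}_{H_N(u)}$ puts $o(1)$ mass on the negative axis,'' which $\{i\le k\}$ with $k/N\to0$ delivers equally. The work is bookkeeping: keeping the localisation of Step~2 uniform over $u\in B_R$, fixing a consistent order for the limits $N\to\infty$, $\epsilon\to0$, $R\to\infty$, and confirming that none of the ``several other assumptions'' of \cite{arous2021exponential} --- in particular those taming $\log|\lambda|$ at the origin and those underpinning the Laplace integration in $u$ --- is entangled with the index constraint. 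So the real task is a careful audit of the cited proof rather than a fresh argument.
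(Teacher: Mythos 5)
Your proposal is correct and follows essentially the same route as the paper: the lower bound is immediate from $\1\{i=0\}\le\1\{i\le k\}$, and the upper bound reduces to checking that the index constraint enters only through a Lipschitz test-function comparison between $\hat{\mu}_{H_N(u)}((-\infty,0))\le k/N$ and the negative mass of $\mu_{\infty}(u)$, so that the superexponential bounded-Lipschitz concentration (\ref{eq:bl_condition}) absorbs the harmless extra $k/N\to 0$ term. The paper merely packages this as the contrapositive — showing that on $(\mathcal{G}_{-\epsilon})^c\cap\{i(H_N(u))\le k\}$ one forces $d_{BL}(\hat{\mu}_{H_N(u)},\mu_{\infty}(u))\ge\eta$ and hence that the integral over $(\mathcal{G}_{-\epsilon})^c$ is superexponentially negligible, which is the single point where the original proof used $\{i=0\}$ — rather than re-auditing the whole argument as you describe.
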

\begin{proof}
    Firstly note that 
    \begin{align}
       &\frac{1}{N}\log\int_{\mathcal{D}} e^{-(N+p)\alpha u^2}\E\left[|\det(H_N(u))|\1\{i(H_N(u)) \leq k\}\right] du \notag\\
       \geq &\frac{1}{N}\log\int_{\mathcal{D}} e^{-(N+p)\alpha u^2}\E\left[|\det(H_N(u))|\1\{i(H_N(u)) = 0\}\right] du,
    \end{align}
    so it suffices to establish a complementary upper bound. The proof in of Theorem 4.5 in \cite{arous2021exponential} establishes an upper bound using 
    \begin{align}
        \lim_{N\rightarrow\infty} \frac{1}{N} \log \int_{\left(\mathcal{G}_{-\epsilon}\right)^c} e^{-N\alpha u^2}  \E\left[|\det(H_N(u)|\1\{i(H_N(u)) = 0\}\right] du = -\infty
    \end{align}
    which holds for all $\epsilon > 0$, so our proof is complete if we can prove the analogous result
    \begin{align}\label{eq:low_prob_analogue}
        \lim_{N\rightarrow\infty} \frac{1}{N} \log \int_{\left(\mathcal{G}_{-\epsilon}\right)^c} e^{-N\alpha u^2}  \E\left[|\det(H_N(u)|\1\{i(H_N(u)) \leq k\}\right] du = -\infty.
    \end{align}
    
    As in \cite{arous2021exponential}, let $f_{\epsilon}$ be some $\frac{1}{2}$-Lipschitz function satisfying $\frac{\epsilon}{2}\1_{x \leq -\epsilon} \leq f_{\epsilon}(x)\leq \frac{\epsilon}{2} \1_{x\leq 0}$. Suppose $u\in \left(\mathcal{G}_{-\epsilon}\right)^c$ and also $i(H_N(u)) \leq k$. Then we have \begin{align}
        0 \leq \int d\hat{\mu}_{H_N(u)}(x) ~ f_{\epsilon}(x) \leq \frac{k\epsilon}{2N}
    \end{align}
    and also
    \begin{align}
        \frac{\epsilon^2}{2} \leq \int d\mu_{\infty}(u)(x) ~ f_{\epsilon}(x) \leq \frac{\epsilon}{2}.
    \end{align}
    We have 
    \begin{align}
        d_{BL}(\hat{\mu}_{H_N(u)}, \mu_{\infty}(u) ) &\geq \left|\int d\hat{\mu}_{H_N(u)}(x) ~ f_{\epsilon}(x)  - 
    \int d\mu_{\infty}(u)(x) ~ f_{\epsilon}(x) \right|\notag\\
    &\geq \left|\left|\int d\hat{\mu}_{H_N(u)}(x) ~ f_{\epsilon}(x)\right|  - 
    \left|\int d\mu_{\infty}(u)(x) ~ f_{\epsilon}(x) \right|\right|,
    \end{align}
    so if we can choose 
    \begin{align}\label{eq:epsilon_eta}
        \frac{k\epsilon}{2N} \leq \frac{\epsilon^2}{2} - \eta
    \end{align}
    for some $\eta > 0$, then we obtain $d_{BL}(\hat{\mu}_{H_N(u)}, \mu_{\infty}(u) ) \geq \eta$. Then applying (\ref{eq:bl_condition}) yields the result (\ref{eq:low_prob_analogue}). (\ref{eq:epsilon_eta}) can be satisfied if
    \begin{align}\label{eq:epsilon_eta_explicit}
        \epsilon \geq \frac{k}{2N} + \frac{1}{2}\sqrtsign{\frac{k^2}{N^2} + 8\eta}.
    \end{align}
    So, given $\epsilon>0$, we can take $N$ large enough such that, say, $\frac{k(N)}{N} < \frac{\epsilon}{4}$. By taking $\eta < \frac{\epsilon^2}{128}$ we obtain
    \begin{align}
        \frac{k}{2N} + \frac{1}{2}\sqrtsign{\frac{k^2}{N^2} + 8\eta} < \frac{\epsilon}{4} + \frac{1}{\sqrtsign{2}}\max\left(\sqrtsign{8\eta}, \frac{\epsilon}{4}\right) < \frac{1 + 2^{-1/2}}{4}\epsilon
    \end{align}
    and so (\ref{eq:epsilon_eta_explicit}) is satisfied. Now finally (\ref{eq:bl_condition}) can be applied (with $\eta$ in place of $\epsilon$) and so we conclude (\ref{eq:low_prob_analogue}).
    
    \medskip
    Overall we see that the superexponential BL condition (\ref{eq:bl_condition}) is actually strong enough to deal with any $o(N)$ index not just index-0. This matches the GOE (or generally invariant ensemble) case, in which the terms with $\1\{i(H_N(u))=k\}$ are suppressed compared to the exact minima terms $\1\{i(H_N(u))=0\}$.
\end{proof}
\begin{rem}
Note that \ref{cor:general_k} establishes that, on the exponential scale, the number of critical points of any index $k(N) = o(N)$ is no more than the number of exact local minima.
\end{rem}

\subsection{The dichotomy of rough and smooth regions}
Recall the batch loss from Section \ref{subsec:hess_model}:
\begin{align}
   \frac{1}{b} \sum_{i=1}^{b} \mathcal{L}(f_{\vec{w}}(\vec{x}_i), y_i), ~~ (\vec{x}_i, y_i)\overset{\text{i.i.d.}}{\sim} \mathbb{P}_{data}.
\end{align}
As with the Hessian in Section \ref{subsec:hess_model}, we use the model $L\equiv L_{\text{batch}}(\vec{w}) = L_{\text{true}}(\vec{w}) + \sbf V(\vec{w})$, where $V$ is a random function $\mathbb{R}^N\rightarrow\mathbb{R}$.



Now let us define the complexity for sets $\B\subset \R^N$ \begin{align}
    C_N(\B) = |\{ \vec{w}\in \B \mid \nabla L(\vec{w}) = 0\}|.
\end{align}
This is simply the number of stationary points of the training loss in the region $\B$ of weight space. A Kac-Rice formula applied to $\nabla L$ gives 
\begin{align}\label{eq:w_b_dom_integral_kacrice}
    \E C_N = \int_{\B} d\vec{w} ~ \phi_{\vec{w}}(-\sbf^{-1}\nabla L_{\text{true}}) \E |\det (A + \sbf X)|
\end{align}
where $\phi_{\vec{w}}$ is the density of $\nabla V$ at $\vec{w}$.
A rigorous justification of this integral formula would, for example, have to satisfy the conditions of the results of \cite{adler2007random}.
This is likely to be extremely difficult in any generality, though is much simplified in the case of Gaussian $V$ (and $X$) - see \cite{adler2007random} Theorem 12.1.1 or \cite{baskerville2021loss} Theorem 4.4.
Hereafter, we shall take (\ref{eq:w_b_dom_integral_kacrice}) as assumed.
The next step is to make use of strong self-averaging of the random matrix determinants. 
Again, we are unable to establish this rigorously at present, but note that this property has been proved in some generality by \cite{arous2021exponential}, although we are unable to satisfy all the conditions of those results in any generality here.
Self-averaging and using the addition results above gives
\begin{align*}
    \frac{1}{N}\log\E |\det (A + \sbf X)| = \int d(\mu_b \boxplus \nu)(\lambda) \log |\lambda| + o(1)
\end{align*}
where $\mu_b, \nu$ depend in principle on $\vec{w}$.
We are concerned with $N^{-1}\log \E C_N$, and in particular its sign, which determines the complexity of the loss surface in $\B$: positive $\leftrightarrow$ exponentially many (in $N$) critical points, negative $\leftrightarrow$ exponentially few (i.e. none).
The natural next step is to apply the Laplace method with large parameter $N$ to determine the leading order term in $\E C_N$, however the integral is clearly not of the right form. Extra assumptions on $\phi_{\vec{w}}$ and $\nabla L_{\text{true}}$ could be introduced, e.g. that they can be expressed as functions of only a finite number of combinations of coordinates of $\vec{w}$.

\medskip
Suppose that $\phi_{\vec{w}}$ has its mode at $0$, for any $\vec{w}$, which is arguably a natural property, reflecting in a sense that the gradient noise has no preferred direction in $\R^N$.
The sharp spike at the origin in the spectral density of deep neural network Hessians suggests that generically
\begin{align}\label{eq:log_int_neg}
    \int d(\mu_b \boxplus \nu)(\lambda) \log |\lambda| < 0.
\end{align}
We claim it is reasonable to expect the gradient (and Hessian) variance to be increasing in $\|\vec{w}\|_2$.
Indeed, consider the general form of a multi-layer perceptron neural network:
\begin{align}
    f_{\vec{w}}(\vec{x}) = \sigma(\vec{b}^{(L)} + W^{(L)}\sigma(\vec{b}^{(L-1)} + W^{(L-1)}\ldots \sigma(\vec{b}^{(1)} + W^{(1)}\vec{x} )\ldots ))
\end{align}
where all of the weight matrices $W^{(l)}$ and bias vectors $\vec{b}^{(l)}$ combine to give the weight vector $\vec{w}$.
Viewing $\vec{x}$ as a random variable, making $f$ a random function of $\vec{w}$, we expect from the above that the variance in $f_{\vec{w}}$ is generally increasing in $\|\vec{w}\|_2$, and so therefore similarly with $L_{\text{batch}}$.

\medskip
Overall it follows that $\phi_{\vec{w}}(-\sbf^{-1}\nabla L_{\text{true}})$ is generally decreasing in $\|\nabla L_{\text{true}}\|$, but the maximum value at $\phi_{\vec{w}}(0)$ is decreasing in $\|\vec{w}\|_2$.
The picture is therefore that the loss surface is simple and without critical points in regions for which $\nabla L_{\text{true}}$ is far from $0$.
In neighbourhoods of $\nabla L_{\text{true}} = 0$, the loss surface may become complex, with exponentially many critical points, however if $\|\vec{w}\|_2$ is too large then the loss surface may still be without critical points.
In addition, the effect of larger batch size (and hence larger $\sbf^{-1}$) is to simplify the surface.
These considerations indicate that deep neural network loss surfaces are simplified by over-parametrisation, leading to the spike in the Hessian spectrum and thus (\ref{eq:log_int_neg}).
The simple fact that neural networks' construction leads gradient noise variance to increase with $\|\vec{w}\|_2$ has the effect of simplifying the loss landscape far from the origin of weight space, and even precluding the existence of any critical points of the batch loss.

\section{Implications for curvature from local laws}\label{sec:precond}
Consider a general stochastic gradient update rule with curvature-adjusted preconditioning:
\begin{align}\label{eq:precond_udpate}
    \vec{w}_{t+1} = \vec{w}_t - \alpha B_t^{-1} \nabla L(\vec{w}_t)
\end{align}
where recall that $L(\vec{w})$ is the batch loss, viewed as a random function on weight space.
$B_t$ is some preconditioning matrix which in practice would be chosen to somehow approximate the curvature of $L$.
Such methods are discussed at length in \cite{martens2016second} and also describe some of the most successful optimisation algorithms used in practice, such as Adam \cite{kingma2014adam}.
The most natural choice for $B_t$ is $B_t = \nabla^2 L(\vec{w}_t)$, namely the Hessian of the loss surface.
In practice, it is standard to include a damping parameter $\delta>0$ in $B_t$, avoid divergences when inverting.
Moreover, typically $B_t$ will be constructed to be some positive semi-definite approximation to the curvature such as the generalised Gauss Newton matrix \cite{martens2016second}, or the diagonal gradient variance form used in Adam \cite{kingma2014adam}.
Let us now suppose that $B_t = B_t(\delta) = \hat{H}_t + \delta$, where $\hat{H}_t$ is some chosen positive semi-definite curvature approximation and $\delta>0$.
We can now identify $B_t(\delta)^{-1}$ as in fact the Green's function of $\hat{H}_t$, i.e. \begin{align}
    B_t(\delta)^{-1} = -(-\delta - \hat{H}_t)^{-1} = -G_t(-\delta).
\end{align}
But $G_t$ is precisely the object used in the statement of a local law on for $\hat{H}_t$.
Note that $\nabla L(\vec{w}_t)$ is a random vector and however $\hat{H}_t$ is constructed, it will generally be a random matrix and dependent on $\nabla L(\vec{w}_t)$ in some manner that is far too complicated to handle analytically.
As we have discussed at length hitherto, we conjecture that a local law is reasonable assumption to make on random matrices arising in deep neural networks.
In particular \cite{baskerville2022appearance} demonstrated universal local random matrix theory statistics not just for Hessians of deep networks but also for Generalised Gauss-Newton matrices.
Our aim here is to demonstrate how a local law on $\hat{H}_t$ dramatically simplifies the statistics of (\ref{eq:precond_udpate}).
Note that some recent work \cite{wei2022more} has also made use of random matrix local laws to simplify the calculation of test loss for neural networks.

\medskip
A local law on $\hat{H}_t$ takes the precise form (for any $\xi, D>0$
\begin{align}\label{eq:precond_local_law}
    \sup_{\|\vec{u}\|,\|\vec{v}\|  = 1, z\in\vec{S}}\P\left( |\vec{u}^TG(z)\vec{v} - \vec{u}^T\Pi(z)\vec{v}| > N^{\xi}\left(\frac{1}{N\eta} + \sqrtsign{\frac{\Im g_{\mu}(z)}{N\eta}}\right)\right) \leq N^{-D}
\end{align}
where \begin{align}
    \vec{S} = \left\{E + i\eta \in \C \mid |E| \leq \omega^{-1}, ~ N^{-1 + \omega} \leq \eta \leq \omega^{-1}\right\}
\end{align}
$\mu$ is the limiting spectral measure of $\hat{H}_t$ and, crucially, $\Pi$ is a \emph{deterministic} matrix.
We will use the following standard notation to re-express (\ref{eq:precond_local_law})
\begin{align}\label{eq:local_law_inside}
    |\vec{u}^TG(z)\vec{v} - \vec{u}^T\Pi(z)\vec{v}| \prec \Psi_N(z), ~~~ \|\vec{u}\|,\|\vec{v}\|  = 1, z\in\vec{S},
\end{align}
where $\Psi_N(z) = \frac{1}{N\eta} + \sqrtsign{\frac{\Im g_{\mu}(z)}{N\eta}}$ and the probabilistic statement, valid for all $\xi, D>0$ is implicit in the symbol $\prec$.
In fact, we will need the local law outside the spectral support, i.e. at $z = x + i\eta$ where $x\in\mathbb{R}\backslash\text{supp}(\mu)$.
In that case $\Psi_N(z)$ is replaced by $\frac{1}{N(\eta + \kappa)}$ where $\kappa$ is the distance of $x$ from $\text{supp}(\mu)$ on the real axis, i.e. 
\begin{align}\label{eq:local_law_outside}
    |\vec{u}^TG(z)\vec{v} - \vec{u}^T\Pi(z)\vec{v}| \prec \frac{1}{N(\eta + \kappa)}, ~~~ \|\vec{u}\|,\|\vec{v}\|  = 1, ~ x\in\mathbb{R}\backslash\text{supp}(\mu).
\end{align}
For $\delta>0$ this becomes\begin{align}
      |\vec{u}^TG(-\delta)\vec{v} - \vec{u}^T\Pi(-\delta)\vec{v}| \prec \frac{1}{N\delta} \|\vec{u}\|_2 \|\vec{v}\|_2
\end{align}
for $\delta>0$ and now any $\vec{u}, \vec{v}$.
Applying this to (\ref{eq:precond_udpate}) gives \begin{align}
     |\vec{u}^TB_t^{-1}\nabla L(\vec{w}_t) - \vec{u}^T\Pi_t(-\delta)\nabla L(\vec{w}_t) | \prec \frac{1}{N\delta} \|\vec{u}\|_2 \|\nabla L(\vec{w}_t)\|_2.
\end{align}
Consider any $\vec{u}$ with $\|\vec{u}\|_2 = \alpha$, then we obtain  \begin{align}
     |\vec{u}^TB_t^{-1}\nabla L(\vec{w}_t) - \vec{u}^T\Pi_t(-\delta)\nabla L(\vec{w}_t) | \prec \frac{\alpha  \|\nabla L(\vec{w}_t)\|_2}{N\delta}.
\end{align}
Thus with high probability, for large $N$, we can replace (\ref{eq:precond_udpate}) by \begin{align}\label{eq:precond_update_det_equiv}
    \vec{w}_{t+1} = \vec{w}_t - \alpha \Pi_t(-\delta) \nabla L(\vec{w}_t)
\end{align}
incurring only a small error, provided that \begin{align}
    \delta >> \frac{\|\nabla L(\vec{w}_t)\|_2}{N} \alpha.
\end{align}
Note that the only random variable in (\ref{eq:precond_update_det_equiv}) is $\nabla L (\vec{w}_t)$.
If we now consider the case $\nabla L (\vec{w}_t) = \nabla \bar{L}(\vec{w}_t) + \vec{g}(\vec{w}_t)$ for deterministic $\bar{L}$, then \begin{align}\label{eq:precond_update_final}
    \vec{w}_{t+1} = \vec{w}_t - \alpha \Pi_t(-\delta) \nabla \bar{L}(\vec{w}_t) - \alpha \Pi_t(-\delta)\vec{g}(\vec{w}_t)
\end{align}
and so the noise in the parameter update is entirely determined by the gradient noise.
Moreover note the \emph{linear} dependence on $\vec{g}$ in (\ref{eq:precond_update_final}).
For example, a Gaussian model for $\vec{g}$ immediately yields a Gaussian form in (\ref{eq:precond_update_final}), and e.g. if $\E \vec{g} = 0$, then \begin{align}
    \E(\vec{w}_{t+1} - \vec{w}_t) = -\alpha \Pi_t(-\delta) \E \nabla L(\vec{w}_t).
\end{align}

\medskip
A common choice in practice for $\hat{H}$ is a diagonal matrix, e.g. the diagonal positive definite curvature approximation employed by Adam \cite{kingma2014adam}.
In such cases, $\hat{H}$ is best viewed as an approximation to the eigenvalues of some positive definite curvature approximation.
The next result establishes that a local law assumption on a general curvature approximation matrix can be expected to transfer to an analogous result on a diagonal matrix of ts eigenvalues.
\begin{prop}
    Suppose that $\hat{H}$ obeys a local law of the form (\ref{eq:local_law_outside}).
    Define the diagonal matrix $D$ such that $D_i \overset{d}{=} \lambda_i$ where $\{\lambda_i\}_i$ are the sorted eigenvalues of $\hat{H}$.
    Let $G_D(z) = (z - D)^{-1}$ be the resolvent of $D$.
    Let $\mathfrak{q}_j[\mu]$ be the $j$-th quantile of $\mu$, the limiting spectral density of $\hat{H}$, i.e.\begin{align}
        \int_{-\infty}^{\mathfrak{q}_j[\mu]} d\mu(\lambda) = \frac{j}{N}.
    \end{align}
    Then $D$ obeys the local law \begin{align}
        |(G_D)_{ij} - \delta_{ij}(z - \mathfrak{q}_j[\mu])^{-1}| \prec \frac{1}{N^{2/3} (\kappa + \eta)^2}, ~~ z = x + i\eta, ~x\in\mathbb{R}\backslash\text{supp}(\mu),
    \end{align}
    where $\kappa$ is the distance of $x$ from $\text{supp}(\mu)$.
    Naturally, we can redefine $D_i = \lambda_{\sigma{i}}$ for any permutation $\sigma\in S_N$ and the analogous statement replacing $\mathfrak{q}_j[\mu]$ with $\mathfrak{q}_{\sigma(j)}$ will hold.
\end{prop}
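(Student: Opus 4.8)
The plan is to reduce the assertion to eigenvalue rigidity for $\hat{H}$ — a standard consequence of the assumed local law — together with an elementary resolvent identity, so that the proof is short once the inputs are in place. First note that $D$ is diagonal (realised so that $D_j = \lambda_j$), hence $(G_D)_{ij} = \delta_{ij}(z-\lambda_i)^{-1}$; for $i\neq j$ both sides of the claimed inequality are identically $0$ and there is nothing to prove. The whole content lies in the diagonal entries, where
\begin{align*}
    (G_D)_{ii} - (z - \mathfrak{q}_i[\mu])^{-1} = \frac{\lambda_i - \mathfrak{q}_i[\mu]}{(z - \lambda_i)\,(z - \mathfrak{q}_i[\mu])},
\end{align*}
so it remains only to bound the numerator from above and the two factors of the denominator from below.

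For the denominator I would argue as follows. Writing $z = x + i\eta$ with $x$ at distance $\kappa$ from $\supp(\mu)$ and observing that (with the usual convention for quantiles on flat stretches of the distribution function) $\mathfrak{q}_i[\mu] \in \supp(\mu)$, one has the deterministic bound $|z - \mathfrak{q}_i[\mu]| \geq \max(\kappa,\eta) \geq \tfrac12(\kappa + \eta)$. For $|z - \lambda_i|$ I would invoke rigidity of the eigenvalues of $\hat{H}$: the local law assumed for $\hat{H}$ on the spectral domain $\vec{S}$ — of which (\ref{eq:local_law_outside}) is the exterior specialisation, and whose averaged form follows by summing the anisotropic law over the coordinate basis — yields, through the standard Helffer--Sj\"ostrand argument applied to the empirical counting function,
\begin{align*}
    |\lambda_i - \mathfrak{q}_i[\mu]| \prec N^{-2/3} \qquad \text{uniformly in } i,
\end{align*}
the worst case occurring at the spectral edges (in the bulk the fluctuations are of order $N^{-1+o(1)}$, which only improves the estimate). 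Hence on the overwhelmingly likely rigidity event, in the regime $\kappa + \eta \geq N^{-2/3+c}$ for a fixed $c>0$ — which covers the case of interest for the application, where $\kappa$ equals the fixed damping $\delta$ — a routine split on whether $\kappa$ or $\eta$ dominates gives $|z - \lambda_i| \geq \tfrac14(\kappa+\eta)$. Inserting the three estimates into the identity above yields
\begin{align*}
    |(G_D)_{ii} - (z - \mathfrak{q}_i[\mu])^{-1}| \prec \frac{N^{-2/3}}{(\kappa+\eta)^2},
\end{align*}
uniformly in $i$ and in $x \in \R\setminus\supp(\mu)$, which is the claim; the complementary regime $\kappa + \eta = o(N^{-2/3})$, in which $x$ sits inside the spectrum at the scale of the eigenvalue fluctuations, is irrelevant for us and in any case follows from the crude bounds $|(G_D)_{ii}|,\ |(z-\mathfrak{q}_i[\mu])^{-1}| \leq \eta^{-1}$.

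The version for an arbitrary permutation $\sigma\in S_N$ is immediate: relabelling the diagonal entries of $D$ relabels the resolvent entries and the quantiles in exactly the same way, so the bound is unchanged. The only genuinely substantive step — and the one I would expect to need the most care — is importing rigidity at scale $N^{-2/3}$: the purely exterior estimate (\ref{eq:local_law_outside}) does not by itself resolve the edge fluctuations, so one must work with the local law on the full domain $\vec{S}$ and the standard passage from it to rigidity, and one should check the edge behaviour of the smooth density $\mu$ (and, should $\mu$ carry an atom at $0$ or have a gap, handle the quantiles landing there by noting that the corresponding eigenvalues are pinned to a genuine edge or to the atom). Everything else is routine bookkeeping.
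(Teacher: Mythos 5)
Your proof is correct and follows essentially the same route as the paper's: deduce eigenvalue rigidity $|\lambda_j - \mathfrak{q}_j[\mu]| \prec N^{-2/3}$ from the local law, write the diagonal resolvent difference as $\frac{\lambda_j - \mathfrak{q}_j[\mu]}{(z-\lambda_j)(z-\mathfrak{q}_j[\mu])}$, and bound both denominator factors below by a multiple of $\kappa+\eta$ (deterministically for the quantile, via rigidity for the eigenvalue). If anything you are slightly more explicit than the paper about the two points it glosses over — that rigidity needs the local law on the full domain $\vec{S}$ rather than only the exterior estimate, and that the denominator bound for $|z-\lambda_j|$ really requires $\kappa+\eta$ above the rigidity scale $N^{-2/3+\epsilon}$ — so no substantive difference remains.
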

\begin{proof}
    As in \cite{erdos2017dynamical}, the local law (\ref{eq:local_law_inside}), (\ref{eq:local_law_outside}) is sufficient to obtain rigidity of the eigenvalues in the bulk, i.e.
    for any $\epsilon, D > 0$\begin{align}\label{eq:rigid}
        \P\left(\exists j ~\mid~ |\lambda_j - \mathfrak{q}_j[\mu]| \geq N^{\epsilon}\left[\min(j, N-j+1)\right]^{-1/3}N^{-2/3}\right) \leq N^{-D}.
    \end{align}
    Then we have \begin{align}
       \left|\frac{1}{z - \lambda_j} - \frac{1}{z - \mathfrak{q}_j[\mu]}\right|=\left| \frac{\lambda_j - \mathfrak{q}_j[\mu]}{(z - \lambda_j)(z - \mathfrak{q}_j[\mu])}\right|.
    \end{align}
    For $z=x+i\eta$ and $x$ at a distance $\kappa>0$ from $\text{supp}(\mu)$ \begin{align}
        |z-\mathfrak{q}_j[\mu]|^2 \geq \eta^2 + \kappa^2 \geq \frac{1}{2}(\eta + \kappa)^2,
    \end{align}
    and the same can be said for $|z - \mathfrak{q}_j[\mu]|^2$ with high probability, by applying the rigidity (\ref{eq:rigid}). A second application of rigidity to $|\lambda_j - \mathfrak{q}_j[\mu]|$ gives\begin{align}
         \left|\frac{1}{z - \lambda_j} - \frac{1}{z - \mathfrak{q}_j[\mu]}\right|  \prec \frac{1}{N^{2/3}\min(j, N-j+1)^{1/3} (\kappa + \eta)^2}
    \end{align}
    which yields the result.
\end{proof}

With this result in hand, we get the generic update rule akin to (\ref{eq:precond_update_final}), with high probability \begin{align}
    \vec{w}_{t+1} = \vec{w}_t -\alpha ~\text{diag}\left(\frac{1}{\pi_j+ \delta}\right) \nabla \bar{L}(\vec{w}_t) - \alpha~\text{diag}\left(\frac{1}{\pi_j + \delta}\right) \vec{g}(\vec{w}_t)
\end{align}
where we emphasise again that the $\pi_j$ are \emph{deterministic} and the only stochastic term is the gradient noise $\vec{g}(\vec{w}_t)$.

\medskip 
\paragraph{Implications for preconditioned stochastic gradient descent.} The key insight from this section is that generic random matrix theory effects present in preconditioning matrices of large neural networks can be expected to drastically simplify the optimisation dynamics due to high-probability concentration of the pre-conditioning matrices around deterministic equivalents, nullifying the statistical interaction between the pre-conditioning matrices and gradient noise.
Moreover, with this interpretation, the damping constant typically added to curvature estimate matrices is more than a simple numerical convenience: it is essential to yield the aforementioned concentration results.

\medskip
As an example of the kind of analysis that the above makes possible, consider the results of \cite{ia}.
The authors consider a Gaussian process model for the noise in the loss surface, resulting in tractable analysis for convergence of stochastic gradient descent in the presence of statistical dependence between gradient noise in different iterations.
Such a model implies a specific form of the loss surface Hessian and its statistical dependence on the gradient noise.
This situation is a generalisation of the spin glass model exploited in various works \cite{choromanska2015loss,baskerville2021loss,baskerville2022spin}, except that in those cases the Hessian can be shown to be independent of the gradients.
Absent the very special conditions that lead to independence, one expects the analysis to be intractable, hence why the authors in \cite{ia} restrict to SGD without preconditioning, or simply assume a high probability concentration on a deterministic equivalent.
To make this discussion more concrete, consider a model $L = L_{\text{true}} + V$ where $V$ is a Gaussian process with mean $0$ and covariance function
\begin{align}
    K(\vec{x}, \vec{x}') = k\left(\frac{1}{2}\|\vec{x} - \vec{x}'\|_2^2\right) q\left( \frac{1}{2}(\|\vec{x}\|_2^2 + \|\vec{x}'\|_2^2) \right),
\end{align}
where $k$ is some decreasing function and $q$ some increasing function.
The discussion at the end of the previous section suggests that the covariance function for loss noise should not be modelled as stationary, hence the inclusion of the $q$ term.
For convenience define $\Delta = \frac{1}{2}(\|\vec{x} - \vec{x}'\|_2^2)$ and $S = \frac{1}{2}(\|\vec{x}\|_2^2 + \|\vec{x}'\|_2^2)$.
Then it is a short exercise in differentiation to obtain \begin{align}
    \text{Cov}\left(\partial_i V(\vec{w}), \partial_j V(\vec{w})\right) &= \text{Cov}\left(\partial_i V(\vec{w}), \partial_j V(\vec{w}')\right)\Bigg|_{\vec{w}=\vec{w}'} \notag\\
    &= \frac{\partial^2}{\partial w_i\partial w_j'}K(\vec{w}, \vec{w}')\Bigg|_{\vec{w}=\vec{w}'} \notag\\
    &= -k'(0)q(\|\vec{w}\|_2)\delta_{ij} + k(0)q''(\|\vec{w}\|_2^2) w_iw_j.
\end{align}
and moreover \begin{align}
    \text{Cov}\left(\partial_{il} V(\vec{w}), \partial_j V(\vec{w})\right) &= \text{Cov}\left(\partial_{il} V(\vec{w}), \partial_j V(\vec{w}')\right)\Bigg|_{\vec{w}=\vec{w}'} \notag\\
    &= \frac{\partial^3}{\partial w_i\partial w_l\partial w_j'}K(\vec{w}, \vec{w}')\Bigg|_{\vec{w}=\vec{w}'} \notag\\
    &= -k'(0)q'(\|\vec{w}\|_2^2)w_l\delta_{ij} + q'''(\|\vec{w}\|_2^2)k(0) w_iw_lw_j' - k'(0)q'(\|\vec{x}\|_2)w_i\delta_{jl}.
\end{align}
Hence we see that the gradients of $L$ and its Hessian are statistically dependent by virtue of the non-stationary structure of $V$.
Putting aside issues of positive definite pre-conditioning matrices, and taking $\delta$ such that $(\nabla^2 L + \delta)^{-1}$ exists (almost surely) for large $N$, it is clear that the distribution of $(\nabla^2 L + \delta)^{-1}\partial V$ will be complicated and non-Gaussian.
This example concretely illustrates our point: even in almost the simplest case, where the gradient noise is Gaussian, the pre-conditioned gradients are generically considerably more complicated and non-Gaussian.
Moreover, centred Gaussian noise on gradient is transformed into generically non-centred noise by pre-conditioning.
Continuing the differentiation above, it is elementary to obtain the covariance structure of the Hessian $\nabla^2 V$, though the expressions are not instructive.
Crucially, however, the Hessian is Gaussian and the covariance of any of its entries is $\mathcal{O}(1)$ (in large $N$), so the conditions in Example 2.12 of \cite{erdHos2019random} apply to yield an optimal local law on the Hessian, which in turn yields the above high-probability concentration of $(\nabla^2 L + \delta)^{-1}$ provided that $\delta$ is large enough.


\section{Conclusion}\label{sec:concl}
In this paper we have considered several aspects of so-called universal random matrix theory behaviour in deep neural networks.
Motivated by prior experimental results, we have introduced a model for the Hessians of DNNs that is more general than any previously considered and, we argue, actually flexible enough to capture the Hessians observed in real-world DNNs.
Our model is built using random matrix theory assumptions that are more general than those previously considered and may be expected to hold in quite some generality.
By proving a new result for the addition of random matrices, using a novel combination of quantum unique ergodicity and the supersymmetric method, we have derived expressions for the spectral outliers of our model.
Using Lanczos approximation to the outliers of large, practical DNNs, we have compared our expressions for spectral outliers to data and demonstrated strong agreement for some DNNs.
As well as corroborating our model, this analysis presents indirect evidence of the presence of universal local random matrix statistics in DNNs, extending earlier experimental results.
Our analysis also highlights a possibly interesting distinction between some DNN architectures, as Resnet architectures appear to better agree with our theory than other architectures and Resnets have been previously observed to have better-behaved loss surfaces than many other architectures.

\medskip
We also presented quite general arguments regarding the number of local optima of DNN loss surfaces and how `rough' or `smooth' such surfaces are.
Our arguments build on a rich history of complexity calculations in the statistical physics and mathematics literature but, rather than performing detailed calculations in some specific, highly simplified toy model, we instead present general insights based on minimal assumptions.
Finally we highlight an important area where random matrix local laws, an essential aspect of universality, may very directly influence the performance of certain popular optimisation algorithms for DNNs.
Indeed, we explain how numerical damping, combined with random matrix local laws, can act to drastically simplify the training dynamics of large DNNs.

\medskip
Overall it is our hope that this paper demonstrates the relevance of random matrix theory to deep neural networks beyond highly simplified toy models.
Moreover, we have shown how quite general and universal properties of random matrices can be fruitfully employed to derive practical, observable properties of DNN spectra.
This work leaves several challenges for future research.
All of our work relies on either local laws for e.g. DNN Hessians, or on matrix determinant self-averaging results.
Despite the considerable progress towards establishing local laws for random matrices over the last decade or-so, it appears that establishing any such laws for, say, the Hessians of any DNNs is quite out of reach.
We expect that the first progress in this direction will come from considering DNNs with random i.i.d. weights and perhaps simple activation functions.
Based on the success of recent works on random DNNs \cite{pastur2020randomiid}, we conjecture that the Gram matrices of random DNN Jacobians may be the simplest place to establish a local law, adding to the nascent strand of \emph{nonlinear} random matrix theory \cite{pennington2017nonlinear,benigni2019eigenvalue,pastur2020randomiid}.
We also believe that there is more to be gained in further studies of forms of random matrix universality in DNNs.
For example, our ideas may lead to tractable analysis of popular optimisation algorithms such as Adam \cite{kingma2014adam} as the problem is essentially reduced to deriving a local law for the gradient pre-conditioning matrix and dealing with the gradient noise.

\section*{Acknowledgements}
The authors wish to thank Patrick Lopatto for noticing some technical errors and providing valuable insights into proof strategies and existing results for QUE.

\appendix

\section{Invariant equivalent ensembles}\label{sec:inv}

\cite{unterberger2019global} gives the following integro-differential equation relating the equilibrium measure $\mu$ to the potential $V$ of invariant ensembles: \begin{align}\label{eq:v_stieljtes}
    \frac{\beta}{2}\dashint\frac{1}{x-y}d\mu(y) = V'(x).
\end{align} 
So in the case of real symmetric matrices we have \begin{align}\label{eq:muinf_V_relation}
    \frac{1}{2} \bar{g_{\mu}}(x) = V'(x)
\end{align}
where $g_{\mu}$ is the Stieljtes transform of $\mu$ and the bar over $\bar{g_{\mu}}$ indicates that the principal value has been taken.


Given a sufficiently nice $\mu$ (\ref{eq:v_stieljtes}) defines $V$ up-to a constant of integration on $\supp(\mu)$, but $V$ is not determined on $\R\backslash\supp(\mu)$.

\begin{lem}\label{lemma:V_unique_mu}
    For compactly supported probability measure $\mu$ on $\R$ and real potential $V$, define \begin{align}
        S_V[\mu](y) = V(y) - \int d\mu(x) \log|y-x|.
    \end{align}Suppose $S_V[\mu](y)=c$, a constant, for all $y\in\supp(\mu)$ and $S_V[\mu](y) \geq c$ for all $y\in\mathbb{R}$. Then $\mu$ is a minimiser amongst all probability measures on $\R$ of the energy \begin{align}
        \mathcal{E}_V[\mu] = \int d\mu(x) V(x) - \iint_{x< y} d\mu(x)d\mu(y)\log|x-y|.
    \end{align}
\end{lem}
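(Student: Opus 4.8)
The plan is to run the classical convexity argument from logarithmic potential theory (the sufficiency half of the Frostman/Euler--Lagrange characterisation), taking a little care over the $\iint_{x<y}$ normalisation. Since any measure of finite logarithmic energy is atomless, $\iint_{x<y}\log|x-y|\,d\rho(x)\,d\rho(y)=\tfrac12\iint\log|x-y|\,d\rho(x)\,d\rho(y)$ for such $\rho$, so $\mathcal{E}_V[\rho]=\int V\,d\rho-\tfrac12\iint\log|x-y|\,d\rho(x)\,d\rho(y)$; write $I[\sigma]=-\iint\log|x-y|\,d\sigma(x)\,d\sigma(y)$ for the energy of a (signed) measure $\sigma$. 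Let $\nu$ be an arbitrary probability measure on $\R$. The cases $\mathcal{E}_V[\nu]=+\infty$ and $\mathcal{E}_V[\mu]=-\infty$ are vacuous, so one may assume both finite, which (after the standard reduction to compactly supported competitors) forces $I[\mu]$, $I[\nu]$ finite and the mutual energy $\iint\log|x-y|\,d\mu(x)\,d\nu(y)$ finite by polarisation.

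Next I would interpolate along $\mu_t=(1-t)\mu+t\nu$, $t\in[0,1]$, which is a probability measure, and set $\sigma=\nu-\mu$, of total mass $0$. Bilinearity and symmetry of $\log|x-y|$ give
\begin{align}
    \mathcal{E}_V[\mu_t]=\mathcal{E}_V[\mu]+t\left(\int V\,d\sigma-\iint\log|x-y|\,d\mu(x)\,d\sigma(y)\right)+\frac{t^2}{2}\,I[\sigma].
\end{align}
The linear coefficient is exactly $\int\big(V(y)-\int\log|x-y|\,d\mu(x)\big)\,d\sigma(y)=\int S_V[\mu]\,d\nu-\int S_V[\mu]\,d\mu$, and the quadratic coefficient $\tfrac12 I[\sigma]$ is nonnegative by the classical fact that the logarithmic energy of a compactly supported signed measure of total mass zero is $\geq 0$. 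Evaluating at $t=1$ yields the identity
\begin{align}
    \mathcal{E}_V[\nu]-\mathcal{E}_V[\mu]=\left(\int S_V[\mu]\,d\nu-\int S_V[\mu]\,d\mu\right)+\frac12\,I[\nu-\mu].
\end{align}

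Finally I would feed in the two hypotheses. Since $S_V[\mu]\equiv c$ on $\supp(\mu)$, one has $\int S_V[\mu]\,d\mu=c$; since $S_V[\mu]\geq c$ on all of $\R$ and $\nu$ is a probability measure, $\int S_V[\mu]\,d\nu\geq c$; hence the first bracket is $\geq 0$. The second term is $\geq 0$ as noted. Therefore $\mathcal{E}_V[\nu]\geq\mathcal{E}_V[\mu]$ for every probability measure $\nu$, which is the assertion. (Equivalently: $t\mapsto\mathcal{E}_V[\mu_t]$ is convex on $[0,1]$ with nonnegative right-derivative at $0$, hence nondecreasing.)

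The only genuine obstacle is the integrability book-keeping that legitimises the displayed identities: one must check that no term produces an $\infty-\infty$, and that $S_V[\mu]$ is $\nu$-integrable in the relevant one-sided sense. This is exactly where the global bound $S_V[\mu]\geq c$ does double duty — it makes the variational problem well posed and guarantees $\int S_V[\mu]\,d\nu\in(-\infty,+\infty]$, so that whenever some cross term degenerates to $-\infty$ one simply has $\mathcal{E}_V[\nu]=+\infty$ and the inequality holds trivially. No regularity or lower semicontinuity of $V$ is needed for this (sufficiency) direction beyond what the hypotheses already encode.
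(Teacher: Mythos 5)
Your argument is correct and reaches a stronger conclusion than the paper's own proof, by a genuinely different route. The paper argues \emph{locally}: it perturbs $\mu$ by adding and subtracting small indicator-density bumps, expands $\mathcal{E}_V$ to first order in the bump sizes, and identifies the first variation as $\sum_i a_i S_V[\mu](y_i)-\sum_i b_i S_V[\mu](z_i)\geq 0$, discarding the $\mathcal{O}(\epsilon^2)$ remainder; strictly speaking this establishes first-order stationarity/local minimality against perturbations with bounded densities, and the passage to global minimality over all probability measures is left implicit. You instead interpolate $\mu_t=(1-t)\mu+t\nu$ for an arbitrary competitor $\nu$ and use the exact decomposition $\mathcal{E}_V[\nu]-\mathcal{E}_V[\mu]=\int S_V[\mu]\,\rd(\nu-\mu)+\tfrac12 I[\nu-\mu]$, with the quadratic term nonnegative by the classical positivity of the logarithmic energy of a compactly supported signed measure of total mass zero; this yields the global statement of the lemma in one step, at the cost of invoking that positivity fact and the $\infty-\infty$ bookkeeping, which you address. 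One caveat is worth recording: your reduction to finite-energy (hence atomless) competitors presupposes the convention $\iint_{x<y}=\tfrac12\iint$, i.e.\ that an atom carries infinite self-energy. Under the paper's literal off-diagonal normalisation an atomic measure pays no self-energy and the conclusion can actually fail: for the radius-$2$ semicircle with $V(x)=x^2/4$ one has $S_V[\mu]\equiv\tfrac12$ on the support and $S_V[\mu]\geq\tfrac12$ off it, yet $\mathcal{E}_V[\delta_0]=V(0)=0<\tfrac38=\mathcal{E}_V[\mu]$. So the lemma should be read with the $\tfrac12\iint$ convention (or with competitors restricted to finite logarithmic energy), a point on which the paper's proof is equally silent; your write-up in effect adopts the correct convention, and with it the argument is sound.
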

 \begin{proof}
    Consider a probability measure that is close to $\mu$ in the sense of $W_1$ distance, say.
    For any such measure, one can find an arbitrarily close probability measure $\mu'$ of the form
    \begin{align}
        \mu' = \mu + \sum_{i=1}^r a_i\1_{[y_i - \delta_i, y_i + \delta_i]} - \sum_{i=1}^s b_i\1_{[z_i - \eta_i, z_i + \eta_i]}
    \end{align}
    where all $a_i, b_i>0$ and $\delta_i, \eta_i, a_i, b_i \leq \epsilon$ for some small $\epsilon>0$.
    To ensure that $\mu'$ is again a probability measure we must impose $\sum_ia_i = \sum_jb_j$.
    The strategy now is to expand $ \mathcal{E}_V[\mu'] $ about $\mu$ to first order in $\epsilon$, but first note the symmetrisation
    \begin{align}
         \iint_{x< y} d\mu(x)d\mu(y)\log|x-y| = \frac{1}{2} \iint_{x\neq y} d\mu(x)d\mu(y)\log|x-y|.
    \end{align}
    Then \begin{align}
        \mathcal{E}_V[\mu'] -  \mathcal{E}_V[\mu] &= \sum_{i=1}^r a_i V(y_i) - \sum_{i=1}^s b_i V(z_i) - \sum_{i=1}^ra_i \int d\mu(x)\log|x-y_i| + \sum_{i=1}^rb_i \int d\mu(x)\log|x-z_i| + \mathcal{O}(\epsilon^2) \notag\\
        &= \sum_{i=1}^r a_i S_V[\mu](y_i) - \sum_{i=1}^r b_i S_V[\mu](z_i) + \mathcal{O}(\epsilon^2).
    \end{align}
    Observe that if all $y_i, z_i\in \text{supp}(\mu)$ then $S_V[\mu](y_i) = S_V[\mu](y_i) = c $ and so $ \mathcal{E}_V[\mu'] = \mathcal{E}_V[\mu]$.
    Without loss of generality therefore, we take $y_i\notin \text{supp}(\mu)$ and $z_i\in \text{supp}(\mu)$, whence 
    \begin{align}
         \mathcal{E}_V[\mu'] -  \mathcal{E}_V[\mu] \geq c\sum_{i=1}^r a_i - c\sum_{i=1}^s b_i = 0.
    \end{align}
    
 \end{proof}
 
\begin{lem}
    Consider a probability measure $\mu$ on $\R$ with compact support, absolutely continuous with respect to the Lebesgue measure. Then there exists a potential $V:\R\rightarrow\R$ which yields a well-defined invariant distribution on real symmetric matrices for which the equilibrium measure is $\mu$.
\end{lem}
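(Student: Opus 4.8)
The plan is to construct $V$ explicitly from the logarithmic potential of $\mu$, to read off from Lemma~\ref{lemma:V_unique_mu} that $\mu$ minimises the associated energy, and then to invoke the standard variational description of the equilibrium measure of an invariant ensemble. Write $U^{\mu}(y) = \int \log|y-x|\,d\mu(x)$ for the logarithmic potential of $\mu$; since $\mu$ is compactly supported with bounded density (a mild strengthening consistent with the smooth-density assumption used elsewhere in the paper), $U^{\mu}$ is finite and continuous on $\R$, smooth off $\supp\mu$, and satisfies $(U^{\mu})' = \bar g_{\mu}$ on $\supp\mu$, which is the integro-differential relation of Section~\ref{sec:inv} (up to the normalisation convention for $V$). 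Let $[a,b]$ be the convex hull of $\supp\mu$ and pick a smooth $\psi:\R\to[0,\infty)$ with $\psi\equiv 0$ on $[a,b]$ and $\psi(x)=|x|$ for all large $|x|$. I then set
\begin{align}
    V(y) = U^{\mu}(y) + \psi(y).
\end{align}
This decomposition sidesteps any issue with a disconnected $\supp\mu$: the correction $\psi$ need only vanish on $\supp\mu$, and letting it vanish on the whole interval $[a,b]$ makes the effective potential constant there for free.

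First I would check that $V$ defines a genuine invariant ensemble, i.e.\ that $Z=\int e^{-N\Tr V(X)}\,dX<\infty$ for each $N$. Diagonalising and bounding the Vandermonde factor by $\prod_{i<j}|\lambda_i-\lambda_j|\le C_N\prod_i(1+|\lambda_i|)^{N-1}$ reduces this to $\int(1+|\lambda|)^{N-1}e^{-NV(\lambda)}\,d\lambda<\infty$, which holds because $V(\lambda)\sim|\lambda|$ grows faster than any multiple of $\log|\lambda|$. This same superlogarithmic growth is exactly the confinement hypothesis under which the energy $\mc E_V$ of Lemma~\ref{lemma:V_unique_mu} is bounded below with tight sublevel sets, and under which the empirical spectral measure of the ensemble obeys a large deviation principle whose rate function is minimised precisely at the minimiser of $\mc E_V$ (see e.g.\ \cite{anderson2010introduction}); hence the equilibrium measure of the ensemble is the minimiser of $\mc E_V$.

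It then remains to identify that minimiser with $\mu$. By construction $S_V[\mu](y)=V(y)-U^{\mu}(y)=\psi(y)$, which equals $0$ on $\supp\mu$ and satisfies $\psi(y)\ge 0$ on all of $\R$; so Lemma~\ref{lemma:V_unique_mu} applies with $c=0$ and $\mu$ minimises $\mc E_V$. Uniqueness of the minimiser follows because $-\iint\log|x-y|\,d\nu(x)\,d\nu(y)$ is a strictly positive quadratic form on finite-energy signed measures of total mass zero, so $\mc E_V$ is strictly convex on probability measures; hence the equilibrium measure of the ensemble defined by $V$ is exactly $\mu$, as required. (For $\mu$ to be a competitor at all one needs $\mc E_V[\mu]<\infty$, equivalently $\iint\log|x-y|\,d\mu(x)\,d\mu(y)>-\infty$; this finite-energy property is where the regularity of $\mu$ is genuinely used.) The only delicate point in the argument is the off-support behaviour of $V$: the relation of Section~\ref{sec:inv} pins $V$ down only on $\supp\mu$, and the extension must simultaneously keep $S_V[\mu]$ from dipping below its value on $\supp\mu$ — needed for Lemma~\ref{lemma:V_unique_mu} — and confine strongly enough at infinity for the ensemble and its equilibrium characterisation to make sense; the split $V=U^{\mu}+\psi$ with $\psi\ge 0$ and $\psi$ linear at infinity decouples and handles both, leaving only standard invariant-ensemble potential theory to cite.
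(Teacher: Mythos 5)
Your proof is correct and rests on the same key lemma as the paper's (Lemma \ref{lemma:V_unique_mu}), but your construction of $V$ is genuinely different and in some respects tighter. The paper obtains $V$ on $\supp(\mu)$ by integrating the relation (\ref{eq:muinf_V_relation}) and then glues on a quadratic $(y-y_0)^2+b$ outside the support, asserting that the two free constants suffice to make $V$ continuous while keeping $S_V[\mu]\ge c$; this gluing is delicate at the support edges, offers too few parameters if $\supp(\mu)$ is disconnected, and, taken literally with the factor $\tfrac{1}{2}$ in (\ref{eq:muinf_V_relation}), would not even make $S_V[\mu]$ constant on the support under the normalisation used in Lemma \ref{lemma:V_unique_mu}. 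Your global ansatz $V=U^{\mu}+\psi$ dissolves these issues at once: $S_V[\mu]=\psi$ vanishes identically on the convex hull of the support and is nonnegative everywhere by construction, with no matching conditions and no dependence on the topology of $\supp(\mu)$, and the normalisation is automatically the one for which the hypotheses of Lemma \ref{lemma:V_unique_mu} are the Euler--Lagrange conditions for the ensemble $Z^{-1}e^{-N\Tr V(X)}dX$. You also supply two steps the paper leaves implicit: the confinement and partition-function check (your linear tails suffice, versus the paper's Gaussian tails; anything superlogarithmic works), and the identification of the equilibrium measure with the \emph{unique} minimiser of $\mathcal{E}_V$ via the large deviation principle together with strict convexity of the logarithmic energy on zero-mass signed measures, whereas Lemma \ref{lemma:V_unique_mu} alone only yields that $\mu$ is \emph{a} minimiser. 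The one caveat, which you flag yourself, is that absolute continuity alone does not guarantee $U^{\mu}$ is finite and continuous (and hence that $V$ is well defined and bounded below), so you, like the paper with its ``sufficiently nice $\mu$'', implicitly invoke the bounded or smooth density assumption made elsewhere in the paper.
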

\begin{proof}
    (\ref{eq:muinf_V_relation}) can be integrated to obtain $V$ and the condition $S_V[\mu]=c$ (a constant) on $\supp(\mu)$ determines $V$ uniquely on $\supp(\mu)$. Next observe that, for $y\in\R\backslash\supp(\mu)$ there exists some constant $R>0$ such that $|x-y| \leq R + |y|$ and so $\log|x-y| \leq |y| + R$. Therefore \begin{equation}
        S_V[\mu](y) \geq V(y) - |y| - R.
    \end{equation}
    So the condition $S_V[\mu](y) \geq c$ for $y\in\R\backslash\supp(\muinf)$ can be satisfied by taking $V(y) = (y-y_0)^2 + b$ for some constants $y_0, b$, which we need not record. There is sufficient freedom in $b, y_0$ left to make $V$ continuous on $\R$. Since $V$ is constructed with Gaussian tails, it can certainly be used to define a legitimate invariant ensemble of real symmetric random matrices. By Lemma \ref{lemma:V_unique_mu}, the $V$ we have constructed has equilibrium measure $\mu$.
\end{proof}

\section{Experimental details}\label{sec:exp_details}
This section gives full details of the experimental set-up and analysis for the outlier experiments.

\subsection{Architectures and training of models.}
We use the GPU powered Lanczos quadrature algorithm \citep{gardner2018gpytorch, meurant2006lanczos}, with the Pearlmutter trick \citep{pearlmutter1994fast} for Hessian vector products, using the PyTorch \citep{paszke2017automatic} implementation of both Stochastic Lanczos Quadrature and the Pearlmutter. We then train a 16 Layer VGG CNN \citep{simonyan2014very} with $P=15291300$ parameters 
and the 28 Layer Wide Residual Network \citep{zagoruyko2016wide,he2016deep} architectures 
on the CIFAR-$100$ dataset (45,000 training samples and 5,000 validation samples) using SGD. We use the following  learning rate schedule:
\begin{equation}
	\label{eq:schedule}
	\alpha_t = 
	\begin{cases}
		\alpha_0, & \text{if}\ \frac{t}{T} \leq 0.5 \\
		\alpha_0[1 - \frac{(1 - r)(\frac{t}{T} - 0.5)}{0.4}] & \text{if } 0.5 < \frac{t}{T} \leq 0.9 \\
		\alpha_0r, & \text{otherwise.}
	\end{cases}
\end{equation}
We use a learning rate ratio $r=0.01$ and a total number of epochs budgeted $T=300$. We further use momentum set to $\rho=0.9$, a weight decay coefficient of $0.0005$ and data-augmentation on PyTorch \citep{paszke2017automatic}. 

\subsection{Implementation of constraints}\label{sec:impl_constraints}
As mentioned in the main text, one of the three weights of the linear model fit in the outlier analysis, $\beta$, is constrained to be positive, as it corresponds to a second cumulant, i.e. a variance, of a probability measure.
Recall that the linear model's parameters are solved exactly as functions of the unknown $\theta^{(i)}$, and these parameters are in turn optimised using gradient descent.
$\beta$ is unconstrained during the linear solve, but its value is determined by the $\theta^{(i)}$, so to impose the constraint $\beta>0$ we add to the mean squared error loss the term \begin{align}
    \beta = 1000 \max(0, -\beta)
\end{align}
which penalises negative $\beta$ values and is minimised at any non-negative value.
The factor $1000$ was roughly tuned by hand to give consistently positive values for $\beta$.

\medskip
There is also the constraint that $\theta^{(i)} > \theta^{(i+1)}>0$ for all $i$. This is imposed simply using a re-parametrisation.
We introduce unconstrained raw value $t^{(i)}$ taking values in $\R$ and define 
\begin{align*}
    \theta^{(i)} = \sum_{j=1}^{i} \log ( 1 + \exp(t^{(i)}) ),
\end{align*}
then the gradient descent optimisation is simply performed over the $t^{(i)}$.

\subsection{Fitting of outlier model}\label{sec:impl_fitting}
We optimise the mean squared error with respect to the raw parameters $t^{(i)}$ using 200 iterations of Adam \cite{kingma2014adam} with a learning rate of 0.2.
The learning rate was chosen heuristically by increasing in steps until training became unstable.
The number of iterations was chosen heuristically as being comfortably sufficient to obtain convergence of Adam.
The raw parameters $t^{(i)}$ were initialised by drawing independently from a standard Gaussian.
The $t^{(i)}$ were initialised and trained using the above method 20 times and the values with the lowest mean squared error were chosen.

\addcontentsline{toc}{section}{References}
\bibliographystyle{alpha}
\bibliography{references}

\end{document}